\definecolor{DarkGreen}{rgb}{0.1,0.5,0.1}
\definecolor{DarkRed}{rgb}{0.5,0.1,0.1}
\definecolor{DarkBlue}{rgb}{0.1,0.1,0.5}
\newcommand{\cA}{\ensuremath{\mathcal{A}}}
\newcommand{\cX}{\ensuremath{\mathcal{X}}}
\newcommand{\R}{{\mathbb R}}
\newcommand{\RR}{{\mathbb R}}
\newcommand{\gN}{\mathcal{N}} 
\newcommand{\ones}{\mathbf{1}}
\newcommand{\bX}{\mathbf{X}}
\newcommand{\bB}{\mathbf{B}}
\newcommand{\bA}{\mathbf{A}}
\newcommand{\bH}{\mathbf{H}}
\newcommand{\bY}{\mathbf{Y}}
\newcommand{\bZ}{\mathbf{Z}}
\newcommand{\bS}{\mathbf{S}}
\newcommand{\bU}{\mathbf{U}}
\newcommand{\bV}{\mathbf{V}}
\newcommand{\bF}{\mathbf{F}}
\newcommand{\bM}{\mathbf{M}}
\newcommand{\bQ}{\mathbf{Q}}
\newcommand{\bW}{\mathbf{W}}
\newcommand{\bI}{\mathbf{I}}
\newcommand{\bL}{\mathbf{L}}
\newcommand{\bR}{\mathbf{R}}
\newcommand{\bD}{\mathbf{D}}
\newcommand{\bv}{\mathbf{v}}
\newcommand{\bu}{\mathbf{u}}
\newcommand{\bh}{\mathbf{h}}
\newcommand{\bg}{\mathbf{g}}
\newcommand{\ba}{\mathbf{a}}
\newcommand{\bb}{\mathbf{b}}
\newcommand{\bw}{\mathbf{w}}
\newcommand{\bx}{\mathbf{x}}
\newcommand{\be}{\mathbf{e}}
\newcommand{\bl}{\ensuremath{\boldsymbol\ell}}
\newcommand{\br}{\mathbf{r}}
\newcommand{\bz}{\mathbf{z}}
\newcommand{\bxi}{\boldsymbol{\xi}}
\newcommand{\PR}[1]{{\mathbb{P}}\left\{ #1\right\}}
\newcommand{\E}[1]{\mathbb{E}\left[ #1\right]}
\newcommand{\EE}{\mathbb{E}}
\newcommand{\norm}[1]{\left\|#1\right\|}
\newcommand{\twonorm}[1]{\left\|#1\right\|_2}
\newcommand{\onenorm}[1]{\left\|#1\right\|_1}
\newcommand{\infnorm}[1]{\left\|#1\right\|_\infty}
\newcommand{\opnorm}[1]{\left\|#1\right\|}
\newcommand{\nucnorm}[1]{\left\|#1\right\|_*}
\newcommand{\decnorm}[2]{\left\|#1\right\|_{#2}}
\newcommand{\fronorm}[1]{\decnorm{#1}{F}}
\newcommand{\maxnorm}[1]{\decnorm{#1}{\max}}
\newcommand{\inabs}[1]{\left|#1\right|}
\newcommand{\KL}[2]{D_{KL}\left(#1\, \|\, #2 \right)}
\newcommand{\ip}[2]{\ensuremath{\left\langle #1,#2\right\rangle}}
\newcommand{\inset}[1]{\left\{#1\right\}}
\newcommand{\inparen}[1]{\left(#1\right)}
\newcommand{\suchthat}{\,:\,}
\newcommand{\had}{\circ}
\newcommand{\argmin}{\mathrm{argmin}}
\newcommand{\rank}{\ensuremath{\operatorname{rank}}}
\newcommand{\tr}{\mathrm{tr}}
\newcommand{\ind}[1]{\ensuremath{\mathbf{1}_{#1}}}
\newcommand{\eps}{\varepsilon}
\renewcommand{\epsilon}{\varepsilon}
\newtheorem{theorem}{Theorem} 
\newtheorem{lemma}[theorem]{Lemma}
\newtheorem{corollary}[theorem]{Corollary} 
\newtheorem{remark}[theorem]{Remark}
\newtheorem{claim}[theorem]{Claim}
\newtheorem{question}[theorem]{Question}
\newtheorem{example}{Example}
\title{
Weighted matrix completion from non-random, non-uniform sampling patterns}
\author{Simon Foucart, Deanna Needell, Reese Pathak, Yaniv Plan, Mary Wootters}
\begin{document}
\maketitle
\begin{abstract}
We study the matrix completion problem when the observation pattern is deterministic and possibly non-uniform. We propose a simple and efficient debiased projection scheme for recovery from noisy observations and analyze the error under a suitable weighted metric. We introduce a simple function of the weight matrix and the sampling pattern that governs the accuracy of the recovered matrix. We derive theoretical guarantees that upper bound the recovery error and nearly matching lower bounds that showcase optimality in several regimes. Our numerical experiments demonstrate the computational efficiency and accuracy of our approach, and show that debiasing is essential when using non-uniform sampling patterns.
\end{abstract}

\section{Introduction}\label{sec:intro}
The \textit{matrix completion problem} is to determine a complete $d_1 \times d_2$ matrix from a subset $\Omega \subset [d_1] \times [d_2]$ of its entries.  A typical assumption that makes such a problem well-posed is that the underlying matrix from which the entries are observed is low-rank (or approximately low-rank).  Matrix completion has many applications, including collaborative filtering \cite{goldberg1992using}, system identification \cite{liu2009interior}, sensor localization \cite{biswas2006semidefinite,singer2009remark,singer2010uniqueness}, rank aggregation \cite{gleich2011rank}, scene recovery in imaging \cite{chen2004recovering,tomasi1992shape}, multi-class learning \cite{abernethy2006low,amit2007uncovering,argywho}, and more.
  This is now a well-studied problem, and there are several main approaches to its solution, such as low-rank projection \cite{keshavan2010matrix, keshavan2010noisy} and convex optimization \cite{srebro2004maximum, candes2010matrix}, which have rigorous provable recovery guarantees (see e.g. \cite{gross2011recovering, candes2009exact, candes2010power, keshavan2010matrix, keshavan2010noisy, negahban2010restricted, koltchinskii2011nuclear, candes2010matrix, recht2011simpler, rohde2011estimation, klopp2011rank, gaiffas2010sharp, klopp2011high, koltchinskii2012neumann}).  
  
Besides a low-rank assumption on the underlying matrix, one also clearly needs to assume something on the sampling pattern $\Omega$.  Theoretical guarantees for matrix completion typically enforce that the sampling pattern is obtained from  (most often uniform) random sampling~\cite{meka2009matrix, srebro2010collaborative, klopp2014noisy,bhojanapalli2014tighter,eftekhari2016mc,eftekhari2016weighted}.  
However, for many applications, the sampling pattern may not be uniformly random, and indeed may not be reasonably modeled as random at all.

In this paper, we study the 
problem of matrix completion with \em deterministic \em sampling.  This version of the problem has been studied before~\cite{lee2013matrix,heiman2014deterministic,Bhojanapalli2014, li2016recovery,debiasFNPW17}, although much less extensively than the case with random sampling. 
Our goal in this paper is to address the following question:
\begin{question}\label{q:intro} Given a deterministic sampling pattern $\Omega$ and corresponding (possibly noisy) observations from the matrix, what type of recovery error can we expect, in what metric, and how may we efficiently implement this recovery? \end{question} 

We formalize the question above and propose an appropriate \textit{weighted} error metric for matrix recovery that will take the form $\|\bH\had (\hat{\bM} - \bM)\|_F$, where $\bM$ is the true underlying low-rank matrix, $\hat{\bM}$ the recovered matrix, and $\bH$ an appropriate weighting matrix.   Most prior work that assumes uniform sampling of matrix entries considers an unweighted error metric.  This corresponds with taking $H_{i,j} = 1$, thus $H$ is a rank-1 matrix.  We show that matrix completion theory can be extended to many non-uniform, non-random cases by allowing $H$ to be an (almost) arbitrary rank-1 matrix appropriately approximating the sampling pattern.  When $\bH$ satisfies certain conditions with respect to the sampling pattern $\Omega$, which can be easily verified, we show that a simple ``debiased projection'' algorithm performs well.  Moreover, this algorithm is extremely efficient.  
We also establish lower bounds that show that our debiased projection algorithm is nearly optimal in several situations.  Finally, we include numerical results that demonstrate the efficacy and advantages of our approach. 

\subsection{Background and motivation}
Given a sampling pattern $\Omega$, we write $\ind{\Omega}$ to denote the matrix whose entries are $1$ on $\Omega$ and zero elsewhere, so that the entries of $\bM_{\Omega} = \ind{\Omega}\had\bM$ are equal to those of $\bM$ on $\Omega$, and are equal to $0$ on $\Omega^c$.  Above, $\had$ denotes the (entrywise) Hadamard product. 

Existing work \cite{{heiman2014deterministic,Bhojanapalli2014,li2016recovery}} shows that when $\ind{\Omega} \in \{0,1\}^{d_1 \times d_2}$ is ``close'' to an appropriately scaled version of the all-ones matrix (more precisely, when $\ind{\Omega}$ is the adjacency matrix of an expander graph) and the matrix $\bM$ is sufficiently incoherent, 
it is possible to efficiently recover an estimate $\hat{\bM}$ so that $\|\hat{\bM} - \bM\|_F$ is small.

Of course, there are some sampling patterns so that we cannot hope to recover $\hat{\bM}$ with small error $\|\hat{\bM} - \bM\|_F$: as a simple (and extreme) example consider sampling the entire \em left \em half of a matrix.  We can exactly recover the left half of $\bM$, but we learn nothing about the right half of $\bM$. 

Thus, our goal will be to find a weight matrix $\bH$ so that we can recover $\hat{\bM}$ so that 
$\fronorm{ \bH \had ( \hat{ \bM } - \bM ) }$ is small.  In the first example where $\ind{\Omega}$ is close to the all-ones matrix, we would choose $\bH$ to be the all-ones matrix; in the second example, we would choose $\bH$ to be the matrix with ones on the left half and zeros on the right half.
More precisely, we begin with the following question, which refines Question~\ref{q:intro}.
\begin{question}\label{q:almost}
Given a sampling pattern $\Omega$, and noisy observations $\bM_\Omega + \bZ_\Omega$, for what weight matrices $\bH$ 
can we efficiently find a matrix $\hat{\bM}$ so that 
$\fronorm{ \bH \had ( \hat{ \bM } - \bM ) }$ is small compared to $\fronorm{\bH}$?  
\end{question}

This is a weighted version of mean squared error.  Indeed, if $H_{i,j} = 1$ for all $i,j$, then the ratio of $\fronorm{ \bH \had ( \hat{ \bM } - \bM ) }^2$ to $\fronorm{\bH}^2$ is precisely the mean squared error per entry.  The notion of using weights in measuring error in this context is not new. For example, in \cite{heiman2014deterministic} the authors use graph sparsifiers to show that given some weight matrix $\bH$, a sampling pattern $\Omega$ can be efficiently found such that  $\fronorm{ \bH \had ( \hat{\bM} - \bM ) }$ is small relative to $\|\bH\|_F$. 
In \cite{lee2013matrix}, the authors give a method for identifying the best weight matrix for this guarantee by observing a real matrix that is close to a given probability distribution on the observed entries. 
However, while these results are extremely general, they have a few drawbacks.  First, these approaches do not tolerate noise in the observed entries.  Second, the algorithms presented in these works may not be efficient: the guarantees are given for the solutions to optimization problems (of the form ``minimize some matrix quantity subject to agreeing with the observed entries'') which may not be easy to solve.  Finally, these works do not prove any lower bounds to show that their guarantees are optimal.  (We discuss these and other related works in more detail in Section~\ref{sec:related}).

In this work, we sacrifice some of the generality of the results discussed above, but we are able to overcome these drawbacks.  More precisely, we will see that if we add the additional restriction that the weight matrix 
 $\bH$ is rank-1, we can answer Question~\ref{q:almost} (in the presence of noise, with efficient algorithms, and with lower bounds that are nearly tight in some cases).  Thus, the question that we address in this work is the following: 

\begin{question}\label{q:main}
Given a sampling pattern $\Omega$, and noisy observations $\bM_\Omega + \bZ_\Omega$, for what \em rank-one \em weight matrices $\bH$ 
can we efficiently find a matrix $\hat{\bM}$ so that 
$\fronorm{ \bH \had ( \hat{ \bM } - \bM ) }$ is small compared to $\fronorm{\bH}$?  And how can we efficiently find such weight matrices $\bH$, or certify that a fixed $\bH$ has this property?
\end{question}

\subsection{Our results}
The goal of this paper is to answer Question~\ref{q:main}.
Our method is a simple weighted (which we refer to as ``debiased") projection algorithm that performs well precisely when the quantity $\lambda = \opnorm{ \bH - \bH^{(-1)} \had \ind{\Omega} }$ is small; here $\bH^{(-1)}$ denotes the Hadamard (entry-wise) inverse.  We note that this parameter $\lambda$ is efficient to compute.
  In addition, we derive lower bounds that show that our algorithm is nearly optimal in several situations.  

To illustrate our upper and lower bounds, we consider two extreme cases, depending on the magnitude of $\lambda$:
\begin{itemize}
\item
First, we consider settings where $\lambda$ is small; as an example, let $(i,j) \in \Omega$ with probability\footnote{Even though for this case study we imagine drawing $\Omega$ at random, our results are still uniform: that is, we draw $\Omega$ once and fix it, and then prove that our algorithm works deterministically for all $\bM$.} $H_{ij}^2$.
In this case, we show that when the error matrix $\bZ$ has standard deviation $\sigma$ of roughly the same order of magnitude as the entries of $\bM$, then the guarantees of our algorithm are nearly tight. 

As an application of this setting, we show how our results can also be used to recover results that are qualitatively similar to the results of~\cite{CBSW15}, 
which show how to do matrix completion for incoherent matrices by sampling proportional to leverage scores.  Unlike that work, our results hold more generally for approximately-low-rank matrices, although as discussed more below the two works are incomparable.
\item
Second, we consider settings where $\lambda$ is large; we focus on symmetric sampling patterns $\Omega$ for which $\ind{\Omega}$ has a large gap between the largest and second-largest eigenvalues.  While we are not able to get the ``correct'' dependence on our parameter $\lambda$ in general, we are able to show that the error must necessarily increase as $\lambda$ increases, which suggest that our approach is qualitatively correct.
\end{itemize}

Finally, we present empirical results using both real and synthetic sampling patterns that show that debiasing is essential when the sampling pattern is non-uniform. We showcase reduction in recovery errors compared with standard (non debiased) approaches as we vary the sampling pattern constructed via regular graphs.

\subsection{Organization}
The rest of the paper is organized as follows. 
In Section \ref{sec:prelims} we instantiate notation, formalize the problem setup, and record several useful results that our theory will utilize. We establish our main results in Section \ref{sec:results} and relate our work to existing work in Section \ref{sec:related}. Sections \ref{sec:upper} and \ref{sec:lower} give generalized upper and lower bounds, respectively, which are specialized to important settings in Sections \ref{sec:nogap} and \ref{sec:gap}. We display numerical results in Section \ref{sec:experiments}.

\section{Set-up and Preliminaries}\label{sec:prelims}
In this section we set notation and our formal problem statement. 
\subsection{Notation}
We begin by setting notation.
For an integer $d$, we use $[d]$ to mean the set $\{1, \ldots, d\}$.  
Throughout the paper, bold capital letters ($\bX$) represent matrices, and bold lowercase letters ($\bx$) represent vectors.  Entries of a matrix $\bX$ or a vector $\bx$ are denoted by $X_{i,j}$ or $x_i$ respectively.  
For a set $\Omega \subseteq [d_1] \times [d_2]$ and a matrix $\bX \in \RR^{d_1 \times d_2}$, we use $\bX_\Omega$ to denote the matrix which is equal to $\bX$ on entries in $\Omega$ and $0$ otherwise.

We use $\infnorm{ \bX } = \max_{i,j} |X_{i,j}|$ to denote the entry-wise $\ell_\infty$ norm for matrices, and $\fronorm{\bX} = \sqrt{ \sum_{i,j} X_{i,j}^2 }$ to denote the Frobenius norm.   We define the \textit{max-norm} by
\[\maxnorm{\bX} := \min_{\bX=\bU\bV^\top}\|\bU\|_{2,\infty}\|\bV\|_{2,\infty}.\]
Letting $B_{\max}$ be the max-norm unit ball, 
$B_{\max} := \{\bX \in \R^{d_1 \times d_2} : \maxnorm{\bX} \leq 1\}$, 
 Grothendieck's inequality \cite[Chapter 10]{jameson1987summing} shows that $B_{\max}$ is close to a polytope with vertices of flat $\rank$-1 matrices.  Concretely, letting $\mathcal{F}$ be the set of flat, $\rank$-1, matrices:
$\mathcal{F} := \{u v^T : u \in \{+1, -1\}^{d_1}, \, v \in \{+1, -1\}^{d_2} \},$
 Grothendieck's inequality states that $
{\rm conv}(\mathcal{F}) \subset B_{\max} \subset K_G \cdot {\rm conv}(\mathcal{F}),
$
where $K_G \leq 1.783$ is Grothendieck's constant.
Given a $\rank$-$r$ matrix $\bM$ with $\infnorm{\bM} \leq \gamma$, we have that 
$\maxnorm{\bM} \leq \sqrt{r} \gamma$ (see \cite[Corollary 2.2]{rashtchian2016bounded}).
In this sense, the max norm serves as a proxy for the rank of a flat matrix that is robust to small perturbations. 

We use $K_r$ to denote the cone of rank-$r$ matrices.
We use $B_\infty$, $B_{\max}$, $B_F$ respectively to denote the unit balls for the corresponding norm.

\subsection{Formal set-up}
We now describe our formal set-up.  
Suppose that $\bM \in \RR^{d_1 \times d_2}$ is a unknown matrix.
In this paper, we will assume either that $\bM$ is in $K_r \cap \beta B_\infty$---that is, $\bM$ is flat and has low rank---or that $\bM$ is in $\beta \sqrt{r} B_{\max}$---that is, $\bM$ is ``approximately'' low-rank.

Fix a sampling pattern $\Omega \subseteq [d_1] \times [d_2]$ and a rank-$1$ matrix $\bW$.\footnote{In the introduction we discussed a rank-$1$ weight matrix $\bH$ which plays the role of $\bW^{(1/2)}$.  Stating the results that way is easier to parse in an introduction, but it will be more convenient for the proofs to state the formal problem in terms of $\bW$.}  Our goal will be to design an algorithm that gives provable guarantees for a worst-case $\bM$, even if it is adapted to $\Omega$.
Our algorithm will observe $\bM_\Omega + \bZ_\Omega$, where $Z_{i,j} \sim \gN(0,\sigma^2)$ are i.i.d. Gaussian random variables.  From these observations, the goal is to learn something about $\bM$.  
Notice that, depending on $\Omega$, it might not be possible to estimate $\bM$ well in a standard metric (like the Frobenius norm). 
Instead, in this paper, we are interested in learning $\bM$ with small error in a \em weighted \em Frobenius norm; that is, we'd like to develop efficient algorithms to find a matrix $\hat{\bM}$ so that
\[ \inparen{ \sum_{i,j} W_{ij} (M_{ij} - \hat{M}_{ij} )^2 }^{1/2} = \fronorm{ \bW^{(1/2)} \had (\bM - \hat{ \bM} ) }\] 
is small for some matrix $\bW$ of interest.  On the other hand,
we will also prove lower bounds that demonstrate for which $(\Omega, \bW)$ combinations certain error bounds Re not possible.

When measuring weighted error, it is important to normalize appropriately in order to understand what the bounds mean.  In our setting, we will always report error normalized by $\fronorm{ \bW^{(1/2)} }$: that is the goal is that 
\[ \frac{ \fronorm{ \bW^{(1/2)} \had ( \bM - \hat{ \bM } )  }}{ \fronorm{ \bW^{(1/2)} } } = \inparen{ \sum_{i,j} \frac{ W_{ij} }{ \sum_{i', j'} W_{i'j'} } (M_{ij} - \hat{M}_{ij})^2 }^{1/2}\]
is small.  Written out this way, it is clear that this gives a weighted average of the per-entry squared error.
In light of the discussion above, we formally define our problem below.
\begin{center}
\fbox{
\parbox{\textwidth}{
\textbf{Problem:} \textsc{Weighted Universal Matrix Completion}

\vspace{.4cm}
\textbf{Parameters:}
\begin{itemize}
\item Dimensions $d_1, d_2$
\item A sampling pattern $\Omega \subset [d_1] \times [d_2]$
\item Parameters $\sigma, \beta, r > 0$
\item A rank-1 weight matrix $\bW \in \RR^{d_1 \times d_2}$ so that $W_{ij} > 0$ for all $i,j$.
\item A set $K$ (which for us will either be $K_r \cap \beta B_\infty$ or $\beta \sqrt{r} B_{\max}$)
\end{itemize}

\textbf{Goal:} Design an efficient algorithm $\cA$ with the following guarantees:
\begin{itemize}
\item $\cA$ takes as input
entries $\bM_\Omega + \bZ_\Omega$ so that $Z_{ij} \sim \gN(0,\sigma^2)$ are i.i.d. 
\item $\cA$ runs in polynomial time
\item With high probability over the choice of $\bZ$, $\cA$ returns an estimate $\hat{\bM}$ of $\bM$ so that
\[ \inparen{  \sum_{i,j} \inparen{ \frac{ W_{ij} }{ \sum_{i', j'} W_{i'j'} } } ( M_{ij} - \hat{M}_{ij} )^2}^{1/2} = \frac{ \fronorm{ \bW^{(1/2)} \had ( \bM - \hat{ \bM } ) }} { \fronorm{ \bW^{(1/2)} } } \leq \delta(d_1, d_2, \Omega, r, \sigma, \beta) \]
for all $\bM \in K$, where $\delta$ is some function of the problem parameters.
\end{itemize}
}
}
\end{center}
Ideally, the error function $\delta$ will tend to zero as $|\Omega|$ grows.

\begin{remark}[Universality]
We emphasize that the requirement in the problem above is a universal one.  That is, for a fixed sampling pattern, the algorithm $\bA$ must work simultaneously for \em all \em relevant matrices~$\bM$.
\end{remark}

\begin{remark}[Strictly positive $\bW$]
Notice that the requirement that $W_{ij}$ be strictly greater than zero (rather than possibly equal to zero) is without loss of generality.  Indeed, if $W_{ij} = 0$ for some $(i,j)$, then either the $i$'th row or the $j$'th column of $\bW$ are zero, and we can reduce the problem to a smaller one by ignoring that row or column.
\end{remark}

\subsection{Useful theorems}
In this section we record a few theorems that we will use in our analysis.
We will use the Matrix-Bernstein Inequality (Theorem 1.4 in \cite{tropp2012}).
\begin{theorem}\label{thm:matrixbernstein}
Let $\bX_i \in \RR^{d \times d}$ for $i=1,\ldots, n$ be independent, random, symmetric matrices, so that
\[ \EE \bX_i = 0 \qquad \text{and} \qquad \opnorm{ \bX_i } \leq R \text{ \ \ \ almost surely. } \]
Then for all $t \geq 0$,
\[ \PR{ \opnorm{ \sum_i \bX_i } \geq t } \leq d \cdot \exp \inparen{ \frac{ - t^2 / 2 }{ \sigma^2 + Rt/3 } } \qquad \text{where} \qquad \sigma^2 = \opnorm{ \sum_i \EE( \bX_i^2 ) }. \]
\end{theorem}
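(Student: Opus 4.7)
The plan is to follow the standard Laplace transform approach to concentration, lifted to matrices via Lieb's concavity theorem. Write $\bS := \sum_i \bX_i$, and first reduce bounding $\opnorm{\bS}$ to bounding $\lambda_{\max}(\bS)$: since $\{-\bX_i\}$ satisfies the same hypotheses, a union bound handles the lower tail at the cost of a factor of two (absorbed into the dimensional constant). For the upper tail, Markov's inequality combined with $e^{\theta \lambda_{\max}(\bS)} \leq \tr(e^{\theta \bS})$ gives, for any $\theta > 0$,
\[\PR{\lambda_{\max}(\bS) \geq t} \leq e^{-\theta t}\, \EE \tr \exp(\theta \bS).\]
The key matrix-analytic input is Lieb's concavity theorem, which applied iteratively across the independent summands (together with Jensen's inequality) yields the subadditive ``master tail bound''
\[\EE \tr \exp\Bigl(\theta \sum_i \bX_i\Bigr) \leq \tr \exp\Bigl(\sum_i \log \EE e^{\theta \bX_i}\Bigr).\]

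Next I would bound each matrix cumulant generating function by a Bernstein-type moment estimate. From the scalar inequality $e^{\theta x} \leq 1 + \theta x + g(\theta)\, x^2$ valid for $|x| \leq R$ with $g(\theta) := \theta^2/(2(1 - \theta R/3))$ when $0 < \theta < 3/R$, a standard transfer argument (relying on monotonicity of $(e^{\theta x} - 1 - \theta x)/x^2$) promotes this to the operator inequality $e^{\theta \bX_i} \preceq \bI + \theta \bX_i + g(\theta)\, \bX_i^2$ in the Loewner order. Taking expectations and using $\EE \bX_i = 0$ produces $\EE e^{\theta \bX_i} \preceq \bI + g(\theta)\, \EE \bX_i^2$; operator monotonicity of the matrix logarithm together with the scalar fact $\log(1+x) \leq x$ (applied spectrally to $g(\theta)\EE\bX_i^2 \succeq 0$) then yields $\log \EE e^{\theta \bX_i} \preceq g(\theta)\, \EE \bX_i^2$.

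Summing across $i$ and using monotonicity of $\tr \exp(\cdot)$ under the Loewner order bounds the trace term by $d \cdot \exp(g(\theta)\,\sigma^2)$, where $\sigma^2 = \opnorm{\sum_i \EE \bX_i^2}$ controls the largest eigenvalue of the argument. Plugging into the Laplace bound gives
\[\PR{\lambda_{\max}(\bS) \geq t} \leq d\,\exp\bigl(-\theta t + g(\theta)\sigma^2\bigr),\]
and the classical Bernstein choice $\theta = t/(\sigma^2 + R t/3)$ produces the stated exponent $-t^2/(2\sigma^2 + 2Rt/3)$. The main obstacle is purely the matrix-analytic machinery---Lieb's concavity theorem, the validity of the operator versions of the scalar inequalities, and monotonicity of $\tr\exp$ under the Loewner order---all of which I would invoke as black boxes from the matrix analysis literature rather than prove from scratch; indeed, the entire derivation is precisely Tropp's proof of Theorem~1.4 in the cited reference, and we use it here in exactly that form.
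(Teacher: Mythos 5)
Your sketch is correct, but there is nothing in the paper to compare it against: the paper states this result as a citation (Theorem~1.4 of Tropp, \emph{User-Friendly Tail Bounds for Sums of Random Matrices}) and offers no proof of its own. What you have written is an accurate outline of exactly Tropp's argument --- Laplace transform plus Lieb's concavity theorem to obtain the master bound $\EE\,\tr\exp(\theta\sum_i\bX_i)\le\tr\exp\bigl(\sum_i\log\EE e^{\theta\bX_i}\bigr)$, the Bernstein-type operator moment bound $\log\EE e^{\theta\bX_i}\preceq g(\theta)\,\EE\bX_i^2$ with $g(\theta)=\theta^2/\bigl(2(1-\theta R/3)\bigr)$, monotonicity of $\tr\exp$ under the Loewner order, and the classical choice $\theta=t/(\sigma^2+Rt/3)$ --- and you are explicit about which pieces you are importing as black boxes. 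One small but real point you handle correctly and the paper glosses over: Tropp's Theorem~1.4 bounds $\lambda_{\max}\bigl(\sum_i\bX_i\bigr)$ with leading constant $d$, whereas the paper's statement bounds $\opnorm{\sum_i\bX_i}$; passing from the former to the latter via $\{-\bX_i\}$ and a union bound costs a factor of $2$ in front of $d$, which you note and the paper silently drops.
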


We will also use the following bound about sums of random matrices with Gaussian coefficients (Theorem 1.5 in \cite{tropp2012})
\begin{theorem}\label{thm:matrixgaussian}
Let $\bX_i \in \RR^{d_1 \times d_2}$ for $i = 1, \ldots, n$ be any fixed matrices, and let $g_1, \ldots, g_n$ be independent standard normal random variables.  Define
\[ \sigma^2 = \max \inset{ \opnorm{ \sum_i \bX_i \bX_i^T }, \opnorm{ \sum_i \bX_i^T \bX_i } }. \]
Then for all $t > 0$, 
\[ \PR{ \opnorm{ \sum_i g_i \bX_i } \geq t } \leq (d_1 + d_2) \cdot \exp \inparen{ \frac{ -t^2 }{ 2\sigma^2 } }. \] 
\end{theorem}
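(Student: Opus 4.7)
The plan is to follow Tropp's approach via the matrix Laplace transform method combined with Lieb's concavity theorem. First I would reduce the rectangular case to the symmetric one via the Hermitian dilation: define
\[ \mathcal{H}(\bX) = \begin{pmatrix} 0 & \bX \\ \bX^T & 0 \end{pmatrix} \in \RR^{(d_1+d_2) \times (d_1+d_2)}, \]
which is symmetric. Direct block-matrix arithmetic gives $\lambda_{\max}(\mathcal{H}(\bX)) = \opnorm{\bX}$ and $\mathcal{H}(\bX)^2 = \diag(\bX \bX^T, \bX^T \bX)$, so $\opnorm{\mathcal{H}(\bX)^2} = \max\inset{\opnorm{\bX\bX^T}, \opnorm{\bX^T\bX}}$. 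Since $\mathcal{H}$ is linear, $\mathcal{H}(\sum_i g_i \bX_i) = \sum_i g_i \mathcal{H}(\bX_i)$, so the problem reduces to bounding the maximum eigenvalue of a symmetric Gaussian series of $d = d_1 + d_2$ dimensional matrices $\bA_i := \mathcal{H}(\bX_i)$.

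For the symmetric case, I would invoke the matrix Laplace transform bound: for any random symmetric $\bY$ and $\theta > 0$,
\[ \PR{ \lambda_{\max}(\bY) \geq t } \leq e^{-\theta t} \cdot \EE \tr \exp(\theta \bY). \]
The key step is bounding $\EE \tr \exp(\theta \sum_i g_i \bA_i)$. Here I would appeal to Lieb's concavity theorem, which states that $\bV \mapsto \tr \exp(\bH + \log \bV)$ is concave on the positive-definite cone. Applying Jensen's inequality conditionally and iterating over the independent summands yields the subadditivity estimate
\[ \EE \tr \exp\inparen{ \sum_i \theta g_i \bA_i } \leq \tr \exp\inparen{ \sum_i \log \EE \exp(\theta g_i \bA_i) }. \]

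For the individual MGFs, since each $\bA_i$ is symmetric and $g_i$ is a standard scalar Gaussian, diagonalizing $\bA_i$ and using the scalar identity $\EE e^{\theta g_i \lambda} = e^{\theta^2 \lambda^2 / 2}$ eigenvalue-by-eigenvalue gives $\EE \exp(\theta g_i \bA_i) = \exp(\theta^2 \bA_i^2 / 2)$. Substituting, and noting that $\tr \exp(\bB) \leq d \cdot \exp(\opnorm{\bB})$ for positive-semidefinite $\bB$,
\[ \EE \tr \exp\inparen{ \sum_i \theta g_i \bA_i } \leq d \cdot \exp\inparen{ \tfrac{\theta^2}{2} \opnorm{ \textstyle \sum_i \bA_i^2 } } = (d_1+d_2) \exp(\theta^2 \sigma^2 / 2). \]
Plugging into the Laplace bound and optimizing $\theta = t/\sigma^2$ produces the stated tail inequality.

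The main obstacle is Lieb's concavity theorem, which is a nontrivial matrix inequality but is a well-established black box in this line of work; once invoked, the remaining steps reduce to familiar scalar MGF calculations ported into the spectral calculus. A secondary bookkeeping point is verifying that the Hermitian dilation correctly transfers both the operator norm of the sum and the variance parameter $\sigma^2$, which amounts to the block-matrix identities recorded above and explains the $d_1 + d_2$ dimensional prefactor in the final bound.
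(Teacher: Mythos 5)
The paper does not prove this statement itself but cites it as Theorem~1.5 of Tropp's 2012 paper, and your proposal correctly reproduces Tropp's argument: Hermitian dilation to reduce to the symmetric case, the matrix Laplace transform bound, Lieb's concavity theorem to obtain the subadditivity of the log-MGF, the exact Gaussian matrix MGF $\EE \exp(\theta g_i \bA_i) = \exp(\theta^2 \bA_i^2/2)$, and optimization over $\theta$. The block-matrix identities transferring the operator norm and the variance parameter are stated correctly (in particular $\lambda_{\max}(\mathcal{H}(\bX)) = \opnorm{\bX}$ because the spectrum of the dilation is $\pm$ the singular values, so bounding $\lambda_{\max}$ alone suffices and gives the $d_1 + d_2$ prefactor), so the proof is correct and follows the same route the paper relies on by citation.
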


We will also need the Hanson-Wright Inequality (see, e.g., \cite{rudelsonhw}).
\begin{theorem}[Hanson-Wright inequality]\label{thm:hansonwright}
There is some constant $c > 0$ so that the following holds.  Let $\bxi \in \{0,\pm 1\}^d$ be a vector with mean-zero, independent entries, and let $\bF$ be any matrix which has zero diagonal.  Then
\[ \PR{ |\bxi^T \bF \bxi| > t } \leq 2 \exp \inparen{ - c \cdot \min \inset{ \frac{ t^2 }{ \fronorm{\bF}^2 }, \frac{ t }{ \opnorm{\bF} } } }. \]
\end{theorem}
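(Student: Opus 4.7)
The plan is to bound the moment generating function $\EE \exp(\lambda \bxi^T \bF \bxi)$ over a range of $\lambda > 0$ and then apply Markov's inequality with an optimized choice of $\lambda$ to produce the mixed sub-Gaussian / sub-exponential tail. I would first symmetrize by replacing $\bF$ with $(\bF + \bF^T)/2$: the quadratic form $\bxi^T \bF \bxi$ is unchanged, the Frobenius and operator norms are only decreased, and the zero-diagonal hypothesis is preserved.

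The decisive step is \emph{decoupling}. Because $\bF$ has zero diagonal, a standard decoupling inequality (de la Pe\~na--Montgomery-Smith) gives
\[
\EE \exp\bigl(\lambda \, \bxi^T \bF \bxi\bigr) \;\leq\; \EE \exp\bigl(4\lambda \, \bxi^T \bF \bxi'\bigr),
\]
where $\bxi'$ is an independent copy of $\bxi$. This converts a second-order chaos into a bilinear form, which conditional on one argument becomes a sum of independent, bounded, mean-zero coordinates.

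Conditioning on $\bxi'$ and writing $\bxi^T \bF \bxi' = \sum_i \xi_i (\bF \bxi')_i$, Hoeffding's lemma applied coordinate-wise (using $|\xi_i| \leq 1$) yields
\[
\EE_{\bxi}\exp\bigl(4\lambda \, \bxi^T \bF \bxi'\bigr) \;\leq\; \exp\bigl(8 \lambda^2 \twonorm{\bF \bxi'}^2\bigr).
\]
Taking the outer expectation reduces the task to bounding $\EE \exp\bigl(8\lambda^2 \, \bxi'^{\,T} \bF^T \bF \, \bxi'\bigr)$, which is once again a quadratic form in $\bxi'$, now with the positive semidefinite matrix $\bF^T\bF$ of trace $\fronorm{\bF}^2$ and operator norm $\opnorm{\bF}^2$.

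The final step handles this residual chaos by splitting $\bF^T \bF$ into its diagonal part, a deterministic quantity bounded by $\fronorm{\bF}^2$ since $(\xi'_i)^2 \leq 1$, and its off-diagonal part, which again has zero diagonal and so invites another round of decoupling (equivalently, a direct induction on the support size). The outcome is an MGF bound of the form
\[
\EE \exp\bigl(\lambda \, \bxi^T \bF \bxi\bigr) \;\leq\; \exp\bigl(C \lambda^2 \fronorm{\bF}^2\bigr) \qquad \text{for all } |\lambda| \leq c/\opnorm{\bF},
\]
for absolute constants $C,c > 0$. Markov's inequality, optimized with $\lambda = \min\bigl\{t/(2C\fronorm{\bF}^2),\; c/\opnorm{\bF}\bigr\}$, then delivers a sub-Gaussian tail for $t \lesssim \fronorm{\bF}^2/\opnorm{\bF}$ and a sub-exponential tail beyond, which together give the claimed minimum of $t^2/\fronorm{\bF}^2$ and $t/\opnorm{\bF}$. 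The main obstacle in my view is controlling the recursion cleanly enough that the constants $C$ and $c$ remain dimension-free; decoupling is precisely the ingredient that avoids a union bound over coordinates and keeps these constants absolute.
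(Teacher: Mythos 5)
The paper does not supply its own proof of this theorem; it cites it as a known result (Rudelson--Vershynin). Your outline correctly follows the first two steps of the standard decoupling proof: symmetrize, apply de la Pe\~na--Montgomery-Smith decoupling to pass to $\bxi^T\bF\bxi'$, and condition on $\bxi'$ to apply Hoeffding's lemma, arriving at the need to bound $\EE_{\bxi'}\exp\bigl(c\lambda^2\,\bxi'^T\bF^T\bF\,\bxi'\bigr)$.

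The gap is in your last step. The standard treatment of this residual PSD chaos is not a second round of decoupling but a \emph{Gaussian linearization}: writing $\exp\bigl(c\lambda^2\|\bF\bxi'\|^2\bigr) = \EE_g\exp\bigl(\sqrt{2c}\,\lambda\,\langle g,\bF\bxi'\rangle\bigr)$ for $g\sim\gN(0,I)$, then applying Fubini and Hoeffding a second time to integrate out $\bxi'$, which leaves $\EE_g\exp\bigl(c'\lambda^2\,g^T\bF\bF^T g\bigr)$ --- a Gaussian quadratic form whose MGF can be computed exactly by diagonalization, giving $\prod_k(1-c''\lambda^2\sigma_k^2)^{-1/2}\leq\exp\bigl(C\lambda^2\fronorm{\bF}^2\bigr)$ once $\lambda\lesssim 1/\opnorm{\bF}$. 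Your proposed alternative --- splitting $\bF^T\bF$ into diagonal and off-diagonal and decoupling again --- produces a recursion in which the exponent of $\lambda$ doubles at each stage ($\lambda^2,\lambda^4,\lambda^8,\dots$) and the matrices change ($B_1 = \text{offdiag}(\bF^T\bF)$, $B_2 = \text{offdiag}(B_1^TB_1)$, etc.). One can show this recursion converges when $|\lambda|\opnorm{\bF}$ is small enough, because $\tr(B_{k-1}^TB_{k-1})$ and $\opnorm{B_{k-1}}$ grow at a rate dominated by the double-exponential decay of $\lambda^{2^k}$, but that bookkeeping is genuinely delicate and you have not supplied it. The phrase ``a direct induction on the support size'' does not describe a recognizable argument for closing this loop. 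As written the proof is incomplete; replacing the iterated decoupling with the Gaussian linearization step is the clean fix and is what the cited reference does.
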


\section{Results}\label{sec:results}
In this section, we state informal versions of our results.  We assume that $d_1 = d_2 = d$ to make the results easier to parse, although most of our results extend to rectangular matrices.\footnote{The only exception are Theorems~\ref{thm:gaplower} and \ref{thm:gapupper} which are in terms of the eigenvalues of $\ind{\Omega}$.  For these bounds we assume that $\ind{\Omega}$ is square and symmetric.}  We present more detailed statements of our results later in the paper.

\subsection{General Results}
Our main upper bounds give two algorithms for weighted universal matrix completion.  These are formally stated in Theorems~\ref{lem:upperrankk} and \ref{lem:uppermaxball}, and we give an informal version below.  
Our bounds will depend on two parameters of the sampling pattern $\Omega$ and the weight matrix $\bW$.  For a fixed $\bW$ and $\Omega$, define
\begin{equation}\label{eq:lambda}
 \lambda = \opnorm{ \bW^{(1/2)} - \bW^{(-1/2)} \had \ind{\Omega} } 
\end{equation}
\begin{equation}\label{eq:mu}
 \mu^2 = \max\inset{ \max_i \inparen{ \sum_j \frac{ \ind{ (i,j) \in \Omega } }{ W_{ij} } }, \max_j \inparen{ \sum_i \frac{ \ind{(i,j) \in \Omega} }{ W_{ij} } } }. 
\end{equation}

The parameter $\lambda$ is a measure of how ``close'' $\ind{\Omega}$ is to the matrix $\bW$.  Indeed, if $\ind{\Omega}$ happens to be rank $1$ and $\ind{\Omega} = \bW$, then $\lambda = 0$.

The parameter $\mu$ measures how ``close'' $\ind{\Omega}$ is to $\bW$, as well as capturing how ``lopsided'' they are.  If $\ind{\Omega} = \bW$ then $\mu^2$ is just the max column or row weight of $\Omega$.  However, if $W$ is very different from $\ind{\Omega}$, for example, by putting not very much weight on a row that is heavily sampled by $\Omega$, then $\mu$ will be larger.

We study two algorithms.  The first, which applies when $\bM$ is exactly rank $k$, is a simple debiased projection-based method. 
More precisely, we will estimate a rank-$r$ matrix $\bM$ from the observations $\bY_\Omega = \bM_\Omega + \bZ_\Omega$ by 
\[ \hat{ \bM }_0 = \bW^{(-1/2)} \had \argmin_{ \text{rank}(\bX) = r } \norm{ \bX - \bW^{(-1/2)} \had (\bY_\Omega) }. \]
In Theorem~\ref{lem:upperrankk}, we will show the following.

\begin{theorem}[General upper bound for rank-$k$ matrices, informal]\label{thm:rkkinformal}
Let $\bW \in \RR^{d \times d}$ be a rank-one matrix with strictly positive entries, and fix $\Omega \subseteq [d] \times [d]$.  
Suppose that $\bM \in \RR^{d \times d}$ has rank $r$ and $\infnorm{\bM} \leq \beta$, and let $\bY = \bM + \bZ$ where the entries of $\bZ$ are i.i.d. $\gN(0,\sigma^2)$.
Then with probability at least $1 - 1/d$ over the choice of $\bZ$,
\[ \frac{\fronorm{ \bW^{(1/2)} \had ( \bM - \hat{ \bM }_0 ) }}{ \fronorm{ \bW^{(1/2)} } } \lesssim  \frac{ \beta r \lambda +   \sigma \mu \sqrt{r \log(d)} }{ \fronorm{ \bW^{(1/2)} } }. \]
\end{theorem}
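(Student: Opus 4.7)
The plan is to reduce the weighted Frobenius error to a spectral-norm bound on a rank-$2r$ residual, then split that residual into a deterministic sampling bias and a Gaussian noise term. Set $\bM^\star := \bW^{(1/2)} \had \bM$, $\tilde{\bM} := \bW^{(-1/2)} \had \bY_\Omega$, and $\hat{\bX} := \argmin_{\rank(\bX) \leq r}\opnorm{\bX - \tilde{\bM}}$, so by construction $\hat{\bX} = \bW^{(1/2)} \had \hat{\bM}_0$. Since $\bW = \ba\bb^\top$ is rank one, Hadamard multiplication by $\bW^{(1/2)}$ is diagonal conjugation, so $\bM^\star$ has rank at most $r$. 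Thus $\bM^\star - \hat{\bX}$ has rank at most $2r$, and combining $\fronorm{\bX} \leq \sqrt{\rank(\bX)}\,\opnorm{\bX}$, the fact that the truncated SVD is simultaneously optimal in operator and Frobenius norm (Eckart--Young), and the triangle inequality gives
\[ \fronorm{\bW^{(1/2)} \had (\bM - \hat{\bM}_0)} = \fronorm{\bM^\star - \hat{\bX}} \leq 2\sqrt{2r}\,\opnorm{\bM^\star - \tilde{\bM}}. \]

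Next, decompose the residual as
\[ \bM^\star - \tilde{\bM} = \bigl(\bW^{(1/2)} - \bW^{(-1/2)} \had \ind{\Omega}\bigr) \had \bM \;-\; \bW^{(-1/2)} \had \bZ \had \ind{\Omega} \;=:\; \mathbf{E} \had \bM \;-\; \mathbf{N}, \]
where $\opnorm{\mathbf{E}} = \lambda$ by definition. The key deterministic inequality is $\opnorm{\mathbf{E} \had \bM} \leq \opnorm{\mathbf{E}} \cdot \maxnorm{\bM}$, which I would prove by writing an arbitrary factorization $\bM = \bU\bV^\top$ and bounding, for unit vectors $\mathbf{x},\mathbf{y}$,
\[ \bigl|\mathbf{x}^\top (\mathbf{E} \had \bM) \mathbf{y}\bigr| = \Bigl|\sum_{k} (\mathbf{x} \had \bU_{\cdot,k})^\top \mathbf{E}\, (\mathbf{y} \had \bV_{\cdot,k})\Bigr| \leq \opnorm{\mathbf{E}} \sum_{k} \twonorm{\mathbf{x} \had \bU_{\cdot,k}} \twonorm{\mathbf{y} \had \bV_{\cdot,k}}, \]
then applying Cauchy--Schwarz in $k$ together with $\sum_{k} \twonorm{\mathbf{x} \had \bU_{\cdot,k}}^2 \leq \|\bU\|_{2,\infty}^2$ (and the analogous bound for $\bV$), and finally infimizing over factorizations. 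Since $\bM$ has rank $r$ and $\infnorm{\bM} \leq \beta$, the max-norm estimate of Corollary 2.2 of~\cite{rashtchian2016bounded} (already cited in the preliminaries) gives $\maxnorm{\bM} \leq \beta\sqrt{r}$, so $\opnorm{\mathbf{E} \had \bM} \leq \beta\sqrt{r}\,\lambda$. This step is the main obstacle: it is where the low-rank/flatness hypothesis enters, and it is what makes the clean operator-norm parameter $\lambda$ the right quantity to track.

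For the Gaussian term, expand $\mathbf{N} = \sum_{(i,j) \in \Omega} g_{ij}\bX_{ij}$ with $g_{ij} := Z_{ij}/\sigma \sim \gN(0,1)$ i.i.d.\ and $\bX_{ij} := (\sigma/\sqrt{W_{ij}})\,\be_i\be_j^\top$, then apply Theorem~\ref{thm:matrixgaussian}. A direct computation yields $\sum_{(i,j) \in \Omega} \bX_{ij}\bX_{ij}^\top = \sigma^2 \diag\bigl(\sum_j \ind{(i,j)\in\Omega}/W_{ij}\bigr)_{i}$ with spectral norm at most $\sigma^2\mu^2$ by the definition of $\mu$, and the analogous bound holds for $\sum \bX_{ij}^\top \bX_{ij}$. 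Taking $t = C\sigma\mu\sqrt{\log d}$ in Theorem~\ref{thm:matrixgaussian} yields $\opnorm{\mathbf{N}} \lesssim \sigma\mu\sqrt{\log d}$ with probability at least $1 - 1/d$. Assembling these pieces,
\[ \fronorm{\bW^{(1/2)} \had (\bM - \hat{\bM}_0)} \lesssim \sqrt{r}\bigl(\beta\sqrt{r}\,\lambda + \sigma\mu\sqrt{\log d}\bigr) = \beta r\lambda + \sigma\mu\sqrt{r\log d}, \]
and dividing by $\fronorm{\bW^{(1/2)}}$ yields the claimed bound.
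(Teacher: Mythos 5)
Your proof is correct and follows essentially the same route as the paper's proof of the formal version (Theorem~\ref{lem:upperrankk}): reduce the weighted Frobenius error to $\sqrt{2r}$ times the operator norm of the rank-$2r$ residual via Eckart--Young and the triangle inequality, split the residual into the deterministic bias $\bigl(\bW^{(1/2)} - \bW^{(-1/2)}\had\ind{\Omega}\bigr)\had\bM$ and the Gaussian noise $\bW^{(-1/2)}\had\bZ_\Omega$, bound the former by $\lambda\cdot\maxnorm{\bM}\leq\beta\sqrt{r}\,\lambda$, and the latter by $\sigma\mu\sqrt{\log d}$ via Theorem~\ref{thm:matrixgaussian}. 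The one place you go beyond the paper is that you spell out a proof of the duality-type inequality $\opnorm{\mathbf{E}\had\bM}\leq\opnorm{\mathbf{E}}\maxnorm{\bM}$, which the paper invokes without comment; your factorize-and-Cauchy--Schwarz argument for it is clean and correct, and the diagonal-conjugation remark explaining why $\bW^{(1/2)}\had\bM$ stays rank~$r$ is a helpful explicitation of another step the paper takes for granted.
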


The second algorithm applies when $\bM \in \beta \sqrt{r} B_{\max}$ is approximately low-rank.
Let
\[ \hat{ \bM }_1 = \bW^{(-1/2)} \had \argmin_{ \maxnorm{ \bX } \leq \beta \sqrt{r} } \norm{ \bX - \bW^{(-1/2)} \had (\bM_\Omega + \bZ_\Omega) }. \]

\begin{theorem}[General upper bound for approximately rank-$r$ matrices, informal]\label{thm:appxrkkinformal}
Let $\bW \in \RR^{d \times d}$ be a rank-one matrix with strictly positive entries, and fix $\Omega \subseteq [d] \times [d]$.  Suppose that $\bM \in \RR^{d \times d}$ has $\maxnorm{ \bM } \leq \beta\sqrt{r}$ and let $\bY = \bM + \bZ$ where the entries of $\bZ$ are i.i.d. $\gN(0, \sigma^2)$.
Then with probability at least $1 - 1/d$ over the choice of $\bZ$,
\[ \frac{\fronorm{ \bW^{(1/2)} \had ( \bM - \hat{ \bM }_1 ) }}{\fronorm{ \bW^{(1/2)} } } \lesssim  \sqrt{\beta} \inparen{ \frac{ \beta r \lambda + \sigma \mu \sqrt{ r \log(d) } }{ \fronorm{ \bW^{(1/2)} } } }^{1/2}. \] 
\end{theorem}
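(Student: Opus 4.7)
The plan is to follow the strategy of the proof of Theorem~\ref{thm:rkkinformal} (the rank-$r$ case) as far as possible, and then use the max-norm constraint to convert an operator-norm bound into the desired Frobenius-norm bound; this final conversion is precisely where the extra $\sqrt{\beta}$ prefactor arises. Set $\bX^\star = \bW^{(1/2)}\had\bM$. Because $\bW^{(1/2)}$ is rank-one with positive entries and $\maxnorm{\bM}\le\beta\sqrt{r}$, the matrix $\bX^\star$ is feasible for the constrained minimisation defining $\hat{\bX}$ (after a harmless normalisation of $\bW$). Optimality of $\hat{\bX}$ together with the triangle inequality gives
\[
\opnorm{\hat{\bX}-\bX^\star} \;\le\; 2\opnorm{(\bW^{(1/2)}-\bW^{(-1/2)}\had\ind{\Omega})\had\bM} \;+\; 2\opnorm{\bW^{(-1/2)}\had\ind{\Omega}\had\bZ}.
\]

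For the bias term I would invoke the Schur--Haagerup identity that the operator norm of the Hadamard multiplier $\bX\mapsto \bA\had\bX$ equals $\maxnorm{\bA}$; combined with $\maxnorm{\bM}\le\beta\sqrt{r}$ this yields $\opnorm{(\bW^{(1/2)}-\bW^{(-1/2)}\had\ind{\Omega})\had\bM}\le \beta\sqrt{r}\,\lambda$. For the noise term I would apply Theorem~\ref{thm:matrixgaussian} exactly as in the rank-$r$ proof: the relevant matrix variance parameter equals $\mu^2$, so with high probability $\opnorm{\bW^{(-1/2)}\had\ind{\Omega}\had\bZ}\lesssim \sigma\mu\sqrt{\log d}$. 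Combining, one gets $\opnorm{\hat{\bX}-\bX^\star}\lesssim \beta\sqrt{r}\,\lambda + \sigma\mu\sqrt{\log d}$, precisely the operator-norm analogue of the rank-$r$ bound.

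The crux is then passing from this operator-norm bound to a Frobenius-norm bound. Since $\bD:=\hat{\bX}-\bX^\star$ is no longer guaranteed to have rank $O(r)$, the shortcut $\fronorm{\bD}\le\sqrt{2r}\opnorm{\bD}$ used before is unavailable. In its place I would use trace duality $\fronorm{\bD}^2 \le \nucnorm{\bD}\cdot\opnorm{\bD}$ and then control $\nucnorm{\bD}$ via the max-norm constraints and the rank-one structure of $\bW^{(1/2)}$. Writing $\bW^{(1/2)}=\sqrt{\ba}\sqrt{\bb}^T$ and $\bM=\bU\bV^T$, the factorisation $\bX^\star=(\diag(\sqrt{\ba})\bU)(\diag(\sqrt{\bb})\bV)^T$ gives $\nucnorm{\bX^\star}\le \maxnorm{\bM}\sqrt{\|\ba\|_1\|\bb\|_1}=\maxnorm{\bM}\fronorm{\bW^{(1/2)}}\le \beta\sqrt{r}\,\fronorm{\bW^{(1/2)}}$, and an analogous bound should apply to $\hat{\bX}$, giving $\nucnorm{\bD}\lesssim \beta\sqrt{r}\,\fronorm{\bW^{(1/2)}}$. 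Plugging this back produces $\fronorm{\bD}^2\lesssim \beta\,\fronorm{\bW^{(1/2)}}\bigl(\beta r\lambda + \sigma\mu\sqrt{r\log d}\bigr)$, and dividing by $\fronorm{\bW^{(1/2)}}^2$ and taking the square root recovers the stated bound. The main obstacle I foresee is precisely this nuclear-norm bound for the algorithmic output $\hat{\bX}$: whereas $\bX^\star$ inherits a structured factorisation from $\bM$, $\hat{\bX}$ is a priori only known to satisfy $\maxnorm{\hat{\bX}}\le\beta\sqrt{r}$, for which the naive nuclear-norm bound is the weaker $\beta\sqrt{r}\sqrt{d_1 d_2}$; upgrading $\sqrt{d_1 d_2}$ to $\fronorm{\bW^{(1/2)}}$ is the heart of the argument and likely requires exploiting the optimality of $\hat{\bX}$ to transfer factorisation structure from $\bX^\star$, or reformulating the constraint via a $\bW$-adapted max-norm.
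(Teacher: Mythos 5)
Your overall plan --- first get an operator-norm bound $\opnorm{\bW^{(1/2)}\had(\bM-\hat\bM)}\lesssim \beta\sqrt{r}\lambda + \sigma\mu\sqrt{\log d}$, then promote it to Frobenius via $\fronorm{\bD}^2\le\nucnorm{\bD}\opnorm{\bD}$ with $\nucnorm{\bD}\lesssim\beta\sqrt{r}\fronorm{\bW^{(1/2)}}$ --- is exactly the right shape, and the operator-norm step matches the paper. You have also correctly diagnosed the weak point: bounding $\nucnorm{\hat\bX}$ from the constraint $\maxnorm{\hat\bX}\le\beta\sqrt{r}$ alone gives only the lossy $\beta\sqrt{r}\sqrt{d_1 d_2}$.

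The missing idea is not to bound $\nucnorm{\hat\bX}$ and $\nucnorm{\bX^\star}$ separately, but to bound the nuclear norm of the single difference $\bD = \bW^{(1/2)}\had\bQ$, $\bQ := \bM-\hat\bM$, directly from a factorization of $\bQ$. If $\bQ=\bU\bV^T$ attains $\maxnorm{\bQ}$, then $\bW^{(1/2)}\had\bQ = (\diag(\sqrt{\bw})\bU)(\diag(\sqrt{\bu})\bV)^T$ and the row-wise computation $\fronorm{\diag(\sqrt{\bw})\bU}^2 = \sum_i w_i\twonorm{\bU_{i,\cdot}}^2\le\|\bU\|_{2,\infty}^2\onenorm{\bw}$ (and its mate) give
\[
\nucnorm{\bW^{(1/2)}\had\bQ}\;\le\;\maxnorm{\bQ}\sqrt{\onenorm{\bw}\onenorm{\bu}}\;=\;\maxnorm{\bQ}\,\fronorm{\bW^{(1/2)}},
\]
with $\maxnorm{\bQ}\lesssim\beta\sqrt{r}$ from the max-norm budgets of $\bM$ and $\hat\bM$. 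This is exactly your $\bX^\star$ computation, just applied to the difference rather than the endpoints; the triangle inequality for the max norm on $\bQ$ is what replaces the unavailable structured factorization of $\hat\bX$. The paper reaches the same place by a cosmetically different route: it writes $\fronorm{\bW^{(1/2)}\had\bQ}^2=\ip{\bW\had\bQ}{\bQ}\le\maxnorm{\bQ}\cdot\|\bW\had\bQ\|_{\max^*}$ and then applies Grothendieck's inequality plus the rank-one structure of $\bW^{(1/2)}$ to get $\|\bW\had\bQ\|_{\max^*}\le K_G\,\fronorm{\bW^{(1/2)}}\,\opnorm{\bW^{(1/2)}\had\bQ}$; this is your trace-duality inequality in disguise (and in fact your factorization route avoids the extra constant $K_G$). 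Everything else in your proposal --- the optimality/triangle argument, the Schur-multiplier bound $\opnorm{\bA\had\bM}\le\maxnorm{\bM}\opnorm{\bA}$ for the bias term, and Theorem~\ref{thm:matrixgaussian} for the noise term --- matches the paper's proof of Theorem~\ref{lem:uppermaxball}.
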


Notice that the only difference between the two guarantees is that the average weighted per-entry error bound for approximately low-rank matrices is the square root of the error bound for exactly rank-$r$ matrices.   As we will see below, this translates into the following fact: if we want the average weighted per-entry error to be at most $\eps$, then the number of samples required for the exactly rank-$r$ case will scale like $1/\eps$, while the number of samples required for the approximately rank-$r$ case will scale like $1/\eps^2$.  This quantitative behavior has been observed before (eg, in \cite{1bit-MC}), and we will also show that this dependence on $\eps$ is necessary.

\begin{remark}[Computable Parameters]
We note that both $\lambda$ and $\mu$ are quite easy to compute.  This is valuable because it means that, given a sampling pattern $\Omega$ and a desired weight matrix $\bW^{(1/2)}$, one can quickly compute the guarantees given by Theorems~\ref{thm:rkkinformal} and \ref{thm:appxrkkinformal}.
In contrast, common deterministic conditions which guarantee accurate recovery under random samples (for example the \em restricted eigenvalue condition \em \cite{negahban2010restricted}) are not in general computationally easy to verify.
\end{remark}

The bounds above are a bit difficult to parse: how should we think of $\lambda$ and $\mu$?  As we will see below, it depends on the setting, and in particular on whether or not $\ind{\Omega}$ is ``close'' to $\bW$.  In order to understand the bounds below, we specialize them to two cases.  In the first, we consider $\Omega$ which by construction are ``close'' to $\bW$, so that $\lambda$ is small.  In the second, we consider sampling patterns $\Omega$ that are ``far'' from $\bW$.

\subsection{Case study: When $\lambda$ is small} \label{sec: small lambda}
First, we study the case when  $\lambda$ is small (that is, when $\bW$ is close to $\ind{\Omega}$).
Suppose that $\bW$ has entries in $(0,1]$.
We'd like to consider a ``typical'' $\Omega$ that is close to $\bW$, so we study a random sampling pattern $\Omega$ so that $(i,j) \in \Omega$ with probability $W_{ij}$, independently for each $(i,j)$.  Below, we will use the shorthand ``$\Omega \sim \bW$'' to describe $\Omega$ that is sampled in this way.

We emphasize that even though $\Omega$ is drawn at random in this thought experiment, the goal is to understand our bounds for \em deterministic \em sampling matrices $\Omega$.  That is, the upper bounds are still uniform (they hold simultaneously for all appropriate matrices $\bM$), and this model is just a way to generate matrices $\Omega$ so that $\lambda$ is small, on which to test our uniform bounds.  We will show that for most $\Omega$ that are close to $\bW$ (in the above sense), the upper bound above is nearly tight.  
In this random setting, an essential difference between our results and much of the prior work is computability of parameters.  A key sufficient condition for good matrix completion is the \textit{restricted eigenvalue condition} \cite{negahban2010restricted}, which holds with high probability in the random setting.  However, we believe there are no known polynomial time methods to compute the parameters involved in the restricted eigenvalue condition, and so in general it cannot be verified under model uncertainties.  In contrast, the parameters proposed in this paper are easily computable.

In order to make sure that an $\Omega$ drawn from this ensemble is actually close to $\bW$, we also need to assume that the entries of $\bW$ are not too small; in particular, that they are not smaller than $1/d$; otherwise, it is not hard to see that the parameters $\lambda$ and $\mu$ can become large.  In this model, this means we are assuming that there are at least $\sqrt{d}$ observations in $\Omega$ per row or column.

\begin{remark}[Other places this case is interesting]  Even though the goal of this randomized thought experiment is to analyze deterministic bounds, the upper bounds may be interesting for randomized applications as well.  For example, it is not unreasonable to suppose that in some situations, the entries of $\Omega$ may be drawn independently at random from a distribution that is not uniform.
If this distribution is given by a rank-$1$ matrix, then our results apply. Such may be the case in many applications, for example in survey design where an important batch of questions need to be asked more often than others, or in natural recommender systems where a group of users rate more often than others (or some products are reviewed more than others). 

As an example of how our results for the $\Omega \sim \bW$ case may be interesting in other situations, in Section~\ref{sec:nogapapp} we show how to use our results to recover results similar to those in \cite{CBSW15} about sampling proportional to leverage scores.

\end{remark}

Our results in this setting show that, under these assumptions, our upper bounds above are nearly tight in the setting when $\beta \approx \sigma$ (that is, when the noise is on the same order as the entries of $\bM$).  The formal results are given in Theorems~\ref{thm:upperOmega} and \ref{thm:upperOmegaApprox} (upper bounds for rank $k$ and approximately rank $k$ respectively) and in Theorems~\ref{thm:simOmegaLB} and \ref{lem:flatLBapprox} (lower bounds).  We summarize these informally below.

We begin with our results for exactly rank-$r$ matrices.
\begin{theorem}[Results for rank-$r$ matrices when $\Omega \sim \bW$, Informal]
Let $\bW \in \R^{d \times d}$ be a rank-1 matrix so that for all $i,j$, $1/d \leq W_{ij} \leq 1.$ 
Choose $\Omega \sim \bW$ as described above.

\textbf{Upper bound:}
With probability at least $1 - O(1/d)$ over the choice of $\Omega$, the following holds.
There is an algorithm $\cA$ so that for any rank-r matrix $\bM$ with $\infnorm{\bM} \leq \beta$,
$\mathcal{A}$ returns $\hat{\bM} = \cA( \bM_\Omega + \bZ_\Omega)$
so that with probability at least $1 - 1/d$ over the choice of $\bZ$,
\[
\frac{\fronorm{ \bW^{(1/2)} \had ( \bM - \hat{ \bM } ) }}{ \fronorm{ \bW^{(1/2)} } }  \lesssim  
\sigma  \sqrt{\frac{ rd }{|\Omega|} }\log(d) +
\beta \sqrt{ \frac{ r^2 d } {|\Omega|}} \log(d).
\]

\textbf{Lower bound:}
On the other hand, 
with probability at least $1 - e^{-O( \fronorm{ \bW^{(1/2)} }^2) }$ over the choice of $\Omega$, 
for any algorithm that only sees the values $\bM_\Omega + \bZ_\Omega$ and returns $\hat{\bM}$, there is some rank $r$ matrix $\bM$ with $\infnorm{ \bM } \leq \beta$ so that with probability at least $1/2$ over the choice of $\bZ$, 

\[\frac{ \fronorm{ \bW \had ( \bM - \hat{\bM}  ) }  }{ \fronorm{ \bW^{(1/2)} } }
\gtrsim  \min \inset{ \sigma \sqrt{ \frac{ r d } {|\Omega| \log(d) } }, \beta \sqrt{ \frac{ d }{ |\Omega| \log^3(d)  } }}. \]

If additionally we assume that $\bW$ is ``flat'' in the sense that the largest entry is no larger than a constant times the smallest entry, then we may conclude the stronger result that

\[ \frac{ \fronorm{ \bW^{(1/2)} \had \inparen{ \bM - \hat{\bM} } }}{\fronorm{ \bW^{(1/2)} }}  \gtrsim \min \inset{ \sigma \sqrt{ \frac{ r d }{|\Omega| }}, \frac{\beta}{ \sqrt{ \log(d) } } } \]
\end{theorem}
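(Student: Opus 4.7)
The plan is to invoke the general rank-$r$ upper bound (the formal version of Theorem~\ref{thm:rkkinformal}) and control $\fronorm{\bW^{(1/2)}}$, $\lambda$, and $\mu$ under the random model $\Omega\sim\bW$. Deterministically, $\fronorm{\bW^{(1/2)}}^2 = \sum_{ij}W_{ij} = \EE|\Omega|$, and a scalar Chernoff bound gives $\fronorm{\bW^{(1/2)}}^2 \asymp |\Omega|$ with high probability. For $\lambda$, the matrix $\bW^{(-1/2)}\had\ind{\Omega} - \bW^{(1/2)}$ has independent, mean-zero entries bounded by $W_{ij}^{-1/2}\le\sqrt{d}$ with variance at most $1$; matrix Bernstein (Theorem~\ref{thm:matrixbernstein}) applied to the Hermitian dilation yields $\lambda \lesssim \sqrt{d}\log d$ w.h.p. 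For $\mu^2$, each row sum $\sum_j\ind{(i,j)\in\Omega}/W_{ij}$ has mean $d$, summands bounded by $d$, and variance $O(d^2)$, so scalar Bernstein with a union bound over $2d$ rows and columns gives $\mu^2 \lesssim d\log d$. Plugging into $\lesssim(\beta r\lambda + \sigma\mu\sqrt{r\log d})/\fronorm{\bW^{(1/2)}}$ yields the claimed upper bound.

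\textbf{Lower bound.} Both terms in the minimum come from Fano's inequality applied to a packing of rank-$r$ matrices with $\infnorm{\cdot}\le\beta$. For the noise-limited term $\sigma\sqrt{rd/(|\Omega|\log d)}$, the plan is a Gilbert--Varshamov-style random packing of matrices $\bM^{(a)}=\alpha\bU^{(a)}(\bV^{(a)})^\top$ with $\bU^{(a)},\bV^{(a)}\in\{\pm 1\}^{d\times r}$. Standard counting yields $\log N \asymp rd$; Hoeffding plus a pairwise union bound delivers $\fronorm{\bW\had(\bM^{(a)}-\bM^{(b)})}^2 \gtrsim \alpha^2 r\sum_{ij}W_{ij}^2$ and $\|(\bM^{(a)}-\bM^{(b)})_\Omega\|_F^2 \lesssim \alpha^2 r|\Omega|$, so the pairwise KL divergence is $\lesssim \alpha^2 r|\Omega|/\sigma^2$. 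Equating this to $\log N$, solving for $\alpha$, and normalizing by $\fronorm{\bW^{(1/2)}}\asymp\sqrt{|\Omega|}$ gives the stated noise rate; the $\log d$ in the denominator absorbs simultaneous concentration of the random packing and the random sampling. For the bias-limited term $\beta\sqrt{d/(|\Omega|\log^3 d)}$, a much smaller packing (a rank-$1$, two-point construction such as $\bM=\beta\bv\ones^\top$ with $\bv\in\{\pm 1\}^d$ tailored to the rows of $\Omega$ that are least sampled) produces matrices that are essentially indistinguishable from $|\Omega|$ entries when $|\Omega|\ll d^2$; the polylogarithmic slack absorbs the concentration of $\sum_{ij}W_{ij}^2$ and of the row-sums of $\ind{\Omega}$ under the random draw of $\Omega$.

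\textbf{Flat case and main obstacle.} When $\bW$ is flat ($\max_{ij}W_{ij}\asymp\min_{ij}W_{ij}$), we have $\fronorm{\bW\had\bX}\asymp(\min_{ij}W_{ij})\fronorm{\bX}$ and $\fronorm{\bW^{(1/2)}\had\bX}\asymp(\min_{ij}W_{ij})^{1/2}\fronorm{\bX}$, so weighted-to-unweighted ratios are dimension-free, and the packing arguments translate directly into separation in $\fronorm{\bW^{(1/2)}\had\cdot}$ without the logarithmic slack required to handle weight spread, strengthening the bound to the flat rate. The main obstacle is in the lower bound: for non-flat $\bW$ one must show that pairwise weighted separation and pairwise on-$\Omega$ Frobenius norm concentrate \emph{uniformly} over all pairs in the packing, \emph{simultaneously} with high probability over the random $\Omega$, and it is this compound concentration that inflates the non-flat rate by $\log d$ factors.
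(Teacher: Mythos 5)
Your upper bound sketch matches the paper's proof: specialize the general rank-$r$ guarantee by bounding $\lambda$, $\mu$, and $\fronorm{\bW^{(1/2)}}^2\approx|\Omega|$ under $\Omega\sim\bW$ via matrix Bernstein (on the Hermitian dilation), scalar Bernstein with a union bound over rows and columns, and a Chernoff bound, respectively; this is exactly Lemmas~\ref{lem:lambdamu} and \ref{claim:Omegam} feeding into Theorem~\ref{lem:upperrankk}. That part is essentially identical to the paper.

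The lower bound sketch, however, departs from the paper in ways that create genuine gaps. First, your packing takes both $\bU^{(a)}$ and $\bV^{(a)}$ to be fresh random sign matrices. The paper's Lemma~\ref{lem:net} deliberately fixes one side $\bR\in\{\pm 1\}^{d_2\times r}$ (chosen so that $\bR^\top\bD_\bz\bR$ is close to $\onenorm{\bz}\bI$ in both Frobenius and operator norm) and randomizes only $\bL$. This is not a cosmetic choice: with $\bR$ fixed, the pairwise separation $\fronorm{\bH\had(\bX-\bX')}^2$ is a degree-two polynomial in the $\pm 1$ entries of $\bL-\bL'$, which Hanson--Wright controls. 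With both sides random, the separation is degree four in the underlying signs and Hoeffding plus a pairwise union bound is insufficient; you would need hypercontractive moment bounds or some other higher-order concentration, and you neither invoke such a tool nor explain why degree two suffices. Second, you propose a separate two-point construction $\bM=\beta\bv\ones^\top$ to obtain the bias-limited term. The paper does not split the argument this way: both terms of the $\min$ come from a single Fano application with a single net, by taking $\kappa$ to be the minimum of the KL-budget constraint $\kappa\lesssim\sigma\sqrt{\log|\cX|}/\max_\bX\fronorm{\bX_\Omega}$ and the feasibility constraint $\kappa\,\infnorm{\bX}\le\beta$ needed to place $\kappa\cX$ inside $\beta B_\infty$. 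Your two-point construction, if it were to be carried out, would need to be argued separately and doesn't obviously yield the stated exponent.

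Third, and most importantly, you misidentify where the extra $\log$ factors in the non-flat rate come from. You attribute them to ``compound concentration'' of the random packing and random $\Omega$, and you claim flatness removes the need for this. But the packing/concentration argument in Lemma~\ref{lem:net} does not change between the flat and non-flat cases. The paper's actual mechanism for non-flat $\bW$ (Theorem~\ref{thm:simOmegaLB}) is a dyadic binning argument: the coordinates of $\bw$ (resp.\ $\bu$) are grouped into $O(\log d_1)$ (resp.\ $O(\log d_2)$) dyadic blocks, a block of size $\gtrsim d_1/\log d_1 \times d_2/\log d_2$ on which $\bW$ is flat up to a factor of two is selected, and Lemma~\ref{lem:flatLB} is applied to that sub-block. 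The loss of the $\log$ factors in the non-flat bound comes precisely from the block being smaller than $d_1\times d_2$ (and from normalizing $\fronorm{\tilde\bW^{(1/2)}}$ against $\fronorm{\bW^{(1/2)}}$), not from any extra concentration cost. Without this restriction-to-a-flat-block idea, your proposal does not explain how to convert a lower bound for a random $\pm 1$ packing, which separates matrices uniformly across coordinates, into a lower bound against the weighted norm $\fronorm{\bW^{(1/2)}\had\cdot}$ when the weights vary by a factor of $\sqrt{d_1 d_2}$: the small-weight coordinates contribute negligibly to the weighted separation, and a generic packing wastes entropy there.
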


In particular, is $\sigma$ is on the order of $\beta$, the upper and lower bounds are approximately the same (up to logarithmic factors and factors of $r$).

Next, we state our results for approximately rank-$r$ matrices.

\begin{theorem}[Results for approxmately rank-$r$ matrices when $\Omega \sim \bW$, Informal]
Let $\bW \in \R^{d \times d}$ be a rank-1 matrix so that for all $i,j$, $1/d \leq W_{ij} \leq 1.$ 
Choose $\Omega \sim \bW$ as described above.

\textbf{Upper bound:}
With probability at least $1 - O(1/d)$ over the choice of $\Omega$, the following holds.
There is an algorithm $\cA$ so that for any $d \times d$ matrix $\bM \in \beta \sqrt{r} B_{\max}$,
$\mathcal{A}$ returns $\hat{\bM} = \cA( \bM_\Omega + \bZ_\Omega)$
so that with probability at least $1 - 1/d$ over the choice of $\bZ$,
\[
\frac{ \fronorm{ \bW^{(1/2)} \had ( \bM - \hat{ \bM } ) } }{ \fronorm{ \bW^{(1/2) } }} \lesssim
\beta \inparen{ \frac{ r^2 d }{|\Omega|} }^{1/4} \log^{1/2}(d)  +  \sqrt{\beta \sigma} \inparen{ \frac{ rd }{|\Omega| } }^{1/4} \log^{1/4}(d). \]

\textbf{Lower bound:} On the other hand, suppose additionally that $\bW$ is ``flat,'' in the sense that
the sense that the largest entry is no larger than a constant times the smallest entry.
Then with probability at least $1 - e^{-O( \fronorm{ \bW^{(1/2)} }^2) }$ over the choice of $\Omega$,
for any algorithm that only sees the values $\bM_\Omega + \bZ_\Omega$ and returns $\hat{M}$,
there is some $\bM \in \beta\sqrt{r} B_{\max}$ so that with probability at least $1/2$ over the choice of $\bZ$, 
\[ \frac{ \fronorm{ \bW^{(1/2)} \had \inparen{ \hat{\bM} - \bM } } } { \fronorm{ \bW^{(1/2)} } }  \gtrsim \sqrt{ \beta \sigma } \inparen{ \frac{r d }{ |\Omega| } }^{1/4}. \]
\end{theorem}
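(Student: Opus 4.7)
The plan is to derive both directions by specializing the earlier general results to the random sampling model $\Omega \sim \bW$. For the upper bound, I would apply Theorem~\ref{thm:appxrkkinformal} to the estimator $\hat{\bM}_1$, which yields
\[ \frac{\fronorm{\bW^{(1/2)} \had (\bM - \hat{\bM}_1)}}{\fronorm{\bW^{(1/2)}}} \lesssim \sqrt{\beta}\inparen{\frac{\beta r \lambda + \sigma \mu \sqrt{r\log d}}{\fronorm{\bW^{(1/2)}}}}^{1/2}, \]
so the task reduces to controlling the three quantities $\lambda$, $\mu$, and $\fronorm{\bW^{(1/2)}}$ under the randomness in $\Omega$. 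The flatness assumption on $\bW$ is not needed for this direction, only $1/d \leq W_{ij} \leq 1$.

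Controlling these quantities is a standard matrix concentration exercise. For the denominator, $\fronorm{\bW^{(1/2)}}^2 = \sum_{ij} W_{ij} = \E{|\Omega|}$, and a Chernoff bound gives $\fronorm{\bW^{(1/2)}}^2 \gtrsim |\Omega|$ with high probability (the hypothesis $W_{ij} \geq 1/d$ ensures $\E{|\Omega|} \geq d$, which is enough for concentration). For $\lambda = \opnorm{\bW^{(1/2)} - \bW^{(-1/2)}\had\ind{\Omega}}$, I would expand the argument as a sum of independent, mean-zero, rank-one matrices $\bX_{ij} = (W_{ij}^{-1/2}\ind{(i,j)\in\Omega} - W_{ij}^{1/2})\be_i\be_j^\top$, each with operator norm at most $W_{ij}^{-1/2} \leq \sqrt{d}$ and whose matrix variance satisfies $\opnorm{\sum_{ij}\EE[\bX_{ij}\bX_{ij}^\top]} \leq d$; Theorem~\ref{thm:matrixbernstein} then yields $\lambda \lesssim \sqrt{d}\log d$ with probability at least $1-1/d$. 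For $\mu^2 = \max_i \sum_j \ind{(i,j)\in\Omega}/W_{ij}$, I would apply a scalar Bernstein inequality row-by-row together with a union bound (each summand has mean $1$, maximum value $\leq d$, and variance $\leq d$), obtaining $\mu^2 \lesssim d\log d$. Substituting these three estimates into the displayed bound and using $\fronorm{\bW^{(1/2)}} \asymp \sqrt{|\Omega|}$ recovers the claimed upper bound (up to the precise logarithmic exponent on the noise term).

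For the lower bound, I would exploit the flatness assumption $\max_{ij}W_{ij} \lesssim \min_{ij}W_{ij}$ to reduce the weighted Frobenius error to an unweighted one up to constants: writing $W \asymp |\Omega|/d^2$, the normalized weighted error becomes $\fronorm{\bM - \hat{\bM}}/d$ up to constants, and $\Omega$ is (up to constants) a uniformly random subset of $[d]\times[d]$ of size $|\Omega|$. The desired rate $\sqrt{\beta\sigma}(rd/|\Omega|)^{1/4}$ coincides with known minimax lower bounds for noisy matrix completion over the max-norm ball $\beta\sqrt{r}B_{\max}$. My plan is to exhibit a packing of flat $\pm\beta$ matrices of rank $r$, which lie in $\beta\sqrt{r}B_{\max}$ by the bound $\maxnorm{\bM} \leq \sqrt{r}\infnorm{\bM}$, and then apply an information-theoretic tool such as Fano's inequality (or Assouad's method) to noisy samples of these matrices to obtain the required Frobenius separation.

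The main obstacle is obtaining the $1/4$ exponent in the lower bound rather than the $1/2$ exponent from the exactly rank-$r$ case. This characteristic slowdown in the max-norm setting reflects the fact that $B_{\max}$ is substantially richer than the rank-$r$ manifold, so a single hard instance no longer suffices and a careful packing argument must be balanced against a Kullback--Leibler bound on the noisy observations. Constructing a sufficiently separated packing inside $\beta\sqrt{r}B_{\max}$ that remains indistinguishable after observing only the entries in $\Omega$, and quantifying that indistinguishability uniformly in $|\Omega|$, is the delicate step.
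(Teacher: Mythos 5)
Your upper-bound plan matches the paper's: apply Theorem~\ref{lem:uppermaxball} and bound $\lambda$, $\mu$, and $\fronorm{\bW^{(1/2)}}$ exactly as in Lemmas~\ref{lem:lambdamu} and \ref{claim:Omegam}, via matrix Bernstein for $\lambda$, a row-by-row scalar Bernstein with union bound for $\mu$, and a Chernoff/Bernstein bound relating $m=\fronorm{\bW^{(1/2)}}^2=\EE|\Omega|$ to $|\Omega|$. The constants and dependencies you report ($\lambda\lesssim\sqrt{d}\log d$, $\mu^2\lesssim d\log d$, $\fronorm{\bW^{(1/2)}}^2\asymp|\Omega|$) are the ones the paper derives; this direction is essentially a correct reconstruction of the paper's argument.

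The lower-bound plan has a genuine gap, which you yourself flag but do not close. You propose packing with flat $\pm\beta$ matrices of rank $r$; running Fano/KL on that family reproduces exactly the rank-$r$ lower bound (Lemma~\ref{lem:flatLB}), which has separation scaling like $(rd/|\Omega|)^{1/2}$ --- not the $(rd/|\Omega|)^{1/4}$ rate claimed. You observe that the $1/4$ exponent requires ``balancing a packing argument against a KL bound,'' but that phrase describes every Fano argument; the missing idea is \emph{which} packing to use. The paper's resolution (Theorem~\ref{lem:flatLBapprox}) is to invoke Lemma~\ref{lem:net} with a \emph{larger, tuned rank parameter} $s = \lfloor (\beta/\sigma)\sqrt{mr/d_1}\rfloor$ in place of $r$. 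The resulting rank-$s$ signs-times-flat matrices have max-norm $\le s$, so after scaling by $\kappa = \min\{c''\sigma\sqrt{d_1/m},\,\beta\sqrt{r}/s\}$ they still lie in $\beta\sqrt{r}B_{\max}$ (this is precisely the extra ``room'' the max-norm ball gives over the rank-$r$ cone), while the net size grows to $\exp(\Theta(sd_1))$ and the pairwise separation scales like $\sqrt{sm}$. Optimizing $s$ equalizes the two terms in $\kappa$ and yields $\kappa\sqrt{s}\asymp\sqrt{\sigma\beta}(rd/m)^{1/4}$, i.e. the $1/4$ exponent. Without this higher-rank packing construction, your approach bottoms out at the rank-$r$ rate, so the proof as proposed does not establish the stated lower bound.
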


Again, this lower bound is tight up to logarithmic factors and factors of $r$ in the case that $\sigma \approx \beta$ and $\bW$ is reasonably ``flat.''

\subsection{Case study: When $\lambda$ is large}


Next we focus on the case when $\lambda$ is large. We assume the sampling pattern is symmetric so we may consider real eigenvalues. In order to prove lower bounds here, we make a few assumptions, in particular that the top \em two \em eigenvectors of $\ind{\Omega}$ are ``flat,'' in the sense that the largest element is no larger than a constant times the smallest.  

\begin{example}\label{ex:fatcycle}
Our running example is the following extreme sampling pattern:
\begin{equation}\label{eq:badgraph}
\ind{\Omega_t} := \begin{bmatrix} 
1 & 1 & 1 & 0 & \cdots & 0 & 0 & 1 & 1 \\
1 & 1 & 1 & 1 & \cdots & 0 & 0 & 0 & 1 \\
1 & 1 & 1 & 1 & \cdots & 0 & 0 & 0 & 0 \\
0 & 1 & 1 & 1 & \cdots & 0 & 0 & 0 & 0 \\
0 & 0 & 1 & 1 & \cdots & 0 & 0 & 0 & 0 \\
\vdots & & & & \ddots & & & & \vdots \\
0 & 0 & 0 & 0 & \cdots & 1 & 1 & 0 & 0 \\
0 & 0 & 0 & 0 & \cdots & 1 & 1 & 1 & 0 \\
0 & 0 & 0 & 0 & \cdots & 1 & 1 & 1 & 1 \\
1 & 0 & 0 & 0 & \cdots & 1 & 1 & 1 & 1 \\
1 & 1 & 0 & 0 & \cdots & 0 & 1 & 1 & 1 
\end{bmatrix} \in \{0,1\}^{d \times t}
\end{equation}
where there are $t$ ones per row and $t$ is odd.  That is, $\ind{\Omega_t}$ is the symmetric circulant matrix whose first row is
\[ \bz = ( \underbrace{ 1, 1, \ldots, 1 }_{(t+1)/2}, 0, 0, \ldots, 0, 0, \underbrace{1 , 1, \ldots, 1 }_{(t-1)/2} ). \]
It is well-known that the eigenvectors of $\ind{\Omega_t}$ are given by the rows of the discrete cosine transform (in particular, they satisfy the flatness condition above) and that the eigenvalues are given by the elements of $\mathbf{F} \bz$ where $\mathbf{F}$ is the discrete Fourier transform.  In particular, the largest eigenvalue of $\ind{\Omega_t}$ is $t$, and the second largest is 
\[ \lambda_2( \ind{ \Omega_t} ) = \sum_{\ell = -(t-1)/2 }^{ (t+1)/2 } \omega^\ell = \omega^{(1-t)/2} \inparen{ \frac{ \omega^{t+1} - 1 }{ \omega - 1 } }, \]
where
\[ \omega = e^{-2\pi i / d } \]
is a primitive $d$'th root of unity.
Now, we may compute
\[ \inabs{ \lambda_2( \ind{\Omega_t} ) } = \inabs{ \frac{ \omega^{t+1} - 1 }{ \omega - 1 } } = \sqrt{ \frac{ 1 - \cos( 2t\pi/ d ) }{ 1 - \cos( 2\pi / d ) } }, \]
which is at least
\[ \sqrt{ \frac{ 1 - \cos( 2t\pi/ d ) }{ 1 - \cos( 2\pi / d ) } } 
\geq t\inparen{ 1 - c (t/d)^2   } \]
for some constant $c$ (using the Taylor expansion for cosine to bound both terms).
In particular, it is quite close to $\lambda_1 = t$.  This means that even if we choose $\bW$ to be the rank-1 matrix that is as close as possible to $\ind{\Omega_t}$ (which in this case would be $\bW = \frac{t}{d} \mathbf{1} \mathbf{1}^T$), we still have
\[ \lambda = \opnorm{ \bW^{(1/2)} - \bW^{(-1/2)} \had \ind{\Omega} } = \opnorm{ \sqrt{ \frac{t}{d} } \mathbf{1} \mathbf{1}^T - \sqrt{ \frac{d}{t} } \ind{\Omega_t} } = \sqrt{ \frac{d}{t} } \lambda_2 = \Theta( \sqrt{ dt } ). \]
Thus, $\lambda$ is quite large.

\end{example}

Now, returning the the general case (provided that the top eigenvectors of $\ind{\Omega}$ are flat), 
suppose that we do choose $\bW$ to be the best rank-$1$ approximation to $\ind{\Omega}$.  This is a reasonable choice because it is an easy-to-compute matrix which intuitively makes $\lambda$ small.
Under these assumptions, it is not hard to work out what happens to the upper bound, which we do in Theorem~\ref{thm:gapupper}.  We are also able to prove a lower bound in Theorem~\ref{thm:gaplower}.  We informally record these results below.

\begin{theorem}[Bounds for rank-$k$ matrices so that $\ind{\Omega}$ is balanced and has a big spectral gap, informal]
Fix $\Omega \in [d] \times [d]$.  Suppose that $\bW$ is the best rank-1 approximation to $\ind{\Omega}$ and suppose that $\bW$ is flat in the sense that the largest entry is no larger than a constant times the smallest entry.  Suppose also that the second eigenvector of $\ind{\Omega}$ is flat in the same sense.  Suppose that the entries of $\bZ$ are i.i.d. $\gN(0,\sigma^2)$.

Suppose that $\bM \in \RR^{d \times d}$ has rank $r$ and $\infnorm{\bM} \leq \beta$, and let $\bY = \bM + \bZ$ where the entries of $\bZ$ are i.i.d. $\gN(0,\sigma^2)$.

\textbf{Upper bound:}  There is an algorithm $\cA$ so that for any rank-$r$ matrix $\bM$ with $\infnorm{\bM} \leq \beta$, $\cA$ returns $\hat{\bM} = \cA( \bM_\Omega + \bZ_\Omega )$ so that
 with probability at least $1 - 1/d$ over the choice of $\bZ$,
\[ \frac{\fronorm{ \bW^{(1/2)} \had ( \bM - \hat{ \bM } ) }}{ \fronorm{ \bW^{(1/2)} } } \lesssim  r\beta \inparen{ \frac{ \lambda_1 }{ \lambda_2 } } + \sigma \sqrt{ \frac{ r\log(d) }{ \lambda_1 } }. \] 

\textbf{Lower bound:} On the other hand, for any such algorithm $\cA$ that only sees the values $\bM_\Omega + \bZ_\Omega$ and returns $\hat{\bM}$, there is some rank $r$ matrix $\bM$ with $\infnorm{\bM} \leq \beta$ so that with probability at least $1/2$ over the choice of $\bZ$, 
\[ \frac{ \fronorm{ \bW^{(1/2)} \had ( \bM - \hat{ \bM} ) } }{ \fronorm{ \bW^{(1/2)} } } \gtrsim \min \inset{ \frac{ \beta }{ \sqrt{ r\log(d) } } , \sigma \sqrt{ \frac{ r }{ \lambda_1 - \lambda_2 } } }. \]
\end{theorem}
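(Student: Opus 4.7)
I would derive the upper bound by specializing Theorem~\ref{lem:upperrankk} to the choice $\bW = \lambda_1 \bu_1 \bu_1^T$, where $\bu_1$ is the (flat) top eigenvector of $\ind{\Omega}$. Under the flatness assumption we have $W_{ij} \asymp \lambda_1/d$ uniformly, from which three routine computations give: $\fronorm{\bW^{(1/2)}}^2 = \sum_{i,j} W_{ij} \asymp \lambda_1 d$; $\mu^2 \asymp (d/\lambda_1)\cdot\lambda_1 = d$ (because flatness of $\bu_1$ forces the row-degrees of $\ind{\Omega}$ to concentrate at $\lambda_1$); and $\lambda = \opnorm{\bW^{(1/2)} - \bW^{(-1/2)} \had \ind{\Omega}} \asymp \sqrt{d/\lambda_1}\, \opnorm{\ind{\Omega} - \lambda_1 \bu_1\bu_1^T} = \sqrt{d/\lambda_1}\,\lambda_2$, since the leading rank-1 spectral component of $\ind{\Omega}$ cancels after factoring out $\sqrt{d/\lambda_1}$. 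Plugging into Theorem~\ref{lem:upperrankk} gives $\beta r\,(\lambda_2/\lambda_1) + \sigma\sqrt{r\log(d)/\lambda_1}$; the ratio $\lambda_1/\lambda_2$ in the statement appears to be a typographical inversion of $\lambda_2/\lambda_1$.

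\textbf{Lower bound.} I would prove the lower bound via a minimax hypothesis-testing argument (Le Cam or Fano), with the conceptual core being a specific \emph{antisymmetric} construction that exposes the spectral gap $\lambda_1-\lambda_2$. The naive choice $\bM = \gamma\,\bu_2\bu_2^T$ does not work: the entrywise square $\bu_2\had\bu_2$ is not orthogonal to $\bu_1\propto\ones$, and hence $\fronorm{\bM_\Omega}^2 \asymp \gamma^2\lambda_1/d$, producing a rate depending on $\lambda_1$ alone. Instead I would take $\bM = \gamma(\bu_1\bu_2^T - \bu_2\bu_1^T)$. Expanding $\fronorm{\bM_\Omega}^2$ in the eigenbasis of $\ind{\Omega}$ yields three summands: $(\bu_1\had\bu_1)^T\ind{\Omega}(\bu_2\had\bu_2)$ and its transpose, each of size $\asymp\lambda_1/d$ (since $\bu_1\had\bu_1 \propto \ones$ and $\ind{\Omega}\ones\asymp\lambda_1\ones$), plus the signed cross term $-2(\bu_1\had\bu_2)^T\ind{\Omega}(\bu_1\had\bu_2) \asymp -2\lambda_2/d$ (since $\bu_1\had\bu_2\propto\bu_2$ is a $\lambda_2$-eigenvector). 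The signed combination cancels the $\lambda_1$-contributions and yields the key gap-revealing identity $\fronorm{\bM_\Omega}^2 \asymp \gamma^2(\lambda_1-\lambda_2)/d$.

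To promote this to the rank-$r$ bound, I would extend to $\bM(\bxi) = \gamma\sum_{k=1}^{r}\xi_k\,(\bu_1\bu_{k+1}^T - \bu_{k+1}\bu_1^T)$ for sign vectors $\bxi\in\{\pm1\}^r$, using eigenvectors $\bu_3,\ldots,\bu_{r+1}$ of $\ind{\Omega}$ with eigenvalues at most $\lambda_2$. A short computation along the same lines shows that the $r$ summands are mutually orthogonal after Hadamard restriction to $\Omega$, so pairwise KL divergences combine additively. Applying Fano's inequality to a Varshamov--Gilbert packing of the sign patterns then yields two constraints on $\gamma$: the $\infnorm\le\beta$ constraint (by triangle inequality) gives the amplitude-limited term $\beta/\sqrt{r\log d}$, while the KL constraint coming from the gap-revealing bound yields the noise-limited term $\sigma\sqrt{r/(\lambda_1-\lambda_2)}$ after normalization by $\fronorm{\bW^{(1/2)}}\asymp\sqrt{\lambda_1 d}$.

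\textbf{Main obstacle.} The technical heart of the proof is the antisymmetric gap-revealing construction: verifying the cancellation computation, establishing mutual $\Omega$-orthogonality of the $r$ cross-terms (which relies crucially on the flatness of $\bu_1$ and $\bu_2$, together with analogous structure for $\bu_3,\ldots,\bu_{r+1}$), and controlling $\infnorm{\bM(\bxi)}$ uniformly over sign patterns---possibly via a random-signs concentration argument if the deterministic triangle bound $\lesssim \gamma r/d$ is too weak. Calibrating the final normalization constants in Fano so that the correct $(\lambda_1-\lambda_2)$-dependence and the factor $r$ both emerge in the weighted-Frobenius metric is the other delicate step.
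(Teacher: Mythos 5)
Your upper bound is essentially the paper's proof of Theorem~\ref{thm:gapupper}: specialize Theorem~\ref{lem:upperrankk} by estimating $\lambda, \mu, \fronorm{\bW^{(1/2)}}$ under the flatness assumption, and you correctly identify that $\lambda_1/\lambda_2$ in the informal statement should read $\lambda_2/\lambda_1$ (the formal Theorem~\ref{thm:gapupper} has the latter). The lower bound, however, has a gap that would cost you a factor of $\sqrt{d}$.

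Your packing is too small. The family $\bM(\bxi) = \gamma\sum_{k=1}^{r}\xi_k(\bu_1\bu_{k+1}^T - \bu_{k+1}\bu_1^T)$ is parameterized by $\bxi \in \{\pm1\}^r$ and in fact collapses to a rank-$2$ matrix of the form $\bu_1\bv^T - \bv\bu_1^T$ with $\bv = \gamma\sum_k\xi_k\bu_{k+1}$; a Varshamov--Gilbert packing of the sign patterns has $\log N \asymp r$. Running Lemma~\ref{lem:fano} with $\log N \asymp r$ and $\max_{\bX}\fronorm{\bX_\Omega}^2 \asymp r(\lambda_1-\lambda_2)/d$, $\min_{\bX\neq\bX'}\fronorm{\bH\had(\bX-\bX')}^2 \asymp r\lambda_1/d$ gives $\kappa \asymp \sigma\sqrt{d/(\lambda_1-\lambda_2)}$ and a normalized lower bound $\asymp \sigma\sqrt{r/(d(\lambda_1-\lambda_2))}$, not the stated $\sigma\sqrt{r/(\lambda_1-\lambda_2)}$. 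The paper instead invokes Lemma~\ref{lem:net}: it fixes a rank-$1$ mask $\bA$ supported on the off-diagonal block determined by the sign pattern of $\bv_2$ and packs matrices $\bA \had \bL\bR^T$ with random $\bL \in \{\pm1\}^{d\times r}$, achieving $\log N \asymp rd$; this extra factor of $d$ in $\log N$ is exactly what closes the $\sqrt{d}$ gap.

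Second, your ``gap-revealing'' cancellation $\fronorm{\bM_\Omega}^2 \asymp \gamma^2(\lambda_1-\lambda_2)/d$ is exact only when $\bu_1 \propto \ones$ (perfect regularity), since it needs $\bu_1\had\bu_1 \propto \ones$ and $\bu_1\had\bu_2 \propto \bu_2$ so that $(\bu_1\had\bu_1)^T\ind{\Omega}(\bu_2\had\bu_2) - (\bu_1\had\bu_2)^T\ind{\Omega}(\bu_1\had\bu_2)$ telescopes to $(\lambda_1-\lambda_2)/d$. Under the paper's assumption of flatness only up to a constant $C$, the positive and negative pieces are each only controlled up to $C^{O(1)}$ factors and need not cancel; the difference could be of order $\lambda_1/d$ even when $\lambda_2 \to \lambda_1$. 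The paper avoids this by not working in the eigenbasis at all: it writes $\bv_2 = (\bh, -\bg)$, proves the \emph{exact} quadratic-form inequality $\bh^T\bB\bg \leq \frac{\alpha^2(\lambda_1-\lambda_2)}{4}$ for the off-diagonal block $\bB$, and sets $\fronorm{\bA_\Omega}^2 = \bh^T\bB\bg / (\infnorm{\bh}\infnorm{\bg})$; flatness is then used only for the harmless bookkeeping $\alpha \gtrsim 1$ and $\infnorm{\bh}\infnorm{\bg} \asymp 1/d$. Your construction would also require flatness of $\bu_3,\ldots,\bu_{r+1}$, which is not assumed in the theorem.
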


We note that the lower bound and the upper bound do not match.  However, the lower bound does capture \em some \em dependence on the gap between $\lambda_1$ and $\lambda_2$, and in particular the bounds match when this gap is very small.
In particular, if $\lambda_2 = \lambda_1 - O(1)$ and $\sigma \approx \beta$ then the upper bound essentially reads
\[ \frac{\fronorm{ \bW^{(1/2)} \had ( \bM - \hat{ \bM } ) }}{ \fronorm{ \bW^{(1/2)} } } \lesssim  r\beta, \]
which is trivial given that estimating $\hat{\bM} = \mathbf{0}$ will result in a weighted per-entry error bound of at most $\beta$.  However, the lower bound shows that in this case a non-trivial guarantee is impossible: it essentially reads
\[ \frac{ \fronorm{ \bW^{(1/2)} \had ( \bM - \hat{ \bM} ) } }{ \fronorm{ \bW^{(1/2)} } } \gtrsim \frac{ \beta }{ \sqrt{ r \log(d) } }. \]
Thus, in this extreme case, the upper and lower bounds match up to polynomial factors in $r$ and $\log(d)$.  

\begin{example}
Returning to our example of $\Omega_t$ above, we see that 
\[ \lambda_1 - \lambda_2 = O\left( \frac{ t^3 }{ d^2 } \right) \]
for some constant $t$.  This is $O(1)$ when $t = d^{2/3}$.  Thus we conclude from the analysis above that for this particular sampling pattern $\Omega_{d^{2/3}}$, one cannot recover even a rank-$1$ matrix $\bM$ in the presence of Gaussian noise significantly better than by just guessing $\hat{\bM} = \mathbf{0}$.  We note that in this case, the number of observations is $d^{5/3}$, which in the uniform sampling case would be more than enough to recover a rank-$1$ matrix.
\end{example}

\section{Related Work}\label{sec:related}

There are two lines of work that are related to ours.  The first is a line of work on \em deterministic \em or \em universal \em matrix completion.  In this line of work, one asks: what guarantees can one get for a sampling pattern $\Omega$ that are simultaneously valid on \em all \em matrices $\bM$.  (Notice that this question is interesting even if $\Omega$ is random to begin with: there is a big difference between the universal guarantee that ``with high probability, $\Omega$ is good for all $\bM$,'' and the randomized guarantee that ``for all $\bM$, $\Omega$ is good with high probability.'')

The second is a line of work where $\Omega$ is sampled randomly, but from biased distributions.  Our first case study (when $\Omega$ is drawn according to a weight matrix $\bH$) does give universal guarantees, but our results are also interesting from the perspective of sampling from biased distributions.

We briefly review both of these areas in more detail below.

\subsection{Deterministic/Universal matrix completion}

The works of Heiman, Schechtman, and Shraibman~\cite{heiman2014deterministic}, Bhojanapalli and Jain~\cite{Bhojanapalli2014} and Li, Liang, and Risteski~\cite{li2016recovery} relate the sampling pattern $\Omega$ to a graph whose adjacency matrix is given by $\ind{\Omega}$.  
Those works show that as long as this pattern is suitably close to an expander graph---in particular, if the deterministic sampling pattern is sufficiently uniform---then efficient recovery is possible, provided that the matrix $\bM$ is sufficiently incoherent.

There are also works which aim to understand when there is a unique (or only finitely many) low-rank matrices $\bM$ that can complete $\bM_\Omega$ as a function of the sampling pattern $\Omega$.  For example, \cite{PBN16} gives conditions on $\Omega$ under which there are only finitely many low-rank matrices that agree with $\bM_\Omega$, \cite{SXZ18} give a condition under which the matrix can be locally uniquely completed. The works \cite{AAW18} generalizes these results to the setting where there is sparse noise added to the matrix.  The works \cite{PN16, AWA17} study when rank estimation is possible as a function of a deterministic $\Omega$, and \cite{AW17} studies when a low-rank tensor can be uniquely completed.  Recently, \cite{Chat19} gave a combinatorial condition on $\Omega$ which characterizes when a low-rank matrix can be recovered up to a small error in the Frobenius norm from observations in $\Omega$ and showed that nuclear norm minimization will approximately recover $\bM$ whenever it is possible.

So far, all the works mentioned are interested in when recovery of the entire matrix (as measured by un-weighted Frobenius norm) is possible.  In our work, we are also interested in the case when such recovery is \em not \em possible.  To that end, we introduce a weighting matrix $\bH$ to capture this.  See \cite{Kiraly2015} for an interesting alternative, that uses an algebraic approach to answer when an entry can be completed or not.

The notion of weights has been studied before.
The work \cite{heiman2014deterministic} of Heiman et al. shows that for any weighting matrix $\bH$, there is a deterministic sampling pattern $\Omega$ and an algorithm that observes $\bM_\Omega$ and returns $\hat{\bM}$  so that $\fronorm{ \bH \had ( \bM - \bM ) }$ is small.  
Their algorithm can be informally described as finding the matrix with the smallest $\gamma_2$-norm that is correct on the observed entries.
Lee and Shraibman~\cite{lee2013matrix} start with this framework, 
and study the more general class of algorithms that find the ``simplest'' matrix that is correct on the observed entries, where ``simplest'' can mean of smallest norm, smallest rank, or a broad class of definitions.  (We note that this algorithm is not efficient in general).
That work gives a way of measuring which deterministic sampling patterns $\Omega$ are good with respect to a weight matrix $\bH$. 
Similar to our work, they introduce a parameter which measures the complexity of the sampling pattern $\Omega$.  Unlike our work, their parameter is a bit more complicated, and is obtained by solving a semidefinite program involving $\Omega$.  They show that if this parameter is small, then the entries of the matrix can be recovered with appropriate weights.  Moreover, they show, given $\Omega$, how to efficiently compute a weight matrix $\bH$ so that the performance of the algorithm is optimal. 

To summarize, there are a few main differences between our work and previous work:
\begin{itemize}
	\item We are interested in cases where it may not be easy to estimate $\bM$ in Frobenius norm from the noisy samples $\bM_\Omega + \bZ_\Omega$, which was the goal in~\cite{Bhojanapalli2014,li2016recovery}.  (And certainly we may not be able to uniquely recover $\bM$, as is the goal in \cite{PN16, AWA17}).
	\item We are interested in efficient algorithms, while the algorithms of \cite{lee2013matrix} need not be efficient. 
	 In particular, we focus on debiased projection-based algorithms.  These algorithms are extremely simple (computationally and intuitively) and are provably optimal in some cases when there is noise or when $\bM$ need not be exactly low-rank.  However, in the non-noisy exactly low-rank case, our algorithm need not recover the matrix exactly; the algorithm of \cite{Bhojanapalli2014} is able to do this.
	\item We are interested in \em rank-1 \em weighting patterns.  This is more restrictive than the works \cite{lee2013matrix, heiman2014deterministic}, but allows us both to obtain efficient algorithms and to prove lower bounds.
\end{itemize}

\subsection{Weighted matrix completion and matrix completion from biased samples}
Weighted matrix completion has appeared in several works (see e.g. \cite{heiman2014deterministic,lee2013matrix,negahban2010restricted,eftekhari2016weighted}) under the assumption of random biased sampling.  The connection between weighting and biased sampling is most easily expressed in the supervised learning setup.  Indeed, let $\bD \in \R^{d \times d}$ encode a random distribution over matrix indices so that
\[0 \leq D_{i,j} \leq 1, \qquad \text{and} \qquad \sum_{i,j} D_{i,j} = 1.\]
Let one observation take the form $Y_{i,j} = M_{i,j} + Z_{i,j}$ where $(i,j)$ is sampled randomly according to $\bD$ and suppose you are given $m$ independent observations of this form (allowing repetition of matrix entries).  Considering squared \textit{loss function}, the \textit{excess risk} of an estimator $\hat{\bM}$ is
\[\E{(\hat{M}_{i,j} - Y_{i,j})^2 - (M_{i,j} - Y_{i,j})^2} = \sum_{i,j} D_{i,j}(\hat{M}_{i,j} - M_{i,j})^2 = \fronorm{\bD^{1/2} \had (\hat{\bM} - \bM)}^2.\]
Bounding the excess risk then gives a weighted error bound for the estimator.  

In \cite{negahban2010restricted}, the authors consider the case when rank$(\bD)$ = 1, which almost corresponds with the random model given in our Section \ref{sec: small lambda}.  They identify a certain \textit{restricted eigenvalue condition} which holds with high probability under the random model.  In the random model, there error bounds are similar to the ones in our paper.  However, the \textit{restricted eigenvalue condition} is not known to be verifiable in polynomial time.  They also give a lower bound, essentially matching ours, but under the assumption of uniformly random sampling.

Several other papers consider random sampling without making the assumption that rank($\bD$) = 1.  In \cite{klopp2012noisy}, the authors consider nuclear-norm minimization in the case when the sampling distribution is not uniform.  They give unweighted error bounds which degrade (as they should) as the sampling distribution becomes less uniform.  In \cite{cai2016matrix}, the authors allow general sampling distribution and consider the least squares estimator under a max-norm constraint.  By bounding the \textit{Rademacher complexity} of the max-norm ball, they bound the excess risk.  They are also able to extend this to the \textit{binary} setup \cite{cai2013max}.  Much of this analysis was based on previous works analyzing the max norm \cite{srebro2005rank, foygel2011concentration}.  

Although our focus is not on random sampling patterns, as we discuss below in Section \ref{sec: small lambda}, our analysis does imply some interesting consequences for that setting as well. In particular, one may consider the random model $\Omega \sim \bW$ by which we mean that the pattern $\Omega$ is obtained by sampling the (i,j)th entry with probability $W_{ij}$. In \cite{negahban2010restricted}, the authors consider this random model for a rank-1 matrix $\bW$ and show that when $\bW \had \bM$ is (nearly) low rank and not too spiky, the solution $\hat{\bM}$ of a semidefinite program (SDP) yields a small error $\fronorm{ \bW^{(1/2)} \had ( \bM - \hat{\bM} ) }$.  
They also provide a theoretical lower bound for this setting, although unlike our results below, theirs becomes trivial when the spectral gap is large. In our work however, we focus on efficient projection algorithms for recovery rather than SDPs. In addition, we are able to provide uniform results in the sense that they hold with high probability for all matrices. 

Other works that consider the random model $\Omega \sim \bW$ include \cite{srebro2010collaborative}, where in fact like our work, the authors also consider the loss function $\fronorm{ \bW^{(1/2)} \had (\bM - \hat{\bM} ) }$, but without assuming $\bW$ is low-rank. There, the authors consider random but non-uniform sampling pattern distributions (rather than fixed patterns as in our work), and establish lower bounds for this model using trace-norm minimization. In fact, \cite{srebro2005generalization} shows that the trace norm is a good proxy for the rank in terms of sampling guarantees in the sense that if there does happen to be a matrix $\bM*$ that is close to $\bM$ in this weighted Frobenius norm, then (computationally intractable) minimization with rank constraints can recover $\hat{\bM}$ approximately as close to $\bM$ as $\bM*$ is, using on the order of $rn$ samples via $\Omega \sim \bW$.  In some sense, our results can be viewed as a generalization to those of \cite{srebro2010collaborative}, in that our lower bounds hold for any algorithm and the random model can be viewed as generating a special case of our result. 

Our results can also be viewed as generalizations of those that use alternative sampling strategies adapted for e.g. coherence matrices. In \cite{CBSW15},  the authors show that by sampling according to \textit{leverage scores}, one can recover coherent matrices using nuclear norm minimization.  Our setting can be cast in this framework, which discuss in detail in Section~\ref{sec:nogapapp}. However, this again isn't our main focus, as we are focusing on efficient algorithms and establishing universal and uniform lower bounds. Nonetheless, one can construct appropriate weight matrices $\bW$ that yield sampling distributions related to the leverage scores.  See Section \ref{sec:nogapapp} for comparison and more discussion.

Other works that incorporate non-uniform sampling include \cite{meka2009matrix}, which proposes a graph-theoretic algorithm for matrix completion when the entries are power-law distributed. Lastly, 
\cite{liu2018samples} proposes a so-called \textit{isomeric} property for sampling patterns, viewed as a generalization of low-rankness, that guarantees (exact) matrix completion. There, the authors show that uniform sampling implies the isomeric condition, but this condition is a weaker assumption than uniformity, and propose a  Schatten quasi-norm induced method for recovery. We again take a different approach than these works, focusing on simple recovery and universal bounds.

\section{General Upper Bounds}\label{sec:upper}
In this section we prove general upper bounds for weighted recovery of low-rank, or approximately low-rank, matrices from deterministic sampling patterns.

\subsection{Bounds for rank-$r$ matrices}

\begin{theorem}[General upper bound for rank-$r$ matrices]\label{lem:upperrankk}
Let $\bW = \bw \bu^T \in \RR^{d_1 \times d_2}$ have strictly positive entries, and fix $\Omega \subseteq [d_1] \times [d_2]$.  Suppose that $\bM \in \RR^{d_1 \times d_2}$ has rank $r$.  Suppose that $Z_{ij} \sim \gN(0,\sigma^2)$ and let 
\[ \hat{ \bM } = \bW^{(-1/2)} \had \argmin_{ \text{rank}(\bX) = r } \norm{ \bX - \bW^{(-1/2)} \had (\bM_\Omega + \bZ_\Omega) }. \]
Then with probability at least $1 - 1/(d_1 + d_2)$ over the choice of $\bZ$,
\[ \fronorm{ \bW^{(1/2)} \had ( \bM - \hat{ \bM } ) } \leq 2\sqrt{2}r \lambda \infnorm{ \bM } +  4\sqrt{2} \sigma \mu \sqrt{r \log(d_1 + d_2 )}, \]
where $\lambda$ and $\mu$ are as in \eqref{eq:lambda} and \eqref{eq:mu}, respectively.
\end{theorem}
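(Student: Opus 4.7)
The plan is to reduce the whole statement to a rank-$r$ approximation bound in ``preconditioned'' coordinates. Set
\[ \bX^* := \bW^{(1/2)} \had \bM \qquad \text{and} \qquad \bA := \bW^{(-1/2)} \had (\bM_\Omega + \bZ_\Omega), \]
and let $\hat{\bX}$ denote the rank-$r$ minimizer, so that $\hat{\bM} = \bW^{(-1/2)} \had \hat{\bX}$. Two facts are immediate: (i) since $\bW^{(1/2)}$ is rank-one with strictly positive entries, Hadamard multiplication by it preserves rank, so $\bX^*$ has rank at most $r$ and is feasible for the minimization; (ii) the target error is $\fronorm{\bW^{(1/2)} \had (\bM - \hat{\bM})} = \fronorm{\bX^* - \hat{\bX}}$ exactly. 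So it suffices to bound $\fronorm{\hat{\bX} - \bX^*}$.

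For this I would pass through the operator norm. Since $\hat{\bX} - \bX^*$ has rank at most $2r$, we have $\fronorm{\hat{\bX} - \bX^*} \leq \sqrt{2r}\,\opnorm{\hat{\bX} - \bX^*}$. Feasibility of $\bX^*$ combined with the fact that the best rank-$r$ approximation to $\bA$ is simultaneously optimal in operator and Frobenius norm (Eckart--Young) gives $\opnorm{\hat{\bX} - \bA} = \sigma_{r+1}(\bA) \leq \opnorm{\bA - \bX^*}$, and then the triangle inequality yields
\[ \fronorm{\hat{\bX} - \bX^*} \leq 2\sqrt{2r}\,\opnorm{\bA - \bX^*}. \]

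It remains to bound $\opnorm{\bA - \bX^*}$, which expands as
\[ \bA - \bX^* = \bigl(\bW^{(-1/2)} \had \ind{\Omega} - \bW^{(1/2)}\bigr) \had \bM \;+\; \bW^{(-1/2)} \had \bZ_\Omega, \]
and the first Hadamard factor has operator norm exactly $\lambda$ by definition. For the signal term I would use a Schur--Hadamard style inequality: writing any factorization $\bM = \bU \bV^T$ and expanding the bilinear form $\bx^T((\bD)\had \bM) \by$ over the $r$ rank-one components, Cauchy--Schwarz gives $\opnorm{\bD \had \bM} \leq \|\bU\|_{2,\infty}\|\bV\|_{2,\infty}\opnorm{\bD}$; taking the infimum over factorizations and then invoking $\maxnorm{\bM} \leq \sqrt{r}\infnorm{\bM}$ (cited in Section~\ref{sec:prelims}) gives $\opnorm{\bD \had \bM} \leq \sqrt{r}\,\infnorm{\bM}\,\lambda$. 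For the noise term, I would write $\bW^{(-1/2)} \had \bZ_\Omega = \sum_{(i,j) \in \Omega} (Z_{ij}/\sqrt{W_{ij}})\,\be_i\be_j^T$ and apply the matrix Gaussian series bound (Theorem~\ref{thm:matrixgaussian}). A short computation shows both $\sum \bX_{ij}\bX_{ij}^T$ and $\sum \bX_{ij}^T\bX_{ij}$ are diagonal matrices whose diagonal entries are $\sigma^2 \sum_j \ind{(i,j)\in\Omega}/W_{ij}$ and the row-swapped version; each has operator norm at most $\sigma^2\mu^2$, so tuning $t$ to make the tail $1/(d_1+d_2)$ delivers $\opnorm{\bW^{(-1/2)} \had \bZ_\Omega} \leq 2\sigma\mu\sqrt{\log(d_1+d_2)}$ with the stated probability. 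Substituting the two operator-norm bounds back into the display above gives exactly the claimed inequality.

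The only step that really requires thought is the Schur--Hadamard bound $\opnorm{\bD \had \bM} \leq \maxnorm{\bM}\,\opnorm{\bD}$; everything else is mechanical, being either a standard Eckart--Young reduction to rank-$2r$ or bookkeeping around Theorem~\ref{thm:matrixgaussian} applied to a rank-one sum indexed by $\Omega$.
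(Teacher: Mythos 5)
Your proof is correct and mirrors the paper's: reduce to operator norm via the rank-$2r$ bound, use the optimality of $\hat{\bX}$ to pick up a factor of two, then control the operator-norm discrepancy by the Schur multiplier bound $\opnorm{\bD \had \bM} \leq \maxnorm{\bM}\,\opnorm{\bD}$, the estimate $\maxnorm{\bM} \leq \sqrt{r}\,\infnorm{\bM}$, and the matrix Gaussian series inequality. Your detour through Eckart--Young and $\sigma_{r+1}(\bA)$ is simply an explicit phrasing of the argmin step (and usefully sidesteps which norm the argmin uses), and your Cauchy--Schwarz derivation of the Schur multiplier bound supplies detail the paper leaves implicit.
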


\begin{proof}
Let $\bY = \bM + \bZ$.
Observe that $\bM, \hat{\bM}$ are both rank $r$ and hence $\bW^{(1/2)} \had ( \bM - \hat{\bM} )$ is at most rank $2r$.  Thus,
\begin{align*}
\fronorm{ \bW^{(1/2)} \had ( \bM - \hat{ \bM } ) } &\leq \sqrt{2r} \opnorm{ \bW^{(1/2)} \had (\bM - \hat{\bM}) } \\
&\leq \sqrt{2r} \inparen{ \opnorm{ \bW^{(1/2)} \had \hat{\bM} - \bW^{(-1/2)} \had \bY_\Omega }  + \opnorm{ \bW^{(1/2)} \had \bM - \bW^{(-1/2)} \had \bY_\Omega } } \\
&\leq 2\sqrt{2r} \opnorm{ \bW^{(1/2)} \had \bM - \bW^{(-1/2)} \had \bY_\Omega }, 
\end{align*}
using the definition of $\hat{\bM}$ in the final line.  Then we bound 
\begin{align*}
\opnorm{ \bW^{(1/2)} \had \bM - \bW^{(-1/2)} \had \bY_\Omega }
&\leq \opnorm{ \bW^{(1/2)} \had \bM - \bW^{(-1/2)} \had \bM_\Omega } + \opnorm{ \bW^{(-1/2)} \had \bZ_\Omega } \\
&= \opnorm{ \bM \had \inparen{ \bW^{(1/2)} - \bW^{(-1/2)} \had \ind{\Omega} } } + \opnorm{ \bW^{(-1/2)} \had \bZ_\Omega } \\
&\leq \maxnorm{ \bM } \cdot \lambda +  \opnorm{ \bW^{(-1/2)} \had \bZ_\Omega } \\
&\leq \sqrt{r} \infnorm{\bM} \lambda +  \opnorm{ \bW^{(-1/2)} \had \bZ_\Omega },
\end{align*}
using the fact that $\bM$ is rank $r$ and hence $\maxnorm{ \bM } \leq \sqrt{r} \infnorm{ \bM }$.
Thus we conclude that
\begin{equation}\label{eq:errorbound}
\fronorm{ \bW^{(1/2)} \had ( \bM - \hat{ \bM } ) } \leq 2\sqrt{2}r \lambda \infnorm{ \bM } +  2\sqrt{2r} \opnorm{ \bW^{(-1/2)} \had \bZ_\Omega },  
\end{equation}
and it remains to bound the second term.
We have
\[ \bW^{(-1/2)} \had \bZ_\Omega = \sum_{i=1}^{d_1} \sum_{j=1}^{d_2} \frac{ \ind{ (i,j) \in \Omega } Z_{ij} }{ \sqrt{ W_{ij} } } \be_i \be_j^T, \]
where $\be_i$ is the $i$'th standard basis vector.  We may apply Theorem~\ref{thm:matrixgaussian} with 
\[ \bX_{ij} = \frac{ \ind{ (i,j) \in \Omega } }{ \sqrt{ W_{ij}}} \be_i \be_j^T. \]
We have
\[ \opnorm{\sum_{i,j} \bX_{ij} \bX_{ij}^T} = \opnorm{ \sum_i \inparen{ \sum_j \frac{ \ind{(i,j) \in \Omega }}{ W_{ij} } } \be_i \be_i^T } = \max_i \sum_j \frac{\ind{(i,j) \in \Omega }}{W_{ij}} \leq \mu^2 \]
and similarly
\[ \opnorm{ \sum_{i,j} \bX_{ij}^T\bX_{ij} } = \max_j \sum_i \frac{ \ind{(i,j) \in \Omega } }{ W_{ij} } \leq \mu^2 \]
and by Theorem~\ref{thm:matrixgaussian},  for any $t > 0$,
\[ \PR{ \opnorm{ \bW^{(-1/2)} \had \bZ_\Omega } \geq t } \leq 2(d_1 + d_2) \exp\inparen{ \frac{ - t^2 }{ 2 \sigma^2 \mu^2 } }. \]
We conclude that with probability at least $1 - \frac{1}{d_1 + d_2}$, we have
\[ \opnorm{ \bW^{(-1/2)} \had \bZ_\Omega } \leq 2 \sigma \mu \sqrt{ \log(d_1 + d_2) }. \]
Plugging this into \eqref{eq:errorbound} proves the theorem.
\end{proof}


\subsection{Bounds for approximately rank-$r$ matrices}
In this section we prove a bound analogous to Theorem~\ref{lem:upperrankk} for the case when $\bM \in \beta\sqrt{r} B_{\max}$ is only approximately low rank.  We use the same simple projection algorithm, except this time we project onto the max norm ball instead of onto the cone of rank $r$ matrices.

\begin{theorem}[General upper bound for approximately rank-$r$ matrices]\label{lem:uppermaxball}
There is a constant $C$ so that the following holds.
Let $\bW = \bw \bu^T \in \RR^{d_1 \times d_2}$ have strictly positive entries, and fix $\Omega \subseteq [d_1] \times [d_2]$.  Suppose that $\bM \in \RR^{d_1 \times d_2}$ has $\maxnorm{ \bM } \leq \beta\sqrt{r}$.  Suppose that $Z_{ij} \sim \gN(0,\sigma^2)$ and let 
\[ \hat{ \bM } = \bW^{(-1/2)} \had \argmin_{ \maxnorm{ \bX } \leq \beta \sqrt{r} } \norm{ \bX - \bW^{(-1/2)} \had (\bM_\Omega + \bZ_\Omega) }. \]
Then with probability at least $1 - 1/(d_1 + d_2)$ over the choice of $\bZ$,
\[ \fronorm{ \bW^{(1/2)} \had ( \bM - \hat{ \bM } ) } \leq C \cdot \fronorm{ \bW^{(1/2)} }^{1/2} \inparen{ \beta \sqrt{ r\lambda } + \sqrt{ \beta \sigma } \inparen{ \mu^2 r \log(d_1+d_2) }^{1/4}} \] 
where
where $\lambda$ and $\mu$ are as in \eqref{eq:lambda} and \eqref{eq:mu}, respectively.
\end{theorem}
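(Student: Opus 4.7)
The plan is to follow the same architecture as the proof of Theorem~\ref{lem:upperrankk}, but to replace the step that invokes the rank bound $\fronorm{\bB} \leq \sqrt{2r}\opnorm{\bB}$ (which was available because $\bM - \hat{\bM}$ had rank at most $2r$) with a Grothendieck-type inequality tailored to the max-norm ball. Set $\bA := \bW^{(1/2)} \had (\bM - \hat{\bM})$, $\bN := \bW^{(1/2)} \had \bM$, and $\hat{\bN} := \bW^{(1/2)} \had \hat{\bM}$, so that $\hat{\bN}$ is the operator-norm projection of $\bW^{(-1/2)} \had \bY_\Omega$ onto a max-norm ball of radius $\beta\sqrt{r}$. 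The rank-$1$ structure of $\bW$ implies that Hadamard product with $\bW^{(1/2)}$ preserves max-norm up to a scalar factor, so $\bN$ may be taken as feasible for this projection, with any loss absorbed into the constant $C$.

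First I would bound $\opnorm{\bA}$ by the exact chain of inequalities used in Theorem~\ref{lem:upperrankk}. The optimality of $\hat{\bN}$ gives
\[ \opnorm{\bA} \leq 2 \opnorm{\bN - \bW^{(-1/2)} \had \bY_\Omega} \leq 2\maxnorm{\bM}\,\lambda + 2\opnorm{\bW^{(-1/2)} \had \bZ_\Omega}, \]
and applying Theorem~\ref{thm:matrixgaussian} to the second term (with variance parameter at most $\mu^2$) bounds it by $4\sigma\mu\sqrt{\log(d_1+d_2)}$ with probability at least $1 - 1/(d_1+d_2)$. Using $\maxnorm{\bM} \leq \beta\sqrt{r}$ yields $\opnorm{\bA} \lesssim \beta\sqrt{r}\,\lambda + \sigma\mu\sqrt{\log(d_1+d_2)}$.

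Second, I would upgrade this to a bound on $\fronorm{\bA}$ using Grothendieck's inequality. Since $\maxnorm{\bM - \hat{\bM}} \leq 2\beta\sqrt{r}$, the inclusion $B_{\max} \subset K_G \cdot \conv(\mathcal{F})$ produces a signed decomposition $\bM - \hat{\bM} = \sum_k c_k\, \ba_k \bb_k^T$ with $\ba_k \in \{\pm 1\}^{d_1}$, $\bb_k \in \{\pm 1\}^{d_2}$, and $\sum_k |c_k| \leq 2K_G\beta\sqrt{r}$. Expanding,
\[ \fronorm{\bA}^2 = \big\langle \bA,\, \bW^{(1/2)} \had (\bM - \hat{\bM}) \big\rangle = \sum_k c_k \big\langle \bA,\, \bW^{(1/2)} \had \ba_k \bb_k^T \big\rangle \leq \Big(\sum_k |c_k|\Big)\, \opnorm{\bA}\, \max_k \nucnorm{\bW^{(1/2)} \had \ba_k \bb_k^T}. \]
Each matrix $\bW^{(1/2)} \had \ba_k \bb_k^T$ is rank $1$ (Hadamard of two rank-$1$ matrices) with Frobenius, hence nuclear, norm exactly $\fronorm{\bW^{(1/2)}}$, since the $\pm 1$ signs do not change magnitudes. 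This produces the key inequality
\[ \fronorm{\bA}^2 \lesssim \beta\sqrt{r}\,\fronorm{\bW^{(1/2)}}\,\opnorm{\bA}, \]
into which I substitute the Step~1 bound on $\opnorm{\bA}$. Taking a square root gives exactly the advertised guarantee, with the two summands $\beta\sqrt{r}\lambda$ and $\sigma\mu\sqrt{\log(d_1+d_2)}$ square-rooting to produce the $\beta\sqrt{r\lambda}$ and $\sqrt{\beta\sigma}(\mu^2 r\log(d_1+d_2))^{1/4}$ terms respectively.

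The main obstacle is the Frobenius-to-operator upgrade in the second step. The naive-looking inequality $\fronorm{\bA}^2 \leq K_G\,\maxnorm{\bA}\,\sqrt{d_1 d_2}\,\opnorm{\bA}$ (obtained by Grothendieck-expanding $\bA$ itself) gives an unwanted $\sqrt{d_1 d_2}$ factor rather than the desired $\fronorm{\bW^{(1/2)}}$; the resolution is to expand $\bM - \hat{\bM}$ in place of $\bA$, so that the weights $\bW^{(1/2)}$ enter only inside the nuclear norm of a rank-$1$ sign matrix, where they contribute exactly $\fronorm{\bW^{(1/2)}}$. A secondary but essential bookkeeping point is the verification that $\bN$ is (nearly) feasible for the operator-norm projection, which relies on the rank-$1$ structure of $\bW$ and explains why the rank-$1$ hypothesis on the weight matrix is used throughout the argument.
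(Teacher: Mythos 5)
Your proof takes essentially the same route as the paper's. Both arguments compute $\fronorm{\bW^{(1/2)}\had(\bM-\hat{\bM})}^2$ as the inner product $\langle\bW^{(1/2)}\had\bQ,\,\bW^{(1/2)}\had\bQ\rangle$, invoke Grothendieck's inequality to reduce to sign matrices $\ba\bb^T$, observe that $\nucnorm{\bW^{(1/2)}\had\ba\bb^T}=\fronorm{\bW^{(1/2)}}$ by the rank-one structure of $\bW$, and bound $\opnorm{\bW^{(1/2)}\had\bQ}$ by the triangle inequality plus optimality exactly as in Theorem~\ref{lem:upperrankk}. Expanding $\bM-\hat{\bM}$ directly as a convex combination in $K_G\conv(\mathcal{F})$ is a cosmetic reorganization of the paper's step $\langle\bW\had\bQ,\bQ\rangle\leq\maxnorm{\bQ}\|\bW\had\bQ\|_{\max^*}$ followed by Grothendieck applied to the dual max-norm; the two arrive at the same $\beta\sqrt{r}\,K_G\,\fronorm{\bW^{(1/2)}}\opnorm{\bW^{(1/2)}\had\bQ}$ bound.

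One caution: your statement that ``Hadamard product with $\bW^{(1/2)}$ preserves max-norm up to a scalar factor, so $\bN$ may be taken as feasible, with any loss absorbed into the constant $C$'' is not quite right. That scalar factor is $\sqrt{\infnorm{\bW}}$, which is not an absolute constant, so absorbing it into $C$ is not licensed by the hypotheses. The paper's own proof makes the same tacit assumption (both the step $\maxnorm{\bQ}\leq 2\beta\sqrt{r}$ and the optimality comparison against $\bW^{(1/2)}\had\bM$ require feasibility that is not a literal consequence of $\maxnorm{\bM}\leq\beta\sqrt{r}$), so this is not an error you introduced; but the clean way to make both proofs airtight is to read the estimator as $\hat{\bM}=\argmin_{\maxnorm{\bX}\leq\beta\sqrt{r}}\opnorm{\bW^{(1/2)}\had\bX-\bW^{(-1/2)}\had\bY_\Omega}$, i.e.\ constrain the recovered matrix rather than the projected variable, so that $\bM$ is feasible by hypothesis and $\maxnorm{\hat{\bM}}\leq\beta\sqrt{r}$ by construction.
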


\begin{proof}Let $\bY = \bM + \bZ$, and let $\bQ = \bM - \hat{\bM}$.  Then we have
\begin{equation}\label{eq:firststep}
\fronorm{ \bW^{(1/2)} \had (\bM - \hat{\bM} ) }^2 = \ip{ \bW \had \bQ }{ \bQ } \leq \maxnorm{ \bQ } \| \bW \had \bQ \|_{ \max^* }. 
\end{equation}
The first factor we can bound by
\[ \maxnorm{ \bQ } \leq 2\beta \sqrt{r}, \]
by the assumption on $\bM$ and the definition of $\hat{\bM}$.  For the second factor, we have
\begin{align*}
\| \bW \had \bQ \|_{\max^*} &\leq K_G \max_{ \ba \in \{\pm 1 \}^{d_1}, \bb \in \{\pm 1 \}^{d_2} } \ip{ \ba \bb^T }{ \bW \had \bQ } \\
&= K_G \max_{ \ba \in \{\pm 1 \}^{d_1}, \bb \in \{\pm 1 \}^{d_2} } \ip{ \ba \bb^T \had \bW^{(1/2)}}{ \bW^{(1/2)} \had \bQ } \\
&\leq K_G \max_{ \ba \in \{\pm 1 \}^{d_1}, \bb \in \{\pm 1 \}^{d_2} } \nucnorm{ \ba \bb^T \had \bW^{(1/2)} } \opnorm{ \bW^{(1/2)} \had \bQ } \\
&= K_G \nucnorm{ \bW^{(1/2)} } \opnorm{ \bW^{(1/2)} \had \bQ } \\
&= K_G \fronorm{ \bW^{(1/2)} } \opnorm{ \bW^{(1/2)} \had \bQ }
\end{align*}
where in the last line we have used the fact that $\bW^{(1/2)}$ is rank $1$ and so the nuclear norm is equal to the Frobenius norm.
Then we bound
\begin{align*}
\opnorm{ \bW^{(1/2)} \had \bQ } &= \opnorm{ \bW^{(1/2)} \had ( \bM - \hat{\bM } ) } \\
&\leq \opnorm{ \bW^{(1/2)} \had \bM - \bW^{(-1/2)} \had \bY_\Omega } + \opnorm{ \bW^{(1/2)} \had \hat{\bM} - \bW^{(-1/2)} \had \bY_\Omega } \\
&\leq 2 \opnorm{ \bW^{(1/2)} \had \bM - \bW^{(-1/2)} \had \bY_\Omega } \\
&\leq 2 \inparen{ \opnorm{ \inparen{ \bW^{(1/2)} - \bW^{(-1/2)} \had \ind{\Omega} } \had \bM } + \opnorm{ \bW^{(-1/2)} \had \bZ_\Omega } } \\
&\leq 2 \inparen{ \maxnorm{ \bM } \lambda + 2 \sigma \mu \sqrt{ \log(d_1 + d_2 ) }},
\end{align*}
using in the last line the analysis from the proof of Theorem~\ref{lem:upperrankk}.  The putting it together with \eqref{eq:firststep}, we have 
\begin{align*}
\fronorm{ \bW^{(1/2)} \had (\bM - \hat{\bM} ) }^2 &\leq \maxnorm{ \bQ } \| \bW \had \bQ \|_{ \max^* } \\
&\leq 2 \beta \sqrt{r} \inparen{ K_G \fronorm{ \bW^{(1/2)} } 2 \inparen{ \beta \sqrt{r} \lambda + 2 \sigma \mu \sqrt{ \log(d_1 + d_2) } } }. 
\end{align*}
Taking the square root and choosing $C$ appropriately completes the proof.
\end{proof}

\section{General Lower Bounds}\label{sec:lower}
As we will see in our case studies in Sections~\ref{sec:nogap} and \ref{sec:gap}, the upper bounds from Section~\ref{sec:upper} are tight in some situations.  In order to prove lower bounds in those specific settings, in this section we give general lower bounds which can be specialized to both the exactly rank-$r$ and the approximately-rank-$r$ settings.
Our lower bounds all rest on Fano's Inequality, which we recall below. 
\begin{theorem}[Fano's Inequality]\label{thm:fano}
Let $\mathcal{F} = \inset{ f_0, \ldots, f_n }$ be a collection of densities on $\mathcal{X}$, and suppose that $\mathcal{A}: \mathcal{X} \to \{0, \ldots, n\}$.  Suppose there is some $\beta > 0$ so that for any $i \neq j$, $\KL{f_i}{f_j} \leq \beta$.  Then 
\[ \max_i \mathbb{P}_{X \sim f_i} \inset{ \mathcal{A}(X) \neq i } \geq 1 - \frac{ \beta + \log(2) } {\log(n)}. \]
\end{theorem}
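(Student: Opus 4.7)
The plan is to prove Fano's inequality by the standard information-theoretic route: reduce the worst-case error probability to an average-case error probability under a uniform prior, then bound the mutual information between the index and the observation using the KL assumption, and finally invoke the entropy form of Fano's inequality.

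First I would introduce an auxiliary random variable $I$ drawn uniformly from $\{0,1,\ldots,n\}$ and consider $X \mid I \sim f_I$. Let $\hat I = \mathcal{A}(X)$ and set $P_e = \PR{\hat I \neq I}$. Since the maximum dominates the average,
\[
\max_i \mathbb{P}_{X \sim f_i}\inset{\mathcal{A}(X) \neq i} \;\geq\; P_e,
\]
so it suffices to lower bound $P_e$.

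Next I would bound the mutual information $I(I;X)$. Writing $\bar f = \frac{1}{n+1}\sum_{j=0}^n f_j$, we have
\[
I(I;X) = \frac{1}{n+1}\sum_{i=0}^n \KL{f_i}{\bar f}.
\]
Convexity of $\KL{\cdot}{\cdot}$ in the second argument (Jensen's inequality) then yields
\[
\KL{f_i}{\bar f} \leq \frac{1}{n+1}\sum_{j=0}^n \KL{f_i}{f_j} \leq \beta,
\]
so $I(I;X) \leq \beta$. Applying the data processing inequality to the Markov chain $I \to X \to \hat I$ gives $I(I;\hat I) \leq I(I;X) \leq \beta$.

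Finally I would invoke the entropy form of Fano's inequality: $H(I \mid \hat I) \leq h(P_e) + P_e \log n$, where $h$ is binary entropy (bounded by $\log 2$). Since $H(I) = \log(n+1) \geq \log n$, combining with $I(I;\hat I) = H(I) - H(I \mid \hat I)$ gives
\[
\beta \;\geq\; I(I;\hat I) \;\geq\; \log n - \log 2 - P_e \log n,
\]
which rearranges to $P_e \geq 1 - (\beta + \log 2)/\log n$, as claimed.

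The steps are all standard; the only place requiring care is the convexity bound on $I(I;X)$, where one must remember that KL is convex in the second argument but not the first, so the bound goes through $\bar f$ rather than being a direct mixture bound. Beyond that, the proof is a straightforward assembly of the data processing inequality and the entropy version of Fano, which itself follows from expanding $H(I, \ind{I \neq \hat I} \mid \hat I)$ two different ways. I do not anticipate any genuine obstacles here — the result is classical, and the only subtlety in presentation is being precise about which base of logarithm and which version of binary entropy is in play, both of which are resolved by the conventions implicit in the statement.
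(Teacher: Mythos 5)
The paper states Fano's inequality without proof, treating it as a classical result that underlies the specialized Lemma~\ref{lem:fano}; there is thus no paper proof to compare against. Your proof is the standard information-theoretic argument (uniform prior, convexity of $\KL{p}{\cdot}$ to get $I(I;X)\leq\beta$, data processing, then the entropy form of Fano on the chain $I\to X\to\hat I$), and it is correct, including the care you take in noting that $\KL{f_i}{\bar f}\leq\frac{1}{n+1}\sum_j\KL{f_i}{f_j}$ uses convexity in the \emph{second} argument, that the entropy form gives $H(I\mid\hat I)\leq h(P_e)+P_e\log n$ with alphabet size $n+1$, and that $H(I)=\log(n+1)\geq\log n$ yields exactly the stated bound.
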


The following lemma specializes Fano's inequality to our setting.
\begin{lemma}\label{lem:fano}
Let $K \subset \RR^{d_1 \times d_2}$, and let $\mathcal{X} \subset K$ be a finite subset of $K$ so that $|\cX| > 16$.  Let $\Omega \subseteq [d_1] \times [d_2]$ be a sampling pattern.  
Let $\sigma > 0$ and 
choose
\[ \kappa \leq \frac{ \sigma \sqrt{ \log|\cX| } }{ 4 \max_{ \bX \in \cX } \fronorm{ \bX_\Omega }}, \]
and suppose that 
\[ \kappa \cX \subseteq K. \]
Let $\bZ \in \mathbb{R}^{d_1 \times d_2}$ be a matrix whose entries $Z_{i,j}$ are i.i.d., $Z_{i,j} \sim \gN(0, \sigma^2)$.  Let $\mathbf{H} \subseteq \mathbb{R}^{d_1 \times d_2}$ be any weight matrix.  

Then for any algorithm $\mathcal{A}: \mathbb{R}^\Omega \to \mathbb{R}^{d_1 \times d_2}$ that
takes as input $\bX_\Omega + \bZ_\Omega$ for $\bX \in K$ and outputs an estimate $\widehat{\bX}$ to $\bX$, there is some $\bM \in K$ so that 
\[ \fronorm{ \bH \had ( \cA( \bM_\Omega + \bZ_\Omega ) - \bM } \geq 
\frac{ \kappa }{2} \min_{ \bX \neq \bX' \in \cX } \fronorm{ \bH \had ( \bX - \bX' ) } \]
with probability at least 1/2.
\end{lemma}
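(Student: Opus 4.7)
The plan is a standard reduction from estimation to hypothesis testing via Fano's inequality (Theorem~\ref{thm:fano}). I would take the hypothesis class to be the scaled family $\{\kappa\bX : \bX \in \cX\} \subseteq K$. Under hypothesis $\bX \in \cX$, the observation $\bY_\Omega = \kappa \bX_\Omega + \bZ_\Omega$ has, when viewed as a vector in $\R^\Omega$, density $f_\bX$ equal to a Gaussian with mean $\kappa \bX_\Omega$ and covariance $\sigma^2 \bI$.

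First I would argue the reduction. Write $\tau = \min_{\bX \neq \bX' \in \cX}\fronorm{\bH\had(\bX-\bX')}$ and suppose for contradiction that some estimator $\cA$ achieves $\fronorm{\bH\had(\cA(\bM_\Omega+\bZ_\Omega)-\bM)} < \kappa\tau/2$ with probability strictly greater than $1/2$ for \emph{every} hypothesis $\bM \in \kappa\cX$. Build a test $\cT$ by outputting the index of the element of $\kappa\cX$ minimizing $\fronorm{\bH\had(\cA(\bY_\Omega)-\cdot)}$ (breaking ties arbitrarily). The seminorm $\fronorm{\bH\had \cdot}$ satisfies the triangle inequality, so whenever the weighted estimation error is below $\kappa\tau/2$, the test identifies the true hypothesis: for any wrong $\bX' \neq \bX$, $\fronorm{\bH\had(\cA(\bY_\Omega)-\kappa\bX')} \geq \kappa\tau - \kappa\tau/2 > \fronorm{\bH\had(\cA(\bY_\Omega)-\kappa\bX)}$. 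Hence $\cT$ succeeds with probability greater than $1/2$ under every hypothesis.

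Next I would compute the pairwise KL divergences. Since the $f_\bX$ are Gaussian with common covariance $\sigma^2 \bI$, standard formulas give $\KL{f_\bX}{f_{\bX'}} = \kappa^2 \fronorm{(\bX-\bX')_\Omega}^2/(2\sigma^2)$. By the triangle inequality, $\fronorm{(\bX-\bX')_\Omega} \leq 2 \max_{\bY\in\cX}\fronorm{\bY_\Omega}$, so each KL divergence is at most $2\kappa^2 \max_{\bY\in\cX}\fronorm{\bY_\Omega}^2/\sigma^2$. Substituting the hypothesis on $\kappa$ from the lemma statement bounds this by $\log|\cX|/8$.

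Finally I would apply Theorem~\ref{thm:fano} with $\beta = \log|\cX|/8$ and the $|\cX|$ hypotheses, which lower-bounds the maximum error probability of $\cT$ by $1 - (\log|\cX|/8 + \log 2)/\log|\cX|$; using $|\cX| > 16$ (so $\log|\cX| > 4\log 2$), this is strictly greater than $1/2$, contradicting the earlier uniform success of $\cT$. The contradiction forces the existence of some $\bM \in \kappa\cX \subseteq K$ on which $\cA$ incurs weighted error at least $\kappa\tau/2$ with probability at least $1/2$, as claimed. The only non-bookkeeping step is the reduction-via-triangle-inequality, and this is immediate for the seminorm $\fronorm{\bH\had\cdot}$; the rest is just a matter of verifying that the constant in the hypothesis on $\kappa$ is tuned to make Fano's bound beat $1/2$ when $|\cX| > 16$.
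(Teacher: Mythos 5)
Your proof is correct and follows essentially the same route as the paper: scale the net by $\kappa$, round $\cA$'s output to the nearest element of $\kappa\cX$ in the seminorm $\fronorm{\bH\had\cdot}$, bound the pairwise Gaussian KL divergences via the triangle inequality, and apply Fano's inequality. The one small nit is that Theorem~\ref{thm:fano} as stated has $\log(|\cX|-1)$, not $\log|\cX|$, in the denominator of the Fano bound, but since $|\cX|>16$ the resulting error-probability lower bound still exceeds $1/2$, so the conclusion is unaffected.
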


\begin{proof}
Consider the net
\[ \cX' = \inset{ \kappa \bX \suchthat \bX \in \cX } \]
which is a scaled version of $\cX$.  By assumption, $\cX' \subseteq K$.

Recall that the KL divergence between two multivariate Gaussians is given by
\[\KL{ \gN(\mu_1, \Sigma_1) }{\gN(\mu_2, \Sigma_2 ) } = \frac{1}{2} \inparen{ \log\frac{ \det\Sigma_2}{\det \Sigma_1 } - n + \tr( \Sigma_2^{-1} \Sigma_1 ) + \ip{\Sigma_2^{-1} (\mu_2 - \mu_1)}{ \mu_2 - \mu_1 } }. \]
Specializing to $\bU, \bV \in \cX'$, with $\bI = \bI_{\Omega \times \Omega}$, 
\begin{align*}
 \KL{ \bU_\Omega + \bZ_\Omega }{  \bV_\Omega + \bZ_\Omega } &= \KL{ \gN(\bU_\Omega, \sigma^2 \bI)}{\gN(\bV_\Omega, \sigma^2 \bI) } \\
&= \frac{ \fronorm{ \bU_\Omega - \bV_\Omega }^2 }{ 2\sigma^2 }  \\
&\leq \max_{ \bX' \in \cX' } \frac{ \fronorm{ \bX' } }{ \sigma^2 } \\
& = \frac{ \kappa^2 \max_{ \bX \in \cX } \fronorm{ \bX }^2 }{ \sigma^2 }.
\end{align*}

Suppose that $\mathcal{A}$ is as in the statement of the lemma.
Define an algorithm $\overline{\cA}:\RR^{\Omega} \to \RR^{d_1 \times d_2}$ so that $\overline{\cA}(\bY) = \bX$ for the unique $\bX \in \cX'$ so that 
\[ \fronorm{ \bH \had (\bX - \cA( \bY )) } < \frac{1}{2} \min_{ \bX \neq \bX' \in \cX' } \fronorm{ \bH \had ( \bX - \bX' ) } := \rho/2 \]
 if it exists, and $\overline{ \cA}(\bY) = \cA(\bY)$ otherwise. 

Then by Fano's inequality (Theorem \ref{thm:fano}), there is some $\bM \in \cX'$ so that
\begin{align*}
\PR{ \overline{\cA}( \bM_\Omega + \bZ_\Omega ) \neq \bM } 
& \geq 1 - \frac{ \max_{ \bX \in \cX' }\fronorm{ \bX_\Omega }^2 }{ \sigma^2 \log(|\cX| - 1 )} - \frac{ \log(2) }{ \log( |\cX| - 1 ) } \\
& = 1 - \frac{ \kappa^2 \max_{ \bX \in \cX }  \fronorm{ \bX_\Omega }^2 } { \sigma^2 \log( |\cX| - 1 ) } - \frac{ \log(2) }{ \log( |\cX| - 1 ) } \\
& \geq 1 - \frac{1}{4} - \frac{\log(2)}{ \log(|\cX| - 1) } \\
& \geq 1/2,
\end{align*}
using the assumption that $|\cX| \geq 16$ as well as the fact that
\[ \kappa \leq \frac{ \sigma \sqrt{ \log|\cX|} }{ 4 \max_{ \bX \in \cX } \fronorm{ \bX_\Omega } } \leq \frac{ \sigma \sqrt{ \log( |\cX| - 1 )} }{ 2 \max_{ \bX \in \cX } \fronorm{ \bX_\Omega } }. \]

If $\overline{\cA}(\bM_\Omega + \bZ_\Omega ) \neq \bM$, then $\fronorm{ \bH \had \cA( \bM_\Omega + \bZ_\Omega ) } > \rho/2$, and so
\[ \PR{ \fronorm{ \bH \had \cA( \bM_\Omega + \bZ_\Omega ) - \bM } \geq \rho/2 } \geq \PR{ \overline{ \cA} (\bM_\Omega + \bZ_\Omega) \neq \bM} \geq 1/2. \] 

Finally, we observe that
\[  
\frac{\rho}{2} = \frac{1}{2} \min_{ \bX \neq \bX' \in \cX'} \fronorm{ \bH \had ( \bX - \bX' ) }  
= \frac{\kappa}{2} \min_{ \bX \neq \bX' \in \cX } \fronorm{ \bH \had ( \bX - \bX' ) }, \]
which completes the proof.
\end{proof}

Our lower bounds in Sections~\ref{sec:nogap} and \ref{sec:gap} will follow from Lemma~\ref{lem:fano} by choosing an appropriate net.  Below we prove a general lemma about picking a net, which we will use multiple times in subsequent proofs.

\begin{lemma}\label{lem:net}
There is some constant $c$ so that the following holds.  Let $r, d_1, d_2 > 0$ be sufficiently large, and suppose that $d_1 \geq d_2$.  
Let $K$ be the cone of rank-$r$ matrices. Let $\bH$ be any rank-1 weight matrix, and let $\bA$ be any rank-1 matrix with $\|\bA\|_\infty \leq 1$.
Write $\bH = \bh \bg^T$ and $\bA = \ba \bb^T$, and let
\[ \bz = (\bg \had \bb)^{(2)} \qquad \bv = ( \bh \had \ba )^{(2)}. \]
Let 
\[ \gamma = c \sqrt{ r \log(d_1d_2) } \]
There is a net $\cX \subseteq K \cap \gamma B_\infty \cap r B_{\max}$ so that:
\begin{enumerate}
\item The net has size $|\cX| \geq N$, for

\[ N = 2e \exp \inparen{  c \cdot \min \inset{ \frac{ \onenorm{\bv}^2\onenorm{\bz}^2 }{ \twonorm{\bv}^2 \twonorm{\bz}^2 } , \frac{ \onenorm{ \bv } \onenorm{ \bz } }{ \infnorm{ \bv}  \twonorm{ \bz} \sqrt{ r \log(r) }} , \frac{ \onenorm{ \bv } \onenorm{ \bz }}{  \infnorm{ \bv} \infnorm{ \bz } r \log(r) }  , \frac{ r^2 \onenorm{ \bv}^2 }{ r \twonorm{ \bv}^2 } }}.
\]

\item $\fronorm{ \bX_\Omega } \leq \sqrt{ c\cdot r } \fronorm{ \bA_\Omega }$ for all $\bX \in \cX$.
\item $\fronorm{ \bH \had ( \bX - \bX' ) } \geq \sqrt{r} \fronorm{ \bA \had \bH }$ for all $\bX \neq \bX' \in \cX$.
\end{enumerate}
\end{lemma}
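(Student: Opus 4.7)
The plan is to build $\cX$ via a random-packing argument: draw $M$ independent random rank-$r$ matrices, verify that most individually satisfy Property~2 and the inclusion $\cX \subseteq K \cap \gamma B_\infty \cap rB_{\max}$, use a two-layer Hanson-Wright computation to bound the probability that the pairwise separation of Property~3 fails, and finally extract a large subfamily by a greedy deletion.

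\textbf{Random matrices and easy properties.} Write $\bA = \ba\bb^T$ with $\|\ba\|_\infty\|\bb\|_\infty \leq 1$. For each $s \in [M]$ draw independent Rademacher vector pairs $(\bxi^{(s,k)}, \boldsymbol\eta^{(s,k)}) \in \{\pm 1\}^{d_1} \times \{\pm 1\}^{d_2}$ for $k \in [r]$, and set
\[ \bX^{(s)} = \sum_{k=1}^r (\ba \had \bxi^{(s,k)})(\bb \had \boldsymbol\eta^{(s,k)})^T. \]
Then $\rank(\bX^{(s)}) \leq r$ deterministically, and $\maxnorm{\bX^{(s)}} \leq r$ since each of the $r$ rank-one summands has max-norm at most $\|\ba\|_\infty\|\bb\|_\infty \leq 1$. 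Each entry $X^{(s)}_{ij} = a_ib_j \sum_k \xi^{(s,k)}_i \eta^{(s,k)}_j$ is a signed sum of $r$ i.i.d.\ Rademachers, so Hoeffding and a union bound over $(i,j)$ yield $\|\bX^{(s)}\|_\infty \leq \gamma$ with probability $\geq 7/8$ once the constant $c$ in $\gamma = c\sqrt{r\log(d_1d_2)}$ is large. Independently, a direct computation gives $\EE\fronorm{\bX^{(s)}_\Omega}^2 = r\fronorm{\bA_\Omega}^2$, so Markov's inequality delivers $\fronorm{\bX^{(s)}_\Omega}^2 \leq 8r\fronorm{\bA_\Omega}^2$ with probability $\geq 7/8$. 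Thus each $\bX^{(s)}$ satisfies every single-matrix requirement with probability at least $5/8$.

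\textbf{Pairwise separation (the main obstacle).} For fixed $s \neq s'$, I must show the tail event $\fronorm{\bH\had(\bX^{(s)}-\bX^{(s')})} < \sqrt{r}\fronorm{\bH\had\bA}$ has probability $\leq 1/(2M^2)$. Set $\bG = \bH\had\bA$ so that $G_{ij}^2 = v_i z_j$. Combine the $2r$ underlying Rademacher pairs from $s$ and $s'$ (absorbing a sign into the $s'$ terms) into i.i.d.\ sequences $\widetilde\zeta^{(1)},\ldots,\widetilde\zeta^{(2r)}$ and $\widetilde\tau^{(1)},\ldots,\widetilde\tau^{(2r)}$. Expanding,
\[ \fronorm{\bH\had(\bX^{(s)}-\bX^{(s')})}^2 = \sum_{i,j} v_i z_j \Big(\sum_{k=1}^{2r} \widetilde\zeta^{(k)}_i \widetilde\tau^{(k)}_j\Big)^2 = 2r\onenorm{\bv}\onenorm{\bz} + P, \]
where $P = \sum_{k\neq k'} U_{k,k'}V_{k,k'}$ with $U_{k,k'} := \sum_i v_i \widetilde\zeta^{(k)}_i\widetilde\zeta^{(k')}_i$ and $V_{k,k'} := \sum_j z_j \widetilde\tau^{(k)}_j\widetilde\tau^{(k')}_j$. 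Since $2r\onenorm{\bv}\onenorm{\bz} = 2r\fronorm{\bH\had\bA}^2$, it suffices to show $|P| \leq r\onenorm{\bv}\onenorm{\bz}$ on an event of probability $\geq 1 - 1/(2M^2)$.

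\textbf{Two-stage Hanson-Wright.} This is where I expect the real technical work. Let $\bU \in \R^{2r\times 2r}$ be the zero-diagonal matrix with off-diagonal entries $U_{k,k'}$. Conditional on $\widetilde\zeta$, the random variable $P$ is a mean-zero quadratic form in the concatenated Rademacher vector $\widetilde\tau \in \{\pm 1\}^{2rd_2}$, with block-diagonal coefficient matrix whose $j$-th block is $z_j\bU$; the full matrix has Frobenius norm $\twonorm{\bz}\fronorm{\bU}$ and operator norm $\infnorm{\bz}\opnorm{\bU}$. Hanson-Wright (Theorem~\ref{thm:hansonwright}) then gives a conditional Bernstein-type tail bound on $|P|$ with $t = r\onenorm{\bv}\onenorm{\bz}$. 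A second application of Hanson-Wright, to each $U_{k,k'}$ viewed as a quadratic form in $\widetilde\zeta$, together with a union bound over the $O(r^2)$ entries, bounds $\|\bU\|_\infty$ by $\twonorm{\bv}\sqrt{\log r} + \infnorm{\bv}\log r$ with high probability, which in turn bounds $\fronorm{\bU}$ and $\opnorm{\bU}$. Enumerating the four combinations of Bernstein regimes (sub-Gaussian or sub-exponential) across the two stages produces exactly the four-term $\min$ in the definition of $N$; the fourth term, which is asymmetric in $\bv$, corresponds to the regime in which the inner concentration is too weak and one falls back on the deterministic bound $\opnorm{\bU} \leq r\infnorm{\bv}$ combined with the Frobenius bound on $\fronorm{\bU}$. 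Finally, taking $M$ a small constant multiple of $N$, Chernoff shows that $\geq M/2$ matrices pass the single-matrix tests, and Markov applied to the $\binom{M}{2}$ bad-pair indicator gives $\leq M/4$ failures; a greedy deletion leaves $\geq N$ matrices forming the desired net $\cX$.
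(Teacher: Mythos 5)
Your construction is correct at the level of the single-matrix tests and the Chernoff/Markov/greedy-deletion scheme: these match the paper. The substantive difference is that you draw \emph{both} the left and the right factors afresh for each matrix, whereas the paper fixes a single right factor $\bR$, shared by every element of the net, and chooses it by a probability-$>0$ existence argument so that $\bR^T \bD_\bz \bR - \onenorm{\bz}\bI$ has small Frobenius and operator norm \emph{deterministically}. The pairwise separation then involves only a single quadratic form in the difference $\bxi = \tfrac12(\bL-\bL')$, and one application of Hanson--Wright suffices. That choice is not cosmetic: it is what lets the per-pair failure probability be made exponentially small.

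Your two-stage version has a genuine gap at exactly this point. You need the per-pair bad event to have probability at most $O(1/M)$, where $M \approx N$ is exponential in $r d_1$. In your decomposition $P = \sum_{k\neq k'} U_{k,k'} V_{k,k'}$, the conditional Hanson--Wright over $\widetilde\tau$ can produce an exponentially small tail, but only \emph{after} you condition on the norms of the random matrix $\bU$. You claim $\infnorm{\bU} \lesssim \twonorm{\bv}\sqrt{\log r} + \infnorm{\bv}\log r$ ``with high probability''; the union bound over $O(r^2)$ entries gives failure probability only polynomial in $r$, not $\exp(-\Theta(rd_1))$. This is fatal: the conditioning event's failure probability dominates, and the overall per-pair bound becomes $\poly(r)^{-1}$ rather than $1/M$. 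If instead you raise the threshold on $\infnorm{\bU}$ so that it holds with probability $1 - 1/M$, the threshold must grow with $\sqrt{\log M} \approx \sqrt{\min\{\cdots\}}$, and the resulting exponent becomes self-referential and strictly weaker than what the lemma asserts. The paper avoids this entirely: since $\bR$ is shared (and fixed), the analogue of ``$\bU$ is nice'' is a property of one fixed matrix that is guaranteed once and for all, costing nothing in probability.

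Two secondary inaccuracies in your write-up. First, since $(\widetilde\zeta^{(k)}_i)^2 = (\widetilde\tau^{(k)}_j)^2 = 1$, the diagonal contribution $2r\onenorm{\bv}\onenorm{\bz}$ in your decomposition is \emph{deterministic}, so the fourth term $r\onenorm{\bv}^2/\twonorm{\bv}^2$ cannot arise the way you suggest. In the paper it comes from the fact that $\bxi_i$ is $\{0,\pm1\}$-valued (being half the difference of two sign vectors), so $\twonorm{\bxi_i}^2$ is random and its fluctuations are controlled by a subgaussian Hoeffding bound --- this randomness is precisely what your rank-$2r$ regrouping eliminates. Second, tracing through your conditional Hanson--Wright, the sub-exponential regime produces terms proportional to $\onenorm{\bv}\onenorm{\bz}/(\infnorm{\bz}\twonorm{\bv}\sqrt{\log r})$ and $\onenorm{\bv}\onenorm{\bz}/(\infnorm{\bz}\infnorm{\bv}\log r)$, which swap the roles of $\bv$ and $\bz$ (and differ in the $r$-dependence) relative to the lemma's stated terms. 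These do not obviously dominate the lemma's terms, so even setting aside the probability issue, the quantitative conclusion would not be the one claimed. Replacing the independent right factors with a single shared $\bR$ collapses your argument onto the paper's and resolves all three issues.
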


\begin{remark}
We do not need the assumption that $d_1 \geq d_2$ for the statement of Lemma~\ref{lem:fano} to be true; however, the result is stronger if $d_1 \geq d_2$, because in the cases we consider below (where $\bA$, $\bH$ are ``flat enough''), then the term in the minimum is dominated by $\frac{ r^2 \onenorm{ \bv }^2 }{ r \twonorm{ \bv }^2 } \approx r d_1$.  If $d_2 \geq d_1$, then we may switch the roles of $d_1$ and $d_2$ in the proof below and obtain a bound that depends on $d_2$.
\end{remark}

\begin{proof}
Let $\mathcal{L} \subset \{\pm 1\}^{d_1 \times r}$ be a set of random $\pm 1$-valued matrices chosen uniformly at random with replacement, of size $4N$.

Choose $\bR \in \{ \pm 1 \}^{d_2 \times r}$ to be determined below.
Let $\bl_i$, for $i \in [d_1]$, denote the rows of $\bL$, and similarly let $\br_i$ for $i \in [d_2]$ denote the rows of $\bR$.
Let
\[ \mathcal{X} = \inset{ \bA \had \bL \bR^T \suchthat \bL \in \mathcal{L} }. \]

(We note that if one wishes to prove a similar lemma for $d_2 > d_1$, then we should
make the net by choosing $\bR$ at random and fixing $\bL$.)

We begin by estimating the first requirement on $\fronorm{ \bX_\Omega }$, and also the requirement that $\infnorm{ \bX } \leq \gamma$ and $\|\bX\|_{\max} \leq r$ for all $\bX \in \cX$.  We have
\[ \EE \fronorm{ \bX_\Omega }^2 = \EE \sum_{i,j \in \Omega} A_{ij} \ip{\bl_i}{\br_j}^2 = r \fronorm{ \bA_\Omega }^2, \]
where the expectation is over the random choice of $\bL$.
By Markov's inequality, $\PR{ \fronorm{ \bX_\Omega }^2 > 4r\fronorm{ \bA_\Omega }^2 } \leq 1/4$.
We also have
\[ \infnorm{ \bX } = \max_{ i,j \in [d_1] \times [d_2] } |A_{ij}| |\ip{\bl_i}{\br_j}|. \]
Now, for each $i,j$, $\ip{\bl_i}{\br_j}$ satisfies
\[ \PR{ |\ip{ \bl_i }{ \br_j }| \geq t } \leq \exp \inparen{ \frac{ - 2t^2 }{ r } } \]
by Hoeffding's inequality. Using the fact that $|A_{ij}| \leq 1$ by assumption and a union bound over all $d_1d_2$ values of $i,j$, we conclude that 
\[ \PR{ \infnorm{ \bX } >  \sqrt{ r \log(4d_1d_2) / 2 } } \leq 1/4. \] 
Finally, by definition the matrices $\bX \in \cX$ satisfy $\maxnorm{ \bX } \leq r$, by writing
\[ \bX = (\bD_{\ba} \bL ) ( \bD_\bb \bR )^T \]
and observing that each row of $\bD_{\ba} \bL$ has $\ell_2$ norm at most $\|\ba\|_\infty \sqrt{r} \leq \sqrt{r}$ and similarly for each row of $\bD_{\ba} \bR$.

By a union bound, for one matrix $\bX \in \cX$, the probability that all of $\maxnorm{ \bX } \leq r$,  $\infnorm{ \bX } \leq \sqrt{ r \log( 4d_1d_2 )/2 }$ and $\fronorm{ \bX_\Omega }^2 \leq 4r \fronorm{ \bA_\Omega }^2 $ is at most $1/2$.
Thus, by a Chernoff bound it follows that with high probability, at least $1 - \exp( -CN)$ for some constant $C$, there are at least $|\cX|/4$ matrices $\bX \in \cX$ so that all of these hold.  Let $\tilde{\cX} \subset \cX$ be the set of such $\bX$'s.  The net guaranteed in the statement of the theorem will be $\tilde{\bX}$, which in the favorable case satisfies both items 1 and 2 in the lemma, and also is contained in $K \cap \gamma B_\infty$.

Thus, we turn our attention to item 3: we will show that this holds for $\cX$ with high probability, and so in particular it will hold for $\tilde{\cX}$, and this will complete the proof of the lemma.

Fix $\bX \neq \bX' \in \cX$, and write
\begin{align*}
\fronorm{ \bH \had ( \bX - \bX' ) }^2 &= \fronorm{ \bH \had \bA \had ( \bL - \bL' ) \bR }^2 \\
&= \sum_{i,j \in [d_1] \times [d_2]} H_{ij}^2 A_{ij}^2 \ip{ \bl_i - \bl_i' }{ \br_j }^2 \\
&= 4 \sum_{i,j \in [d_1] \times [d_2]} H_{ij}^2 A_{ij}^2 \ip { \bxi_i }{ \br_j }^2
\end{align*}
where we define $\bxi_i = \frac{1}{2}(\bl_i - \bl_i')$.  Thus, each entry of $\bxi_i$ is independently $0$ with probability $1/2$ or $\pm 1$ with probability $1/4$ each.  Rearranging the terms and recalling the definitions of $\bv$ and $\bz$ above, we have
\begin{equation}\label{eq:matrixform}
\fronorm{ \bH \had ( \bX - \bX' ) }^2 = 4 \sum_{i=1}^{d_1} v_i \bxi_i^T \bR^T \bD_{\bz} \bR \bxi_i, 
\end{equation}
where $\bD_{\bz}$ denotes the $d_2 \times d_2$ diagonal matrix with $\bz$ on the diagonal.

In order to understand \eqref{eq:matrixform}, we need to understand the matrix $\bR^T \bD_{\bz} \bR \in \R^{r \times r}$.  The diagonal of this matrix is $\onenorm{\bz} \bI$.  We will choose the matrix $\bR$ so that the off-diagonal terms are small.  More precisely, we will choose $\bR$ according to the following claim.
\begin{claim}
There is a matrix $\bR \in \{\pm 1\}^{d_2 \times r}$ so that:
\begin{enumerate}
\item[(a)]
$ \fronorm{ \bR^T \bD_{\bz} \bR - \onenorm{\bz} \bI }^2 \leq 2 r^2 \twonorm{ \bz }^2 $ and
\item[(b)]
$ \opnorm{ \bR^T \bD_{\bz} \bR - \onenorm{\bz} \bI } \leq 2 \inparen{ \twonorm{ \bz } \sqrt{ r\log(r)} + \infnorm{ \bz } r \log(r) }.$
\end{enumerate}
\end{claim}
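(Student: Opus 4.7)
The plan is to use the probabilistic method: sample $\bR \in \{\pm 1\}^{d_2 \times r}$ with i.i.d.\ Rademacher entries and show that both (a) and (b) hold simultaneously with positive probability, so that some deterministic $\bR$ achieving both must exist. The key algebraic observation is that, writing $\br_k \in \{\pm 1\}^r$ for the $k$-th row of $\bR$,
\[ \bR^T \bD_{\bz} \bR = \sum_{k=1}^{d_2} z_k \br_k \br_k^T. \]
Since $\bz = (\bg \had \bb)^{(2)}$ has nonnegative entries and each $\br_k \br_k^T$ has all-ones diagonal, the diagonal of this sum is exactly $\onenorm{\bz}\bI$. Hence the centered matrix
\[ \bF := \bR^T \bD_{\bz} \bR - \onenorm{\bz} \bI = \sum_{k=1}^{d_2} z_k (\br_k \br_k^T - \bI) \]
is an independent sum of mean-zero matrices with zero diagonal, and its $(i,j)$ entry for $i \neq j$ is $\sum_k z_k R_{ki} R_{kj}$.

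For (a), I would compute $\EE \fronorm{\bF}^2$ entrywise. For each fixed $i \neq j$, the products $R_{ki} R_{kj}$ are i.i.d.\ Rademacher across $k$, so $\EE F_{ij}^2 = \sum_k z_k^2 = \twonorm{\bz}^2$. Summing over the $r(r-1)$ off-diagonal positions gives $\EE \fronorm{\bF}^2 = r(r-1)\twonorm{\bz}^2 < r^2 \twonorm{\bz}^2$, so Markov's inequality yields $\PR{\fronorm{\bF}^2 > 2 r^2 \twonorm{\bz}^2} < 1/2$.

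For (b), I would apply Matrix Bernstein (Theorem~\ref{thm:matrixbernstein}) to the above decomposition of $\bF$. The eigenvalues of $\br_k \br_k^T - \bI$ are $r-1$ and $-1$, so each summand has operator norm at most $(r-1)\infnorm{\bz} \leq r \infnorm{\bz}$. Using $(\br_k \br_k^T)^2 = r\,\br_k \br_k^T$ and $\EE \br_k \br_k^T = \bI$, one computes $\EE (\br_k \br_k^T - \bI)^2 = (r-1)\bI$, yielding variance parameter $\opnorm{\sum_k z_k^2 (r-1)\bI} = (r-1)\twonorm{\bz}^2 \leq r \twonorm{\bz}^2$. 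Choosing $t$ of order $\twonorm{\bz}\sqrt{r\log r} + \infnorm{\bz}\, r \log r$, each of the two regimes of the Bernstein exponent dominates $\log r$ (for an appropriate absolute constant in the leading factor), producing failure probability less than $1/2$; tracking these constants carefully yields the claimed bound with leading factor~$2$.

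A union bound then shows that both (a) and (b) hold simultaneously with strictly positive probability, so the required deterministic $\bR$ exists. The main subtlety I anticipate is the variance computation for the Bernstein step and the constant-chasing needed to land exactly on the stated form of (b); the Frobenius bound (a) is essentially a one-line Markov argument after the expectation is computed.
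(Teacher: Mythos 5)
Your proposal matches the paper's own proof essentially step for step: random Rademacher $\bR$, the identity $\bR^T\bD_\bz\bR - \onenorm{\bz}\bI = \sum_k z_k(\br_k\br_k^T - \bI)$, Markov's inequality on $\EE\fronorm{\cdot}^2 = r(r-1)\twonorm{\bz}^2$ for (a), Matrix Bernstein with $R \leq r\infnorm{\bz}$ and variance $\leq r\twonorm{\bz}^2$ for (b), and a union bound to extract a deterministic $\bR$. The only cosmetic addition is that you explicitly note $\bz$ has nonnegative entries (which is what makes the diagonal exactly $\onenorm{\bz}\bI$), a fact the paper leaves implicit.
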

\begin{proof}
Choose $\bR$ at random.  We will show that both (a) and (b) above happen with probability strictly greater than $1/2$, so by a union bound there exists a choice for $\bR$ which satisfies both.

First, for (a), we compute
\begin{align*}
\EE \fronorm{ \bR^T \bD_{\bz} \bR - \onenorm{\bz} \bI }^2 &= \sum_{i \neq j} \EE ( \be_i \bR^T \bD_z \bR \be_j )^2 \\
&= r (r-1) \twonorm{z}^2,
\end{align*}
which implies by Markov's inequality that
\[ \PR{ \fronorm{ \bR^T \bD_{\bz} \bR - \onenorm{\bz} \bI }^2 > 2r^2 \twonorm{ \bz }^2 } < \frac{1}{2}. \]
For (b), we write
\[  \bR^T \bD_{\bz} \bR - \onenorm{\bz} \bI = \sum_{i=1}^{d_2} z_i ( \br_i \br_i^T - \bI ), \]
which is a sum of mean-zero independent random matrices, so we apply the matrix Bernstein Inequality (Theorem \ref{thm:matrixbernstein}).  We have
for any $t > 0$,
\[ \PR{ \opnorm{ \sum_i z_i ( \br_i \br_i^T - \bI ) } > t } \leq r \exp \inparen{ \frac{ - t^2/2 }{ r\twonorm{\bz}^2 + rt \infnorm{ \bz}/3 } }, \]
using the fact that $\opnorm{ z_i (\br_i \br_i^T - \bI) } \leq \infnorm{\bz}(r-1) \leq \infnorm{ \bz}r$ for all $i$, and that
\[ \opnorm{ \EE \sum_i z_i^2 (\br_i \br_i^T - \bI )^2 } = \opnorm{ \sum_i z_i^2 (r-1)\bI } \leq r\twonorm{\bz}^2. \]
Choosing $t$ as in (b) in the statement of the claim finishes the proof.
\end{proof}

Having chosen this matrix $\bR$, we can now analyze the expression \eqref{eq:matrixform}.  
\begin{claim}
There are constants $c,c'$ so that 
with probability at least
\[ 1 -  2\exp \inparen{ - c \cdot \min \inset{ \frac{ \onenorm{\bv}^2\onenorm{\bz}^2 }{ \twonorm{\bv}^2 \twonorm{\bz}^2 } , \frac{ \onenorm{ \bv } \onenorm{ \bz } }{ \infnorm{ \bv}  \twonorm{ \bz} \sqrt{ r \log(r) }} , \frac{ \onenorm{ \bv } \onenorm{ \bz }}{  \infnorm{ \bv} \infnorm{ \bz } r \log(r) } } }
- e \cdot \exp \inparen{ \frac{ - c' r \onenorm{ \bv}^2 }{  \twonorm{ \bv}^2 } },
\]
we have
\[ \fronorm{ \bH \had ( \bX - \bX' )}^2 \geq r \onenorm{ \bv } \onenorm{ \bz }. \]
\end{claim}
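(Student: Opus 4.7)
The plan is to decompose $\bR^T \bD_\bz \bR$ into its diagonal and off-diagonal pieces, lower-bound the resulting ``diagonal'' contribution to $\fronorm{\bH \had (\bX - \bX')}^2$ by Hoeffding, and control the ``off-diagonal'' contribution (which has mean zero) by Hanson--Wright. Since $\bR \in \{\pm 1\}^{d_2 \times r}$, every diagonal entry of $\bR^T\bD_\bz\bR$ equals $\sum_i z_i = \onenorm{\bz}$ (note $\bz \geq 0$ because $\bz = (\bg \had \bb)^{(2)}$). Writing $\bF := \bR^T \bD_\bz \bR - \onenorm{\bz}\bI$, so $\bF$ has zero diagonal, and substituting into \eqref{eq:matrixform} gives
\[ \fronorm{\bH \had (\bX - \bX')}^2 = 4\onenorm{\bz}\sum_{i=1}^{d_1} v_i \|\bxi_i\|_2^2 \;+\; 4\sum_{i=1}^{d_1} v_i \bxi_i^T \bF \bxi_i. \]
I aim to show the first sum is $\geq \tfrac{3}{2}r\onenorm{\bv}\onenorm{\bz}$ with failure probability matching the $e\exp(\cdot)$ term, and that the absolute value of the second sum is $\leq \tfrac{1}{2}r\onenorm{\bv}\onenorm{\bz}$ with failure probability matching the $2\exp(\cdot)$ term; a union bound then yields the claim.

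For the diagonal piece, expand $\|\bxi_i\|_2^2 = \sum_{k=1}^r \xi_{i,k}^2$ where each $\xi_{i,k}^2$ is an independent Bernoulli$(1/2)$ variable. Then $\sum_i v_i \|\bxi_i\|_2^2 = \sum_{i,k} v_i \xi_{i,k}^2$ is a weighted sum of $rd_1$ independent $[0,\infnorm{\bv}]$-valued random variables with mean $\tfrac{r}{2}\onenorm{\bv}$ and $\sum_{i,k} v_i^2 = r\twonorm{\bv}^2$. A one-sided Hoeffding tail at deviation $\tfrac{r}{8}\onenorm{\bv}$ yields failure probability $\exp(-c' r\onenorm{\bv}^2/\twonorm{\bv}^2)$, in exactly the stated form.

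For the off-diagonal piece, stack $\bxi_1,\ldots,\bxi_{d_1}$ into a single vector $\bxi \in \{0,\pm 1\}^{rd_1}$ with mean-zero independent entries, and let $\tilde{\bF}$ be the $rd_1 \times rd_1$ block-diagonal matrix whose $i$th block is $v_i \bF$. Then $\sum_i v_i \bxi_i^T \bF \bxi_i = \bxi^T \tilde{\bF} \bxi$, $\tilde{\bF}$ has zero diagonal, and using the preceding claim's bounds on $\bF$,
\[ \fronorm{\tilde{\bF}}^2 = \twonorm{\bv}^2 \fronorm{\bF}^2 \leq 2r^2 \twonorm{\bv}^2 \twonorm{\bz}^2, \qquad \opnorm{\tilde{\bF}} = \infnorm{\bv}\opnorm{\bF} \leq 2\infnorm{\bv}\bigl(\twonorm{\bz}\sqrt{r\log r} + \infnorm{\bz} r\log r\bigr). \]
Applying Hanson--Wright (Theorem~\ref{thm:hansonwright}) at threshold $t = \tfrac{1}{8}r\onenorm{\bv}\onenorm{\bz}$: the ratio $t^2/\fronorm{\tilde{\bF}}^2$ is, up to absolute constants, exactly $\onenorm{\bv}^2\onenorm{\bz}^2/(\twonorm{\bv}^2\twonorm{\bz}^2)$; and for $t/\opnorm{\tilde{\bF}}$ I use the elementary bound $1/(a+b) \geq \tfrac{1}{2}\min(1/a, 1/b)$ to split into the two summands, giving (up to constants) $\sqrt{r}\onenorm{\bv}\onenorm{\bz}/(\infnorm{\bv}\twonorm{\bz}\sqrt{\log r})$ and $\onenorm{\bv}\onenorm{\bz}/(\infnorm{\bv}\infnorm{\bz}\log r)$. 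Both of these already dominate the two stated expressions (by a factor of $r$), so weakening to the stated form only makes the exponent smaller in magnitude and thus gives a valid upper bound on the failure probability.

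The main obstacle is purely bookkeeping in this last step: carefully matching the comparatively coarse stated three-term minimum, verifying that the additive splitting of $\opnorm{\tilde{\bF}}$ survives the $1/(a+b)$ manipulation, and choosing the deviation thresholds in the diagonal and off-diagonal halves consistently so that their sum is bounded below by the target $r\onenorm{\bv}\onenorm{\bz}$. None of the remaining steps are conceptually delicate; the identification $\bR^T \bD_\bz \bR$ has constant diagonal, which reduces the entire problem to one Bernoulli-sum concentration and one Hanson--Wright application, is the main idea.
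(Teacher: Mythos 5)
Your proposal is correct and follows essentially the same strategy as the paper: split $\bR^T\bD_{\bz}\bR$ into its (constant) diagonal $\onenorm{\bz}\bI$ plus an off-diagonal remainder $\bF$, so that $\fronorm{\bH\had(\bX-\bX')}^2$ decomposes into a "diagonal" sum handled by scalar concentration and an "off-diagonal" quadratic form $\bxi^T\tilde{\bF}\bxi$ handled by Hanson--Wright on a block-diagonal matrix. Your choice of deviation thresholds $\tfrac{r}{8}\onenorm{\bv}$ and $\tfrac{1}{8}r\onenorm{\bv}\onenorm{\bz}$, the identification $\bF = \bR^T\bD_{\bz}\bR - \onenorm{\bz}\bI$ (valid because the squared entries of $\bv$ and $\bz$ are nonnegative and $\bR$ is $\pm 1$-valued), and the use of $\tfrac{1}{a+b}\geq\tfrac{1}{2}\min(1/a,1/b)$ to match the three-term minimum in the claim are all accurate; your observation that the resulting exponents beat the stated ones by a factor of $r$ (so the claim holds a fortiori) is also a feature shared by the paper's own calculation.

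The one place you depart from the paper is the diagonal piece. The paper views $\twonorm{\bxi_i}^2 - r/2$, for each $i$, as a mean-zero subgaussian random variable and then invokes a Hoeffding-type inequality for sums of weighted subgaussians, which is where the $e\cdot\exp(\cdot)$ prefactor comes from. You instead unravel $\twonorm{\bxi_i}^2 = \sum_{k}\xi_{i,k}^2$ and note that each $\xi_{i,k}^2$ is a Bernoulli$(1/2)$ variable, so $\sum_{i,k}v_i\xi_{i,k}^2$ is a weighted sum of $rd_1$ independent $[0,v_i]$-valued variables; classical Hoeffding with $\sum_{i,k}v_i^2 = r\twonorm{\bv}^2$ then gives the tail $2\exp(-c' r\onenorm{\bv}^2/\twonorm{\bv}^2)$ directly. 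This is marginally more elementary (no appeal to subgaussian machinery), gives a slightly smaller prefactor ($2$ instead of $e$, which is irrelevant for the claim as stated), and is a perfectly valid substitution. Everything else matches the paper's argument.
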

\begin{proof}
We break the left hand side up into two terms:
\begin{align*}
\fronorm{ \bH \had ( \bX - \bX' ) }^2 &= 
4 \sum_i v_i \bxi_i^T \bR^T \bD_{\bz} \bR \bxi_i \notag \\
&= 4 \sum_i v_i \bxi_i^T ( \bR^T \bD_{\bz} \bR  - \onenorm{\bz} \bI )\bxi_i + 4 \onenorm{ \bz } \sum_i v_i \twonorm{ \bxi_i }^2 \notag \\
&:= (I) + (II)
\label{eq:twoterms}
\end{align*}

For the first term (I), we will use the Hanson-Wright Inequality (Theorem~\ref{thm:hansonwright}).
In our case, the matrix $\bF$ is a block-diagonal matrix consisting of $d_1$ blocks which are $r \times r$, where the $i$'th block is equal to $4 v_i (\bR^T \bD_{\bz} \bD - \onenorm{\bz} \bI )$.  The Frobenius norm of this matrix is bounded by
\[ \fronorm{\bF}^2 = 16 \sum_i v_i^2 \fronorm{ \bR^T \bD_{\bz} \bD - \onenorm{\bz} \bI }^2 \leq 32 r^2 \twonorm{\bv}^2 \twonorm{ \bz}^2. \]
The operator norm of $\bF$ is bounded by
\[ \opnorm{\bF} = \infnorm{\bv} \infnorm{ \bR^T \bD_{\bz} \bD - \onenorm{\bz} \bI } \leq 2 \infnorm{ \bv} \inparen{ \twonorm{ \bz } \sqrt{ r\log(r)} + \infnorm{ \bz } r \log(r) }. \]
Thus, the Hanson-Wright inequality implies that
\[ \PR{ (I) > t}
\leq 2\exp \inparen{ - c \cdot \min \inset{ \frac{ t^2 }{ 32 r^2 \twonorm{\bz}^2 \twonorm{ \bv }^2 } , \frac{ t}{ \infnorm{ \bv } \inparen{ \twonorm{ \bz} \sqrt{ r \log(r) } + \infnorm{ \bz } r \log(r) } } } }. \]
Plugging in $t = \frac{ r \onenorm{\bz} \onenorm{ \bv } }{2}$, and replacing the constant $c$ with a different constant $c'$, we have
\begin{equation}\label{eq:need1}
 \PR{ (I) > \frac{ r \onenorm{ \bz } \onenorm{ \bv }}{2} }
\leq 2\exp \inparen{ - c' \cdot \min \inset{ \frac{ \onenorm{\bv}^2\onenorm{\bz}^2 }{ \twonorm{\bv}^2 \twonorm{\bz}^2 } , \frac{ \onenorm{ \bv } \onenorm{ \bz } }{ \infnorm{ \bv}  \twonorm{ \bz} \sqrt{ r \log(r) }} , \frac{ \onenorm{ \bv } \onenorm{ \bz }}{  \infnorm{ \bv} \infnorm{ \bz } r \log(r) } } }
\end{equation}

Next we turn to the second term (II).  We write
\[ (II) = 4 \onenorm{ \bz } \sum_i v_i \inparen{ \twonorm{ \bxi_i }^2 - \frac{r}{2} } + 2r \onenorm{ \bz } \onenorm{ \bv } \]
and bound the error term $4 \onenorm{ \bz } \sum_i v_i \inparen{ \twonorm{ \bxi_i }^2 - r/2 }$ with high probability.  Observe that for each $i$, $\twonorm{ \bxi_i }^2 - r/2$ is a mean-zero subgaussian random variable, which satisfies for all $t > 0$ that
\[ \PR{ |\twonorm{ \bxi_i }^2 - r/2| > t } \leq \exp\inparen{ \frac{ -c'' \cdot t^2  }{ r } } \]
for some constant $c''$.  Thus by a version of Hoeffding's inequality (e.g., Proposition 5.10 in \cite{vershynin2010}), for any $t> 0$ we have
\[ \PR{ \inabs{ \sum_i v_i  \twonorm{ \bxi_i }^2 - \frac{ \onenorm{\bv} r }{2} } > t }  \leq e \cdot \exp \inparen{ \frac{ - c''' \cdot t^2 }{ r \twonorm{\bv}^2 } } \]
for some other constant $c'''$.
Thus, 
\begin{align}
\PR{ \inabs{ (II) - 2r \onenorm{ \bz } \onenorm{ \bv } } > \frac{ r \onenorm{ \bv } \onenorm{ \bz} }{2} }
&= \PR{ 4\onenorm{ \bz } \inabs{ \sum_i v_i \inparen{ \twonorm{ \bxi_i }^2 - \frac{r}{2} }} > \frac{ r \onenorm{ \bv } \onenorm{ \bz }} { 2 } } \\
&= \PR{ \inabs{ \sum_i v_i \inparen{ \twonorm{ \bxi_i }^2 - \frac{r}{2} } } > \frac{ r \onenorm{ \bv } }{8 } } \\
&\leq e \cdot \exp \inparen{ \frac{ - c''' r^2 \onenorm{ \bv}^2 }{ 8r \twonorm{ \bv}^2 } }.
\label{eq:need2}
\end{align}

In the favorable case of both \eqref{eq:need1} and \eqref{eq:need2}, we conclude that
\begin{align*}
\fronorm{ \bH \had ( \bX - \bX' ) }^2  &= (I) + (II)\\
&\geq 2r \onenorm{ \bz } \onenorm{ \bv} - \inabs{ (II) - 2r \onenorm{ \bz } \onenorm{ \bv } } - \inabs{ (I) } \\
&\geq r \onenorm{ \bz } \onenorm{ \bv},
\end{align*}
as desired.
\end{proof}
Now a union bound over all of the points in $\cX$ establishes items 1 and 3 of the lemma, along with the observation that $\onenorm{\bz} \onenorm{\bv} = \fronorm{ \bH \had \bA }^2$.
\end{proof}

\section{Case study: when $\lambda$ is small}\label{sec:nogap}
The point of this section is to examine our general bounds from Sections~\ref{sec:upper} and \ref{sec:lower} in the case when the parameter $\lambda = \opnorm{ \bW^{(1/2)} - \bW^{(-1/2)} \had \ind{\Omega} }$ is small.  One case where this happens is the following.

Let $\bW \in \R^{d_1 \times d_2}$ be a rank-1 matrix so so that every entry of $\bW$ satisfies 
\[ W_{ij} \in \left[ \frac{1}{\sqrt{d_1d_2}}, 1 \right]. \]
Now we'd like to consider some $\Omega$ that is ``close'' to $\bW$ in the sense that $\lambda$ is small.  One way to get such an $\Omega$ is to draw it randomly so that $(i,j) \in \Omega$ with probability $W_{ij}$.
As we will show below, in this case $\bW \approx \ind{\Omega}$, and in particular $\lambda$ will be small.\footnote{The reason that we make the assumption that the entries of $\bW$ are not too small is so that $\lambda$ will be small with high probability.  Otherwise, this distribution is not a good case study for the ``small $\lambda$'' case.}  

We emphasize that even though $\Omega$ is drawn at random in this thought experiment, the goal is to understand our bounds for \em deterministic \em sampling matrices $\Omega$.  That is, the upper bounds are still uniform (they hold simultaneously for all appropriate matrices $\bM$), and this model is just a way to generate matrices $\Omega$ so that $\lambda$ is small, to test our uniform bounds on.  We will show that for most $\Omega$ that are close to $\bW$ (in the above sense), our upper and lower bounds from Sections~\ref{sec:upper} and \ref{sec:lower} nearly match.

\subsection{Upper bounds}
In this section we specialize Theorem~\ref{lem:upperrankk} to the case where $\Omega$ is drawn randomly proportional to $\bW$, as discussed above.

We begin with some bounds on the parameters $\lambda$ and $\mu$ in this case.
\begin{lemma}\label{lem:lambdamu}
Let $\bW = \bw \bu^T \in \R^{d_1 \times d_2}$ be a rank-1 matrix 
so that for all $i,j \in [d_1] \times [d_2]$, $W_{ij} \in [ 1/ \sqrt{d_1d_2}, 1 ]$.
Suppose that $\Omega \subseteq [d_1] \times [d_2]$ so that for each $i \in [d_1], j \in [d_2]$, $(i,j) \in \Omega$ with probability $W_{ij}$, independently for each $(i,j)$.  
Then with probability at least $1 - 3/(d_1 + d_2)$ over the choice of $\Omega$, we have
\[ \lambda \leq 2 \sqrt{ d_1 + d_2 } \log(d_1 + d_2) \]
and
\[ \mu\leq 2 \sqrt{ (d_1 + d_2) \log(d_1 + d_2) }, \]
where $\lambda$ and $\mu$ are as in \eqref{eq:lambda} and \eqref{eq:mu}.
\end{lemma}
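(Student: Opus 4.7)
The plan is to establish each bound via a Bernstein-type concentration inequality: the matrix Bernstein inequality (Theorem~\ref{thm:matrixbernstein}) for $\lambda$, and scalar Bernstein combined with a union bound for $\mu$. The fact that $\bW$ has a rank-one structure plays no direct role; what matters is only the pointwise size condition $W_{ij} \in [1/\sqrt{d_1 d_2}, 1]$.

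For $\lambda$, I would write
\[ \bW^{(1/2)} - \bW^{(-1/2)} \had \ind{\Omega} = \sum_{(i,j)} X_{ij}\, \be_i \be_j^T, \]
where $X_{ij} = W_{ij}^{1/2} - W_{ij}^{-1/2}\ind{(i,j) \in \Omega}$. A direct computation shows each $X_{ij}$ is mean zero with $\EE[X_{ij}^2] = 1 - W_{ij} \leq 1$, and is almost surely bounded by $W_{ij}^{-1/2} \leq (d_1 d_2)^{1/4}$. Since Theorem~\ref{thm:matrixbernstein} is stated for symmetric matrices, I would apply it to the dilation $\tilde{\bY} = \begin{pmatrix} 0 & \bY \\ \bY^T & 0 \end{pmatrix}$, whose operator norm equals $\opnorm{\bY} = \lambda$. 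Each summand $X_{ij}(\be_i \be_j^T \oplus \be_j \be_i^T)$ is symmetric, independent, mean zero, and bounded by $R := (d_1 d_2)^{1/4}$; the second-moment matrix is block diagonal with blocks $\sum_j \EE[X_{ij}^2]\, \be_i \be_i^T$ and $\sum_i \EE[X_{ij}^2]\, \be_j \be_j^T$, so the variance proxy satisfies $\sigma^2 \leq \max(d_1, d_2) \leq d_1+d_2$. Plugging $t = 2\sqrt{d_1+d_2}\log(d_1+d_2)$ into Theorem~\ref{thm:matrixbernstein} and checking that at this scale the sub-Gaussian term $\sigma^2$ remains comparable to $Rt/3$ (using $R \leq \sqrt{(d_1+d_2)/2}$ by AM-GM) yields the claimed bound with failure probability $\lesssim 1/(d_1+d_2)$.

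For $\mu$, fix a row index $i$ and set $S_i = \sum_j \ind{(i,j) \in \Omega}/W_{ij}$. This is a sum of independent nonnegative random variables with $\EE[S_i] = d_2$, per-summand bound $1/W_{ij} \leq \sqrt{d_1d_2}$, and total variance at most $\sum_j 1/W_{ij} \leq d_2\sqrt{d_1d_2}$. The scalar Bernstein inequality applied with a deviation $t$ chosen so that $d_2 + t \leq 4(d_1+d_2)\log(d_1+d_2)$ then yields $S_i \leq 4(d_1+d_2)\log(d_1+d_2)$ with probability $1 - O((d_1+d_2)^{-4})$. A symmetric argument handles each column sum, and a union bound over the $d_1 + d_2$ row and column indices gives the desired bound on $\mu^2$, which combined with the $\lambda$ bound completes the proof.

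The main obstacle is not conceptual but quantitative: the bounds are entirely governed by the hypothesis $W_{ij} \geq 1/\sqrt{d_1 d_2}$, which is precisely what keeps the almost-sure sizes $(d_1 d_2)^{1/4}$ and $\sqrt{d_1 d_2}$ of the summands small enough that the sub-Gaussian regime of Bernstein controls the tail at the desired scale. Dropping this hypothesis would let the sub-exponential term dominate and degrade both conclusions substantially. The only delicate bookkeeping is therefore verifying that, for the chosen $t$, the ratio $Rt/\sigma^2$ is at most a $\polylog(d)$ factor, so that the denominator $\sigma^2 + Rt/3$ in Theorem~\ref{thm:matrixbernstein} remains of order $\sigma^2 \cdot \polylog(d)$.
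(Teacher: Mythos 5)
Your proposal is correct and follows essentially the same route as the paper: for $\lambda$ it uses the same decomposition into mean-zero rank-one terms $X_{ij}\,\be_i\be_j^T$ with the same variance and almost-sure bounds and then applies matrix Bernstein, and for $\mu$ it applies scalar Bernstein to each row/column sum followed by a union bound. The one small refinement over the paper's write-up is that you explicitly pass to the self-adjoint dilation before invoking Theorem~\ref{thm:matrixbernstein} (which the paper states only for symmetric matrices but then applies to the rectangular $\bS$), a step the paper leaves implicit.
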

\begin{proof}
Fix $i \in [d_1]$. Bernstein's inequality yields \newcommand{\omij}{\ensuremath{\ind{(i,j) \in \Omega}}}
\[
\PR{\sum_{j=1}^{d_2} \frac{\omij}{w_i u_j} - d_2 > 2\sqrt{2}(d_1 + d_2)\log(d_1 + d_2)}
\leq \frac{1}{(d_1 + d_2)^2}.
\]
Hence, by taking a union bound, 
\[
\PR{\max_{i \in [d_1]} \sum_{j=1}^{d_2} \frac{\omij}{w_i u_j} > 4(d_1+ d_2) \log (d_1 + d_2)}
\leq \frac{d_1}{(d_1 + d_2)^2} \leq \frac{1}{d_1 + d_2}.
\]
A similar argument gives the bound
\[
\PR{ \max_{j \in [d_2]} \sum_{i=1}^{d_1} \frac{\omij}{w_i u_j} > 4(d_1 + d_2) \log (d_1 + d_2)}
\leq \frac{1}{d_1 + d_2}. 
\]
Combining these two inequalities,
\begin{equation}\label{ineq:mubound-sampleW}
\mu \leq 2\sqrt{(d_1 + d_2) \log (d_1 + d_2)}
\quad
\text{with probability at least $1 - 2/(d_1 + d_2)$.}
\end{equation}
To bound $\lambda = \|\bW^{(-1/2)} \had ( \bW - \ind{\Omega} ) \|$, put
$\gamma_{ij} = (1/\sqrt{w_i u_j})(w_iu_j - \omij)$, $\bX_{ij} = \gamma_{ij} \be_i \be_j^T$, and
write
\[
\bS := \bW^{(-1/2)} \had (\bW - \ind{\Omega}) = \sum_{i=1}^{d_1} \sum_{j=1}^{d_2} \bX_{ij}.
\]
Set $\nu := \max(\|\EE \bS \bS^T \| , \|\EE \bS^T \bS\|)$.
Note that 
\[
\EE \bS\bS^T = \sum_{i=1}^{d_1} \left(\sum_{j=1}^{d_2} \EE \gamma_{ij}^2 \right) \be_i \be_i^T.
\]
Since $\EE \gamma_{ij}^2 = 1 - w_i u_j \leq 1$, the display above gives $\|\EE \bS\bS^T\| \leq d_2$.
Similarly, $\|\EE \bS^T \bS\| \leq d_1$, and so $\nu \leq d_1 + d_2$. 
Furthermore, with probability 1, $|\gamma_{ij}| \leq 2(d_1 d_2)^{1/4} \leq \sqrt{d_1 + d_2}$ so
$\|\bX_{ij}\| \leq \sqrt{d_1 + d_2}$ almost surely. Then, the matrix Bernstein Inequality (Theorem \ref{thm:matrixbernstein}) gives
\[
\PR{\lambda \geq 2\sqrt{d_1 + d_2}\log(d_1 + d_2)} \leq (d_1 + d_2) \exp\left(-\frac{2(d_1 + d_2) \log^2(d_1 + d_2)}{\nu + 2(d_1 + d_2)\log (d_1+ d_2)/3}\right)
\leq \frac{1}{d_1 + d_2}.
\]
Thus,
\begin{equation}\label{ineq:lambdabound-sampleW}
\lambda \leq 2\sqrt{d_1 + d_2} \log (d_1 + d_2),
\quad
\text{with probability at least $1 - 1/(d_1 + d_2)$.}
\end{equation}
\end{proof}

Let $m = \fronorm{ \bW^{(1/2)} }$. It is easy to see that $m = \EE|\Omega|$ as well, and below we show that $|\Omega|$ is very close to $m$ with high probability.
\begin{lemma}\label{claim:Omegam}
Let $m = \fronorm{ \bW^{(1/2)} }$.
There is some constant $C$ so that 
with probability at least $1 - 2\exp(-C\cdot m)$, 
\[ \inabs{ |\Omega| - m } \leq m/4. \]
\end{lemma}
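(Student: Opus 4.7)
The plan is to observe that $|\Omega| = \sum_{(i,j) \in [d_1] \times [d_2]} \ind{(i,j) \in \Omega}$ is a sum of independent Bernoulli random variables, where each indicator $\ind{(i,j) \in \Omega}$ has mean $W_{ij}$. Consequently, setting $M := \EE|\Omega| = \sum_{i,j} W_{ij} = \fronorm{\bW^{(1/2)}}^2$ (so $M = m^2$ in the notation of the lemma; I will work with $M$ to avoid confusion, and at the end use whichever normalization the authors intend), the quantity $|\Omega|$ concentrates sharply around $M$.

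First I would apply a standard multiplicative Chernoff bound, or equivalently Bernstein's inequality applied to the centered indicators $X_{ij} := \ind{(i,j) \in \Omega} - W_{ij}$. Each $X_{ij}$ satisfies $|X_{ij}| \leq 1$ and has variance $W_{ij}(1 - W_{ij}) \leq W_{ij}$, so the sum of variances is at most $M$. Bernstein's inequality then yields
\[
\PR{ \bigl| |\Omega| - M \bigr| \geq t } \leq 2 \exp\!\left( -\frac{t^2/2}{M + t/3} \right)
\]
for every $t \geq 0$. Setting $t = M/4$ and simplifying the denominator gives
\[
\PR{ \bigl| |\Omega| - M \bigr| \geq M/4 } \leq 2 \exp\!\left( -C \cdot M \right)
\]
for an absolute constant $C > 0$ (the constants $1/2$ and $1/3$ combine to give something like $C = 3/104$; I would not belabor the exact value).

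Interpreting the output: the inequality produced is $\bigl||\Omega| - M\bigr| \leq M/4$ with probability $1 - 2\exp(-CM)$, where $M = \fronorm{\bW^{(1/2)}}^2$. Reading the lemma literally with $m = \fronorm{\bW^{(1/2)}}$ would give a concentration statement of the form $||\Omega| - m| \leq m/4$ with probability $1 - 2\exp(-Cm)$, which appears to conflate $m$ and $m^2$; my reading is that the intended normalization is $M = m^2 = \EE|\Omega|$, in which case the bound $2\exp(-Cm^2) \leq 2\exp(-Cm)$ (valid whenever $m \geq 1$, as is the case in the regime of interest where $\Omega$ has at least constantly many entries in expectation) recovers the stated probability bound.

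The proof is essentially one-line after invoking Bernstein, so there is no real obstacle. The only thing to verify carefully is the normalization issue flagged above, i.e., reconciling ``$m = \fronorm{\bW^{(1/2)}}$'' with ``$m = \EE|\Omega|$''; the substantive probabilistic content is simply multiplicative concentration of a sum of independent bounded variables with total mean $M$, which is immediate.
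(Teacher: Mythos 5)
Your proof is correct and follows the same Bernstein-inequality route as the paper's, but your version is cleaner in two respects, both of which you handle well. First, you bound the variance of the centered sum directly by $\sum_{i,j} W_{ij}(1-W_{ij}) \leq \sum_{i,j} W_{ij} = \EE|\Omega|$, which is the natural estimate and requires no hypotheses on $\bW$ beyond $W_{ij} \in (0,1]$. The paper instead writes the variance as $\twonorm{\bw}^2\twonorm{\bu}^2 = \sum_{i,j} W_{ij}^2$, which is not in general an upper bound on $\sum_{i,j} W_{ij}(1-W_{ij})$ (the inequality fails whenever $W_{ij} < 1/2$), and then invokes a flatness assumption on $\bw$ and $\bu$ --- an assumption that does not appear in the lemma's hypotheses --- to convert this to $O(\EE|\Omega|)$. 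Your argument dispenses with both the questionable variance estimate and the extraneous flatness hypothesis, so it actually proves the lemma in the generality in which it is stated. Second, you correctly flag and resolve the normalization inconsistency: the lemma defines $m = \fronorm{\bW^{(1/2)}}$, but the surrounding text asserts $m = \EE|\Omega|$, and since $\EE|\Omega| = \sum_{i,j} W_{ij} = \fronorm{\bW^{(1/2)}}^2$, the intended quantity is $\fronorm{\bW^{(1/2)}}^2$. Your reading matches what the paper's own proof implicitly uses (the Bernstein application there is for concentration around the mean $\EE|\Omega|$), so your resolution of the typo is the right one.
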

\begin{proof}
We have 
\[ \inabs{ |\Omega| - m } = \inabs{ \sum_{i,j} \left( \ind{ (i,j) \in \Omega} - W_{ij} \right) } = \inabs{ \sum_{i,j} \inparen{ \ind{(i,j) \in \Omega} - \EE \ind{ (i,j) \in \Omega } } }, \]
which is the sum of mean-zero independent random variables.  By Bernstein's inequality, 
\[ \PR{ \inabs{|\Omega| - m } \geq m/4 } \leq 2 \exp \inparen{ \frac{ -m^2/32 }{ \twonorm{ \bw }^2 \twonorm{ \bu }^2 + m/12 } }. \]
Now using the assumption that $\bw$ and $\bu$ are flat, we have
\[ \twonorm{ \bw }^2 \twonorm{ \bu }^2 \leq \frac{(C')^4}{d_1d_2} \onenorm{ \bw }^2 \onenorm{ \bu }^2 = \frac{ (C')^4 m^2 }{d_1d_2 } \leq (C')^4 m, \]
which proves the claim after choosing
$ C = \frac{ 1 } { 32 ( (C')^4 + 12 }.$
\end{proof}

With these computations out of the way, we may apply Theorems~\ref{lem:upperrankk} and \ref{lem:uppermaxball} in this setting.  
Theorem~\ref{thm:upperOmega} follows immediately from Theorem~\ref{lem:upperrankk} and Lemmas~\ref{lem:lambdamu} and \ref{claim:Omegam}.
\begin{theorem}\label{thm:upperOmega}
Let $\bW = \bw \bu^T \in \R^{d_1 \times d_2}$ be a rank-1 matrix 
so that for all $i,j \in [d_1] \times [d_2]$, $W_{ij} \in [1/\sqrt{d_1d_2}, 1]$.
Suppose that $\Omega \subseteq [d_1] \times [d_2]$ so that for each $i \in [d_1], j \in [d_2]$, $(i,j) \in \Omega$ with probability $W_{ij}$, independently for each $(i,j)$. 
Then with probability at least $1 - 4/(d_1 + d_2)$ over the choice of $\Omega$, the following holds.

There is an algorithm $\cA$ so that for any rank-r matrix $\bM$ with $\infnorm{\bM} \leq \beta$,
$\mathcal{A}$ returns $\hat{M} = \cA( \bM_\Omega + \bZ_\Omega)$
so that with probability at least $1 - 1/d$ over the choice of $\bZ$,
\[
\frac{\fronorm{ \bW^{(1/2)} \had ( \bM - \hat{ \bM } ) }}{ \fronorm{ \bW^{(1/2)} } }  \leq 8 \beta \sqrt{ \frac{ r^2 (d_1 + d_2)  } {|\Omega|}} \log(d_1 + d_2) +
16\sigma  \sqrt{\frac{ r(d_1 + d_2)}{|\Omega|} }\log(d_1 + d_2 ).
\]
\end{theorem}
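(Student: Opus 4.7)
\medskip

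\noindent\textbf{Proof proposal.} The plan is to simply combine Theorem~\ref{lem:upperrankk} with the two auxiliary lemmas above (Lemma~\ref{lem:lambdamu} and Lemma~\ref{claim:Omegam}) and track constants. Essentially, Theorem~\ref{thm:upperOmega} is a routine corollary of the general upper bound, once we know that (i) the sampling parameters $\lambda$ and $\mu$ are small for a random $\Omega \sim \bW$, and (ii) $|\Omega|$ concentrates around $\fronorm{\bW^{(1/2)}}^2 = \sum_{i,j} W_{ij}$, so that the normalization $\fronorm{\bW^{(1/2)}}$ in the guarantee of Theorem~\ref{lem:upperrankk} can be rewritten in terms of $|\Omega|$ (which is the quantity that appears in the statement).

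\medskip

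\noindent First, I will take the algorithm $\cA$ to be the debiased projection estimator $\hat{\bM} = \bW^{(-1/2)} \had \argmin_{\rnk(\bX) = r} \norm{\bX - \bW^{(-1/2)} \had (\bM_\Omega + \bZ_\Omega)}$, exactly as in Theorem~\ref{lem:upperrankk}. Applied with the hypotheses $\rnk(\bM) = r$ and $\infnorm{\bM} \leq \beta$, Theorem~\ref{lem:upperrankk} gives, with probability at least $1 - 1/(d_1+d_2)$ over $\bZ$,
\[
\fronorm{\bW^{(1/2)} \had (\bM - \hat{\bM})} \leq 2\sqrt{2}\, r \lambda \beta + 4\sqrt{2}\,\sigma \mu \sqrt{r \log(d_1+d_2)}.
\]
Next, Lemma~\ref{lem:lambdamu} gives (on an event of probability at least $1 - 3/(d_1+d_2)$ over $\Omega$) the bounds $\lambda \leq 2\sqrt{d_1+d_2}\log(d_1+d_2)$ and $\mu \leq 2\sqrt{(d_1+d_2)\log(d_1+d_2)}$. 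Substituting these in yields
\[
\fronorm{\bW^{(1/2)} \had (\bM - \hat{\bM})} \leq 4\sqrt{2}\, r \beta \sqrt{d_1+d_2}\, \log(d_1+d_2) + 8\sqrt{2}\,\sigma \sqrt{r(d_1+d_2)}\, \log(d_1+d_2).
\]

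\medskip

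\noindent Now I need to replace $\fronorm{\bW^{(1/2)}}$ by $|\Omega|^{1/2}$ when I divide through. Setting $m := \sum_{i,j} W_{ij} = \fronorm{\bW^{(1/2)}}^2$ (which equals $\EE|\Omega|$), Lemma~\ref{claim:Omegam} gives $|\Omega| \leq 5m/4$, i.e.\ $\fronorm{\bW^{(1/2)}}^2 \geq (4/5)|\Omega|$, with probability at least $1 - 2e^{-Cm}$; the lower bound $W_{ij} \geq 1/\sqrt{d_1d_2}$ forces $m \geq \sqrt{d_1 d_2}$, which is comfortably large enough that $2e^{-Cm} \leq 1/(d_1+d_2)$. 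Dividing the displayed inequality by $\fronorm{\bW^{(1/2)}}$ and using this lower bound gives
\[
\frac{\fronorm{\bW^{(1/2)} \had (\bM - \hat{\bM})}}{\fronorm{\bW^{(1/2)}}} \leq C_1 \beta \sqrt{\frac{r^2(d_1+d_2)}{|\Omega|}}\log(d_1+d_2) + C_2 \sigma \sqrt{\frac{r(d_1+d_2)}{|\Omega|}}\log(d_1+d_2),
\]
where the constants $C_1 = 4\sqrt{2}\sqrt{5/4} \approx 6.32$ and $C_2 = 8\sqrt{2}\sqrt{5/4} \approx 12.65$ can both be absorbed into the bounds $8$ and $16$ claimed in the theorem.

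\medskip

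\noindent Finally, a union bound on the two events from Lemmas~\ref{lem:lambdamu} and \ref{claim:Omegam} (which are both events on $\Omega$ alone) yields that the needed estimates on $\lambda$, $\mu$, and $|\Omega|$ all hold simultaneously with probability at least $1 - 4/(d_1+d_2)$ over the draw of $\Omega$. On this event, the inequality above holds for every rank-$r$ matrix $\bM$ with $\infnorm{\bM} \leq \beta$, with probability at least $1 - 1/(d_1+d_2)$ over $\bZ$, which is the statement of Theorem~\ref{thm:upperOmega}. There is no real obstacle here beyond the bookkeeping of constants and the verification that $2e^{-Cm} \leq 1/(d_1+d_2)$ using the assumption $W_{ij} \geq 1/\sqrt{d_1 d_2}$; all the nontrivial probabilistic work has already been done in Theorem~\ref{lem:upperrankk} and the two preparatory lemmas.
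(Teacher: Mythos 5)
Your proposal is correct and follows exactly the same approach as the paper's proof: apply the general upper bound of Theorem~\ref{lem:upperrankk}, substitute the bounds on $\lambda$ and $\mu$ from Lemma~\ref{lem:lambdamu}, then use Lemma~\ref{claim:Omegam} to replace the $\fronorm{\bW^{(1/2)}}$ normalization by $|\Omega|$ and take a union bound over the events on $\Omega$. Your constant-tracking is in fact slightly more careful than the paper's (which silently drops a $\sqrt{2}$ when substituting the $\lambda,\mu$ bounds), and your observation that $m \ge \sqrt{d_1 d_2}$ suffices to absorb the $2e^{-Cm}$ failure probability into the $4/(d_1+d_2)$ budget is the right justification.
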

\begin{proof}
Plugging in Lemma~\ref{lem:lambdamu} to Theorem~\ref{lem:upperrankk} shows that
\[
\fronorm{ \bW^{(1/2)} \had ( \bM - \hat{ \bM } ) }  \leq  4r\sqrt{d_1 + d_2} \log(d_1 + d_2) \infnorm{ \bM } +
8\sigma  \sqrt{r(d_1 + d_2)}\log(d_1 + d_2 ).
\]
The result follows by using $\infnorm{\bM} \leq \beta$, and by normalizing and using Lemma~\ref{claim:Omegam} to replace $\fronorm{ \bW^{(1/2)} }$ with $|\Omega|$.
\end{proof}

Similarly,
Theorem~\ref{thm:upperOmegaApprox} below follows immediately from Theorem~\ref{lem:uppermaxball} and Lemmas~\ref{lem:lambdamu} and Lemma~\ref{claim:Omegam}.
\begin{theorem}\label{thm:upperOmegaApprox}
There is some constant $C$ so that the following holds.
Let $\bW = \bw \bu^T \in \R^{d_1 \times d_2}$ be a rank-1 matrix 
so that for all $i,j \in [d_1] \times [d_2]$, $W_{ij} \in [1/\sqrt{d_1d_2}, 1]$.
Suppose that $\Omega \subseteq [d_1] \times [d_2]$ so that for each $i \in [d_1], j \in [d_2]$, $(i,j) \in \Omega$ with probability $W_{ij}$, independently for each $(i,j)$.  
Then with probability at least $1 - 4/(d_1 + d_2)$ over the choice of $\Omega$, the following holds.

There is an algorithm $\cA$ so that for $d_1 \times d_2$ matrix $\bM \in \beta \sqrt{r} B_{\max}$,
$\mathcal{A}$ returns $\hat{M} = \cA( \bM_\Omega + \bZ_\Omega)$
so that with probability at least $1 - 1/(d_1 + d_2)$ over the choice of $\bZ$,
\[
\frac{ \fronorm{ \bW^{(1/2)} \had ( \bM - \hat{ \bM } ) } }{ \fronorm{ \bW^{(1/2) } }} \leq 
C\beta \inparen{ \frac{ r^2 (d_1 + d_2) }{m} }^{1/4} \log^{1/2}(d_1 + d_2)  + C \sqrt{\beta \sigma} \inparen{ \frac{ r(d_1 + d_2) }{m } }^{1/4} \log^{1/4}(d_1 + d_2). \]
\end{theorem}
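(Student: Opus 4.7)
The plan is parallel to the preceding proof of Theorem~\ref{thm:upperOmega}: I apply Theorem~\ref{lem:uppermaxball} as a black box with $\cA$ equal to the debiased max-norm projection, and then substitute the random-sampling bounds on $\lambda$, $\mu$, and $m := \fronorm{\bW^{(1/2)}}^2$ coming from Lemmas~\ref{lem:lambdamu} and~\ref{claim:Omegam}.

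First, I would take $\cA$ to be the estimator from Theorem~\ref{lem:uppermaxball}, namely
\[ \hat{\bM} = \bW^{(-1/2)} \had \argmin_{\maxnorm{\bX} \leq \beta\sqrt{r}} \norm{\bX - \bW^{(-1/2)} \had (\bM_\Omega + \bZ_\Omega)}. \]
Conditional on any realization of $\Omega$, Theorem~\ref{lem:uppermaxball} guarantees that, with probability at least $1 - 1/(d_1+d_2)$ over $\bZ$,
\[ \fronorm{\bW^{(1/2)} \had (\bM - \hat{\bM})} \leq C \fronorm{\bW^{(1/2)}}^{1/2} \inparen{ \beta\sqrt{r\lambda} + \sqrt{\beta\sigma}\,(\mu^2 r \log(d_1+d_2))^{1/4}}. \]

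Next, I would invoke Lemma~\ref{lem:lambdamu} to substitute $\lambda \leq 2\sqrt{d_1+d_2}\log(d_1+d_2)$ and $\mu^2 \leq 4(d_1+d_2)\log(d_1+d_2)$, a joint event that holds with probability at least $1 - 3/(d_1+d_2)$ over the random $\Omega$. Substituting term by term and absorbing universal constants and residual logarithmic factors into $C$, each summand simplifies to the advertised form.

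Finally, I would divide both sides by $\fronorm{\bW^{(1/2)}} = \sqrt{m}$ to obtain the normalized error on the left. Because the right-hand side carries the outer factor $\fronorm{\bW^{(1/2)}}^{1/2} = m^{1/4}$, the net power of $m$ is $m^{-1/4}$, which is exactly what produces the fourth-root denominators $(r^2(d_1+d_2)/m)^{1/4}$ and $(r(d_1+d_2)/m)^{1/4}$ in the statement. Lemma~\ref{claim:Omegam} then permits the replacement of $m$ by $|\Omega|$ up to a constant, again absorbed into $C$; its failure probability $2\exp(-C m)$ is dominated by $1/(d_1+d_2)$ because the hypothesis $W_{ij} \geq 1/\sqrt{d_1 d_2}$ forces $m = \sum_{ij} W_{ij} \geq \sqrt{d_1 d_2}$. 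A union bound over the three failure events (noise, $\lambda/\mu$, and $|\Omega|$) yields the claimed overall probability at least $1 - 4/(d_1+d_2)$.

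I do not anticipate any substantive mathematical obstacle: the argument is pure bookkeeping and mirrors the proof of Theorem~\ref{thm:upperOmega} almost verbatim. The one place requiring a moment of care is tracking how the leading $\fronorm{\bW^{(1/2)}}^{1/2}$ in Theorem~\ref{lem:uppermaxball} interacts with the normalization $\fronorm{\bW^{(1/2)}}^{-1}$; it is precisely this mismatch that converts the interior expressions into fourth-root quantities in the final bound, which is the source of the quantitative gap (a quadratic rather than linear blowup in the required sample count as a function of target error $\eps$) between the approximately rank-$r$ and the exactly rank-$r$ regimes.
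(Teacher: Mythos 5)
Your proposal takes exactly the paper's approach: the paper explicitly says Theorem~\ref{thm:upperOmegaApprox} ``follows immediately from Theorem~\ref{lem:uppermaxball} and Lemmas~\ref{lem:lambdamu} and Lemma~\ref{claim:Omegam},'' and you invoke precisely these ingredients, substituting the high-probability bounds on $\lambda$, $\mu$, and $|\Omega|$, normalizing by $\fronorm{\bW^{(1/2)}}$, and union-bounding the failure probabilities. One small bookkeeping note: substituting $\mu^2 \lesssim (d_1+d_2)\log(d_1+d_2)$ into $(\mu^2 r\log(d_1+d_2))^{1/4}$ yields a $\log^{1/2}(d_1+d_2)$ factor on the second summand rather than the $\log^{1/4}(d_1+d_2)$ shown in the theorem's display, which appears to be a typo in the paper rather than a gap in your argument.
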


\subsection{Lower bound for exactly rank $r$ matrices}
In this section, we will prove a lower bound in the case where $\Omega$ is drawn proportionally to $\bW$.
We begin with a warm-up result for ``flat'' weight matrices $\bW$.

\begin{lemma}[Lower bound for low-rank matrices when $\bW$ is flat and $\Omega \sim \bW$]\label{lem:flatLB}
Let $\bW = \bw \bu^T \in \R^{d_1 \times d_2}$ be a rank-1 matrix with strictly positive entries and with $\infnorm{ \bW } \leq 1$.  
Suppose that there is some constant $C'$ so that 
\[\max_i |w_i| \leq C' \min_i |w_i| \qquad \text{and} \qquad \max_i |u_i| \leq C' \min_i |u_i|. \]
Suppose that $\Omega \subseteq [d_1] \times [d_2]$ so that for each $i \in [d_1], j \in [d_2]$, $(i,j) \in \Omega$ with probability $W_{ij}$, independently for each $(i,j)$.  
Then with probability at least $1 - \exp(-C\cdot m)$ over the choice of $\Omega$, the following holds:

Let $\sigma > 0$, let $0 < r < \inparen{ \frac{ \min\{d_1,d_2\} }{ \log( d_1d_2 ) } }^{1/3}$, and let $K \subset \R^{d_1 \times d_2}$ be the cone of rank $r$ matrices. 
For any algorithm $\mathcal{A}: \R^\Omega \to \R^{d_1 \times d_2}$ that takes as input $\bX_\Omega + \bZ_\Omega$ and outputs a guess $\hat{\bX}$ for $\bX$, for $\bX \in K \cap  \beta B_\infty$ and $Z_{ij} \sim \gN(0, \sigma^2)$ there is some $\bM \in K \cap  \beta B_\infty$ so that
\[ \frac{1}{ \fronorm{ \bW^{(1/2)} }} \cdot \fronorm{ \bW^{(1/2)} \had \inparen{ \cA ( \bM_\Omega + \bZ_\Omega ) - \bM } } \geq c \min \inset{ \sigma \sqrt{ \frac{ r\max\{d_1, d_2 \} }{|\Omega| }}, \frac{\beta}{ \sqrt{ \log(d_1d_2) } } } \]
with probability at least $1/2$ over the randomness of $\mathcal{A}$ and the choice of $\bZ$.  Above, $c,C$ are constants which depend only on $C'$.
\end{lemma}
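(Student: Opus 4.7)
The plan is to combine the general Fano argument (Lemma \ref{lem:fano}) with the packing construction (Lemma \ref{lem:net}), instantiated with the rank-one weight matrix $\bH := \bW^{(1/2)}$ and the magnitude matrix $\bA := \mathbf{1}\mathbf{1}^T$. Assume without loss of generality that $d_1 \geq d_2$ (otherwise transpose everything and apply Lemma \ref{lem:net} with the roles of $d_1$ and $d_2$ swapped, as noted in the remark after its statement). With these choices, the auxiliary vectors in Lemma \ref{lem:net} become $\bv = \bw$ and $\bz = \bu$; the flatness hypothesis then gives $\onenorm{\bw} \asymp d_1 \infnorm{\bw}$, $\twonorm{\bw}^2 \asymp d_1 \infnorm{\bw}^2$, and similarly for $\bu$.

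Next, I would check that each of the four terms inside the minimum defining $N$ in Lemma \ref{lem:net} is at least $\Omega(r d_1)$. Using flatness, the first three terms scale (up to $r,\log$ factors) as $d_1 d_2$, $d_1 \sqrt{d_2/(r\log r)}$, and $d_1 d_2/(r\log r)$, while the fourth is $\Theta(r d_1)$. The assumption $r \leq (\min\{d_1,d_2\}/\log(d_1 d_2))^{1/3} = (d_2/\log(d_1d_2))^{1/3}$ is exactly what is needed to make the fourth term the smallest, so Lemma \ref{lem:net} produces a net with $\log|\cX| \gtrsim r d_1 = r\max\{d_1,d_2\}$, living in $K \cap \gamma B_\infty$ for $\gamma = c\sqrt{r\log(d_1d_2)}$. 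The lemma also gives $\fronorm{\bX_\Omega} \leq \sqrt{cr}\,\fronorm{\bA_\Omega} = \sqrt{cr|\Omega|}$ and the pairwise separation $\fronorm{\bH \had (\bX - \bX')} \geq \sqrt{r}\,\fronorm{\bA \had \bH} = \sqrt{r}\,\fronorm{\bW^{(1/2)}} = \sqrt{r}\,m$.

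I would then feed this net into Lemma \ref{lem:fano} with the scaling
\[ \kappa := c_1 \min\!\left\{\sigma\sqrt{d_1/|\Omega|},\ \beta/\sqrt{r \log(d_1 d_2)}\right\}, \]
where the first argument of the min enforces $\kappa \leq \sigma\sqrt{\log|\cX|}/(4\max_\bX \fronorm{\bX_\Omega})$ (using $\log|\cX| \gtrsim r d_1$) and the second argument, combined with $\gamma = c\sqrt{r\log(d_1d_2)}$, enforces $\kappa\cX \subseteq K \cap \beta B_\infty$. Lemma \ref{lem:fano} then yields some $\bM \in K \cap \beta B_\infty$ for which
\[ \fronorm{\bW^{(1/2)} \had (\cA(\bM_\Omega+\bZ_\Omega) - \bM)} \geq \tfrac{\kappa}{2} \cdot \sqrt{r}\, m \]
with probability at least $1/2$ over $\bZ$. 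Dividing through by $\fronorm{\bW^{(1/2)}} = m$ and inserting $\kappa$ yields precisely the claimed rate $c\min\bigl\{\sigma\sqrt{r\max\{d_1,d_2\}/|\Omega|},\ \beta/\sqrt{\log(d_1 d_2)}\bigr\}$. The probability over $\Omega$ comes from invoking Lemma \ref{claim:Omegam} to guarantee $m/4 \leq |\Omega| \leq 5m/4$ (and in particular $|\Omega| > 0$) with probability at least $1 - 2\exp(-Cm)$, which keeps the expression $rd_1/|\Omega|$ in the claimed rate meaningful.

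The main obstacle is the bookkeeping in the second step: verifying that under the flatness hypothesis and the rank restriction, the fourth of the four terms in Lemma \ref{lem:net} really dominates, so that we obtain a net of cardinality $\exp(\Omega(r\max\{d_1,d_2\}))$ with the right $\ell_\infty$ radius. The remaining work—identifying $\fronorm{\bA_\Omega}^2 = |\Omega|$, $\fronorm{\bA \had \bH} = m$, and optimizing the choice of $\kappa$ against the two competing constraints in Lemma \ref{lem:fano}—is mechanical once the packing is in hand.
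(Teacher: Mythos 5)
Your proposal is correct and follows essentially the same route as the paper's own proof: both instantiate Lemma~\ref{lem:net} with $\bH = \bW^{(1/2)}$ and $\bA = \mathbf{1}\mathbf{1}^T$ (giving $\bv=\bw$, $\bz=\bu$), use the flatness of $\bw,\bu$ together with the constraint on $r$ to show the last term $rd_1$ dominates the minimum defining $N$, and then feed the resulting net into Lemma~\ref{lem:fano} with the identical choice of $\kappa$ (up to writing it in terms of $|\Omega|$ rather than $m$, which are interchangeable once Lemma~\ref{claim:Omegam} is in hand). The final normalization and simplification are also the same.
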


\begin{proof}
Suppose without loss of generality that $d_1 \geq d_2$,
and that $\log d_1$ and $\log d_2$ are integers (if not, replace them
by their floors.)
Let $m = \fronorm{ \bW^{(1/2)} } = \onenorm{ \bw } \onenorm{ \bu }$, so that $\EE |\Omega| = m$.

Next, we instantiate Lemma~\ref{lem:net} with $\bH = \bW^{(1/2)}$ and $\bA = \mathbf{1} \mathbf{1}^T$.  In the language of that lemma, we have
\[ \bv = (\bh \had \ba)^{(2)} = \bh^{(2)} = \bw \]
\[ \bz = (\bg \had \bb)^{(2)} = \bg^{(2)} = \bu. \]
Let $\cX$ be the net guaranteed by Lemma~\ref{lem:net}.  We have
\begin{equation} \label{eq:maxXOmega}
 \max_{ \bX \in \cX } \fronorm{ \bX_\Omega } \leq \sqrt{ cr } \fronorm{ \bA_\Omega } = \sqrt{ cr|\Omega| } = \sqrt{ c' r m } 
\end{equation}
for some constant $c'$, using Lemma \ref{claim:Omegam}.
We also have
\begin{equation}\label{eq:dist}
\fronorm{  \bW^{(1/2)}  \had ( \bX - \bX' ) } \geq \sqrt{ r } \fronorm{ \bW^{(2)} } = \sqrt{ rm }
\end{equation}
for all $\bX \neq \bX' \in \cX$,
using the definition of $m$.  We have
\begin{equation}\label{eq:infnorm}
\max_{ \bX \in \cX } \infnorm{ \bX } \leq c \sqrt{ r \log(d_1d_2) }.
\end{equation}
And finally, again using the assumption that $\bw$ and $\bu$ are flat, the size of the net is
\begin{align*} 
N &= 2e \exp \inparen{  c \cdot \min \inset{ \frac{ \onenorm{\bv}^2\onenorm{\bz}^2 }{ \twonorm{\bv}^2 \twonorm{\bz}^2 } , \frac{ \onenorm{ \bv } \onenorm{ \bz } }{ \infnorm{ \bv}  \twonorm{ \bz} \sqrt{ r \log(r) }} , \frac{ \onenorm{ \bv } \onenorm{ \bz }}{  \infnorm{ \bv} \infnorm{ \bz } r \log(r) }  , \frac{ r^2 \onenorm{ \bv}^2 }{ r \twonorm{ \bv}^2 } }} \\
&\geq 2e \exp \inparen{ \frac{c}{(C')^4} \min \inset{ d_1d_2, \frac{ d_1 \sqrt{d_2} }{ \sqrt{ r\log(r) } }, \frac{ d_1d_2 }{ r \log(r) } , r d_1 }} \\
&\geq \exp(C'' rd_1),
\end{align*}
where in the last line we have used the assumption that $r$ is not too large compared to $d_2$, and a suitable choice of $C''$ which depends on $c$ and $C'$.

Now we can use this net in Lemma~\ref{lem:fano}.  
We choose
\[ \kappa = \min \inset{ c'' \sigma \sqrt{ \frac{ d_1 }{m}}  , \frac{ \beta }{ c \sqrt{ r \log(d_1d_2 )} }  }, \]
where $c'' = \frac{1}{4} \sqrt{ C''/c'} $ depends on previous constants.

Observe that by \eqref{eq:infnorm} we have 
\[ \frac{ \sigma \sqrt{ \log|\cX| } }{ 4 \max_{ \bX \in \cX } \fronorm{ \bX_\Omega } } 
\geq \frac{\sigma \sqrt{ C'' rd_1} }{ 4 \sqrt{ c' r m  } } \geq \kappa, \]
so this is a legitimate choice of $\kappa$ in Lemma~\ref{lem:fano}.

Next, we verify that $\kappa \cX \subseteq K_r \cap \beta B_\infty$.  Indeed, we have
\[ \kappa \max_{\bX \in \cX} \infnorm{ \bX } \leq \kappa c\sqrt{ r \log(d_1d_2) } \leq \beta, \]
so $\kappa \cX \subseteq \beta B_\infty$, and every element of $\cX$ has rank $r$ by construction.

Then Lemma~\ref{lem:fano} concludes that if $\mathcal{A}$ must work on $K_r \cap \beta B_\infty$, then there is a matrix $\bM \in K_r \cap  \beta  B_\infty$ so that
\begin{align*}
 \fronorm{ \bW^{(1/2)} \had ( \cA( \bM_\Omega + \bZ_\Omega ) - \bM ) } &\geq \frac{ \kappa }{2 } \min_{ \bX \neq \bX' \in \cX } \fronorm{ \bW^{(1/2)} \had ( \bX - \bX' )  } \\
& \geq \frac{1}{2} \min \inset{ c'' \sigma \sqrt{ \frac{d_1}{m} } , \frac{ \beta }{ c \sqrt{ r \log(d_1d_2) } } } \sqrt{rm} \\
&= \frac{1}{2} \min \inset{ c'' \sigma \sqrt{ rd_1 } , \frac{ \beta \sqrt{m} }{ c \sqrt{ \log(d_1d_2) } } },
\end{align*}
using \eqref{eq:dist}.
Normalizing by $\fronorm{ \bW^{(1/2) } }$ and applying Lemma~\ref{claim:Omegam} completes the proof.
\end{proof}

Next, we use Lemma \ref{lem:flatLB} to prove a bound that does not require that the $\bW$ be flat; that is, Theorem~\ref{thm:simOmegaLB} below is similar to Lemma~\ref{lem:flatLB}, but we will not require the ``flatness'' assumption that $\max_i |w_i| \leq C' \min_i |w_i|$ or $\max_i|u_i| \leq C' \min_i |u_i|$.  However, we do have to impose an additional restriction that the entries of $\bu$ and $\bw$ are not smaller than $1 / \sqrt{d_1}$ and $1/\sqrt{d_2}$ respectively; this is the same restriction we have for the upper bounds.

Our final lower bound for the case when $\lambda$ is small is the following.
\begin{theorem}[Lower bound for rank-$k$ matrices when $\Omega \sim \bW$]\label{thm:simOmegaLB}
Let $\bW = \bw \bu^T \in \R^{d_1 \times d_2}$ be a rank-1 matrix with $\infnorm{ \bW } \leq 1$, so that 
\[ \infnorm{ \bw^{(-1)} } \leq \sqrt{ d_1 } \qquad \text{and}  \qquad \infnorm{ \bu^{(-1)} } \leq \sqrt{ d_2 }.\]   Let $d = \sqrt{ d_1d_2 }.$
Suppose that $\Omega \subseteq [d_1] \times [d_2]$ so that for each $i \in [d_1], j \in [d_2]$, $(i,j) \in \Omega$ with probability $W_{ij}$, independently for each $(i,j)$.  
Let $m = \onenorm{ \bw } \onenorm{ \bu }$, so that $\EE |\Omega| = m$.
Then with probability at least $1 - \exp(-C\cdot m)$ over the choice of $\Omega$, the following holds:

Let $\sigma, \beta > 0$, let $0 < r < \inparen{ \frac{ \min\{d_1,d_2\} }{ \log^2( d ) } }^{1/3}$, and let $K_r \subset \R^{d_1 \times d_2}$ be the cone of rank $r$ matrices. 
For any algorithm $\mathcal{A}: \R^\Omega \to \R^{d_1 \times d_2}$ that takes as input $\bX_\Omega + \bZ_\Omega$ and outputs a guess $\hat{\bX}$ for $\bX$, for $\bX \in K \cap \beta B_\infty$ and $Z_{ij} \sim \gN(0, \sigma^2)$ there is some $\bM \in K_r \cap \beta B_\infty$ so that
\begin{align*}
\frac{1}{ \fronorm{ \bW^{(1/2)} } }\fronorm{ \bW \had ( \cA( \bM_\Omega + \bZ_\Omega ) ) } 
&\geq c \min \inset{ \sigma \sqrt{ \frac{ r \max\{d_1,d_2\} } {m \log(d) } }, \beta \sqrt{ \frac{ d }{ m \log^3(d)  } }}. 
\end{align*}
with probability at least $1/2$ over the randomness of $\mathcal{A}$ and the choice of $\bZ$.  Above, $c,C$ are constants which depend only on $C'$.
\end{theorem}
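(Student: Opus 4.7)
The plan is to reduce to the flat case (Lemma~\ref{lem:flatLB}) by locating a large submatrix of $\bW$ on which the weights are approximately flat. Since $W_{ij} = w_i u_j \leq 1$ combined with $w_i \geq 1/\sqrt{d_1}$ and $u_j \geq 1/\sqrt{d_2}$ forces $w_i \in [1/\sqrt{d_1}, \sqrt{d_2}]$ and $u_j \in [1/\sqrt{d_2}, \sqrt{d_1}]$, I would partition $[d_1]$ into $L = O(\log d)$ dyadic buckets $B_k = \{i : w_i \in [2^k, 2^{k+1})\}$ and likewise $[d_2]$ into buckets $C_\ell$. Within any $B_k \times C_\ell$ the ratio between the largest and smallest entry of each rank-one factor is at most $2$, so Lemma~\ref{lem:flatLB}'s flatness hypothesis applies on every such submatrix with $C' = 2$.

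Next, I would pick $B = B_{k^*}$ to be the bucket with the most elements in $[d_1]$, and $C = C_{\ell^*}$ similarly in $[d_2]$. By pigeonhole $|B| \geq d_1/L$ and $|C| \geq d_2/L$. Using only the lower bounds on $w_i, u_j$, the sub-mass satisfies
\[ m' := \fronorm{\bW^{(1/2)}|_{B \times C}}^2 = \onenorm{\bw|_B} \cdot \onenorm{\bu|_C} \geq \frac{|B|}{\sqrt{d_1}} \cdot \frac{|C|}{\sqrt{d_2}} \gtrsim \frac{\sqrt{d_1 d_2}}{\log^2 d} = \frac{d}{\log^2 d}. \]
Moreover $\min(|B|,|C|) \geq \min(d_1,d_2)/\log d$ and $\log(|B||C|) \leq 2\log d$, so the hypothesis $r < (\min(d_1,d_2)/\log^2 d)^{1/3}$ implies the rank condition of Lemma~\ref{lem:flatLB} on the submatrix (up to absorbable constants).

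I would then invoke Lemma~\ref{lem:flatLB} on the rank-one weight $\bW|_{B \times C}$ with sampling pattern $\Omega \cap (B \times C)$; independence of entries of $\Omega$ guarantees that this is exactly the appropriate restricted random model. The lemma produces (with probability $\geq 1 - \exp(-C m') \geq 1 - \exp(-C m/\log^2 d)$, which is absorbed into $1 - \exp(-C' m)$ by shrinking the constant) a rank-$r$ matrix $\bM_{BC}$ with $\infnorm{\bM_{BC}} \leq \beta$ against which any sub-problem algorithm incurs normalized weighted error at least
\[ c \min\inset{ \sigma\sqrt{\tfrac{r \max(|B|, |C|)}{m'}},\ \tfrac{\beta}{\sqrt{\log(|B||C|)}} }. \]
To lift this to the full problem, I would pad $\bM_{BC}$ with zeros outside $B \times C$; the padded $\bM$ is still rank $r$ and in $\beta B_\infty$. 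Observations on $\Omega \setminus (B \times C)$ are pure independent Gaussian noise, so any full-problem algorithm $\cA$ can be simulated by a sub-problem algorithm $\cA'$ that synthesizes these extra noise variables and restricts its output to $B \times C$; the full-problem weighted error dominates the sub-problem one, since the extra terms contribute $\sum_{(i,j) \notin B\times C} W_{ij}\hat M_{ij}^2 \geq 0$.

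Finally, renormalizing the sub-problem bound by $\sqrt{m'/m}$ yields
\[ c \min\inset{ \sigma\sqrt{\tfrac{r \max(|B|,|C|)}{m}},\ \beta \sqrt{\tfrac{m'}{m \log(|B||C|)}} } \gtrsim c\min\inset{ \sigma\sqrt{\tfrac{r \max(d_1,d_2)}{m \log d}},\ \beta\sqrt{\tfrac{d}{m \log^3 d}} }, \]
which is precisely the stated lower bound. The main obstacle is probability bookkeeping: the concentration event for $|\Omega \cap (B \times C)|$ from Lemma~\ref{claim:Omegam} and the Fano-based success event of Lemma~\ref{lem:flatLB} each fail with probability $\exp(-\Theta(m/\polylog(d)))$, and absorbing the $\polylog(d)$ factor into the constant $C$ in the final $1 - \exp(-Cm)$ tail, together with verifying the rank condition on the submatrix, is the only delicate step; the rest is essentially bookkeeping.
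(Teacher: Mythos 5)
Your proposal is correct and follows essentially the same route as the paper: a dyadic bucketing of the rank-one factors $\bw$ and $\bu$ into $O(\log d)$ nearly-flat sub-vectors, a pigeonhole step to find a large bucket pair, application of Lemma~\ref{lem:flatLB} to the resulting flat subproblem $(\tilde\bW, \tilde\Omega)$, and renormalization by $\fronorm{\bW^{(1/2)}}$. The one place you are more explicit than the paper is the lifting step (zero-padding $\bM$ and simulating the noise on $\Omega \setminus (B \times C)$), which the paper leaves implicit; conversely, your probability-of-$\Omega$ bookkeeping asserts $m' \gtrsim m/\log^2 d$, but what you actually established is $m' \gtrsim d/\log^2 d$, and since $m$ can be as large as $d^2$ these are not the same—however the paper's own proof is silent on this point as well, so this does not distinguish your argument from theirs.
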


\begin{proof}
Suppose without loss of generality that $d_1 \geq d_2$.  Write
\[ \bw = (\bw_1, \bw_2, \ldots, \bw_{\log(d_1)/2}) \qquad \text{where} \qquad \bw_i \in \RR^{s_i}, \]
so that all entries of $\bw_i$ are in $[2^{-i}, 2^{1-i}]$ for all $i$.  (Here, we assume without loss of generality that the coordinates of $\bw$ are arranged in decreasing order).
Notice that this is possible because the $\max_i |w_i| \leq 1$ and $\min_i |w_i| \geq \frac{1}{\sqrt{d_1}}$ by assumption.
Similarly write
\[ \bu = (\bu_1, \bu_2, \ldots, \bu_{\log(d_2)/2}) \qquad \text{where} \qquad \bu_i \in \RR^{t_i}, \]
so that all entries of $\bu_i$ are in $[2^{-i}, 2^{1-i}]$.

Now there is some $i$ and $j$ so that
\[ s_i \geq \frac{ 2d_1 }{ \log(d_1) } \qquad t_j \geq \frac{ 2d_2 }{ \log(d_2 ) }. \]
Now consider $\tilde{\bW} = \bw_i \bu_j^T \in \RR^{s_i \times t_j}$, and let $\tilde{\Omega}$ be the restriction of $\Omega$ the the $s_i$ rows and $t_j$ columns corresponding to $\tilde{W}$.
Notice that $\infnorm{ \bw_i } \leq 2 \infnorm{ \bw_i^{(-1)} }^{-1}$ and $\infnorm{ \bu_j } \leq 2 \infnorm{ \bu_j^{(-1)} }^{-1}$ by definition.
Now we may apply Lemma~\ref{lem:flatLB} to the problem of recovering an $s_i \times t_j$ matrix, and conclude that there is some matrix $\bM \in \RR^{s_i \times t_j}$ so that 
\[
\fronorm{ \tilde{\bW} \had ( \cA( \bM_{\tilde{\Omega}} + \bZ_{\tilde{\Omega}} ) ) } \geq c \min \inset{ \sigma \sqrt{ \frac{ r \max\{d_1,d_2\} } {\log(d_1d_2) } }, \frac{ \fronorm{ \tilde{\bW}^{(1/2)} } \beta }{ \sqrt{ \log(d_1d_2) } } }. \]
Now, we observe that 
\[ \fronorm{ \tilde{\bW}^{(1/2)} } = \sqrt{ \onenorm{  \bw_i } \onenorm{ \bu_j } } \geq \sqrt{ \frac{ 4\sqrt{d_1d_2} }{ \log(d_1) \log(d_2) }} , \]
using the fact that both $\bw_i$ and $\bu_j$ have entries no smaller than $1/\sqrt{d_1}$ and $1/\sqrt{d_2}$ respectively.  
Thus, normalizing appropriately, we conclude
\begin{align*}
\frac{1}{ \fronorm{ \bW^{(1/2)} } }\fronorm{ \bW \had ( \cA( \bM_\Omega + \bZ_\Omega ) ) } 
&\geq
\frac{1}{ \fronorm{ \bW^{(1/2)} } }\fronorm{ \tilde{\bW} \had ( \cA( \bM_{\tilde{\Omega}} + \bZ_{\tilde{\Omega}} ) ) } \\
&\geq c \min \inset{ \sigma \sqrt{ \frac{ r \max\{d_1,d_2\} } {m \log(d_1d_2) } }, \beta \sqrt{ \frac{ \sqrt{d_1d_2} }{ m \log(d_1) \log(d_2) \log(d_1d_2) } } }. 
\end{align*}
The theorem follows after simplifying with the definition $d = \sqrt{ d_1d_2 }$.
\end{proof}

\subsection{Lower bound for approximately rank $r$ matrices}
In Theorem~\ref{thm:upperOmega} that for rank $r$ matrices it was sufficient for $m = \fronorm{ \bW^{(1/2)}} = \EE |\Omega|$ to grow like $1/\delta^2$ in order to guarantee that the per-entry normalized error is at most $\delta$.  However, our upper bound for approximately rank-$r$ matrices (Theorem \ref{thm:upperOmegaApprox}) requires $m$ to grow like $1/\delta^4$.  In this section, we show that this dependence is necessary.

As with the previous section, we begin by focusing on flat matrices.  Unfortunately, our lower bounds do not seem to extend in the same way to non-flat matrices without losing the correct dependence on the error.
Thus, we state a result in the approximately low-rank setting only for flat matrices.

\begin{theorem}[Lower bound approximately low-rank matrices when  $\bW$ is flat  and $\Omega \sim W$]\label{lem:flatLBapprox}
Let $\bW = \bw \bu^T \in \R^{d_1 \times d_2}$ be a rank-1 matrix with $\infnorm{ \bW } \leq 1$.  
Suppose that there is some constant $C'$ so that 
\[\max_i |w_i| \leq C' \min_i |w_i| \qquad \text{and} \qquad \max_i |u_i| \leq C' \min_i |u_i|. \]
Suppose that $\Omega \subseteq [d_1] \times [d_2]$ so that for each $i \in [d_1], j \in [d_2]$, $(i,j) \in \Omega$ with probability $W_{ij}$, independently for each $(i,j)$.   Let $m = \EE|\Omega|$.
Then with probability at least $1 - \exp(-C\cdot m)$ over the choice of $\Omega$, the following holds:

Let $\sigma, \beta > 0$, and suppose that
\[ \frac{ \beta }{ \sigma } \leq \frac{ \min\{d_1,d_2\}^{1/3} \cdot \max\{d_1, d_2\}^{1/2} }{ \sqrt{rm} \log^{2/3}(d_1d_2) }. \]
Let $K = \beta \sqrt{r} B_{\max}$ be the max-norm ball of radius $\beta \sqrt{r}$.

For any algorithm $\mathcal{A}: \R^\Omega \to \R^{d_1 \times d_2}$ that takes as input $\bX_\Omega + \bZ_\Omega$ and outputs a guess $\hat{\bX}$ for $\bX$, for $\bX \in K$ and $Z_{ij} \sim \gN(0, \sigma^2)$ there is some $\bM \in K$ so that
\[ \frac{1}{ \fronorm{ \bW^{(1/2)} }} \cdot \fronorm{ \bW^{(1/2)} \had \inparen{ \cA ( \bM_\Omega + \bZ_\Omega ) - \bM } } \geq c \sqrt{ \sigma \beta } \inparen{ \frac{ d }{ m } }^{1/4}, \]
with probability at least $1/2$ over the randomness of $\mathcal{A}$ and the choice of $\bZ$.  Above, $c,C$ are constants which depend only on $C'$.
\end{theorem}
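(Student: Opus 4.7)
The plan is to adapt the Fano-based argument from the exactly-low-rank case (Lemma~\ref{lem:flatLB}), exploiting the extra room afforded by the max-norm ball. The key observation is that $\beta\sqrt{r} B_{\max}$ contains scaled rank-$r'$ matrices for any $r' \le r$, so we can invoke the Lemma~\ref{lem:net} construction with a smaller ``effective rank'' $r'$, and the max-norm ball will accommodate a larger multiplicative scaling than the infinity-norm ball did. Optimizing $r'$ balances the constraints from Fano and from the size of $K$, producing the $\sqrt{\sigma\beta}(d/m)^{1/4}$ rate as the geometric mean of the rank-$r$ rate $\sigma\sqrt{rd/m}$ and the trivial rate $\beta$.

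Concretely, I would apply Lemma~\ref{lem:net} with the weight $\bH = \bW^{(1/2)}$, base matrix $\bA = \mathbf{1}\mathbf{1}^T$, and an effective rank $r'\le r$ to be chosen. Using the flatness of $\bw,\bu$, the terms in the minimum from Lemma~\ref{lem:net} simplify, and the net $\cX \subseteq K_{r'} \cap \gamma B_\infty \cap r' B_{\max}$ satisfies $\log|\cX| \gtrsim r' d$, $\min_{\bX\neq\bX'}\fronorm{\bW^{(1/2)}\had(\bX-\bX')}^2 \gtrsim r' m$, and (with high probability over $\Omega$, by the same concentration argument as in Lemma~\ref{claim:Omegam}) $\max_{\bX\in\cX}\fronorm{\bX_\Omega}^2 \lesssim r' m$. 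I would then apply Lemma~\ref{lem:fano} with $K = \beta\sqrt{r} B_{\max}$. The containment $\kappa\cX \subseteq K$ requires only that $\kappa r' \leq \beta\sqrt{r}$ (no infinity-norm constraint appears), so $\kappa \leq \beta\sqrt{r}/r'$; the Fano constraint gives $\kappa \lesssim \sigma \sqrt{r' d}/\sqrt{r' m} = \sigma\sqrt{d/m}$.

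The resulting lower bound is $(\kappa/2)\sqrt{r'm}$. The $\kappa$-constraints balance when $\beta\sqrt{r}/r' \asymp \sigma\sqrt{d/m}$, i.e., at $r' \asymp (\beta/\sigma)\sqrt{rm/d}$. Plugging in gives a lower bound on the unnormalized error of order $\sigma\sqrt{r'd} \asymp \sqrt{\sigma\beta}\,(rd/m)^{1/4}\sqrt{m}$, so the normalized error is at least $\sqrt{\sigma\beta}\,(rd/m)^{1/4}$, which is stronger than the claimed $\sqrt{\sigma\beta}(d/m)^{1/4}$ since $r \geq 1$. Finally, the hypothesis $\beta/\sigma \leq \min\{d_1,d_2\}^{1/3}\max\{d_1,d_2\}^{1/2}/(\sqrt{rm}\log^{2/3}(d_1d_2))$ is exactly what I need: it guarantees that the chosen $r'$ is small enough to satisfy the regime $r' \lesssim (\min\{d_1,d_2\}/\log)^{1/3}$ under which the minimum in Lemma~\ref{lem:net}'s bound is dominated by the $r' d$ term, and it implicitly ensures $r' \leq r$ in the regimes the theorem addresses.

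The main obstacle I expect is the careful bookkeeping of log factors and constants in the application of Lemma~\ref{lem:net}: the bound on $\log|\cX|$ is a minimum of four quantities, and verifying that the $r'd$ term dominates under the stated hypothesis on $\beta/\sigma$ requires matching exponents precisely. A secondary issue is edge cases where the optimal $r'$ would fall outside $[1,r]$; there one either uses $r'=1$ (recovering a weaker bound that still suffices) or observes that the already-established rank-$r$ lower bound from Lemma~\ref{lem:flatLB} already implies the claimed inequality. The concentration of $\fronorm{\bX_\Omega}^2$ over the random $\Omega$ is standard (Bernstein, as in Lemma~\ref{lem:lambdamu}), but has to be taken uniformly over $\cX$, which accounts for the $1 - e^{-\Omega(m)}$ probability bound.
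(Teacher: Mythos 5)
Your proposal is correct and follows essentially the same route as the paper's proof: the paper also applies Lemma~\ref{lem:net} with $\bH = \bW^{(1/2)}$, $\bA = \mathbf{1}\mathbf{1}^T$, and an effective rank $s = \lfloor (\beta/\sigma)\sqrt{mr/d_1}\rfloor$, verifies $\kappa\cX \subseteq \beta\sqrt{r}B_{\max}$ via $\kappa s \le \beta\sqrt{r}$, and applies Lemma~\ref{lem:fano} with $\kappa = \min\{c''\sigma\sqrt{d_1/m}, \beta\sqrt{r}/s\}$, using the $\beta/\sigma$ hypothesis exactly as you anticipate to keep $s$ in the admissible range. One small correction: the $1-e^{-\Omega(m)}$ event is not uniform concentration of $\fronorm{\bX_\Omega}$ over $\cX$ — Lemma~\ref{lem:net} already gives the deterministic bound $\fronorm{\bX_\Omega}\le\sqrt{cr}\fronorm{\bA_\Omega}$ for every $\bX$ in the net, and the randomness in $\Omega$ enters only through $\fronorm{\bA_\Omega}^2 = |\Omega|$, controlled by the scalar Bernstein bound of Lemma~\ref{claim:Omegam}.
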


\begin{proof}
The proof proceeds similarly to that of Lemma~\ref{lem:flatLB}, except we will use Lemma~\ref{lem:net} with a different choice of $r$: intuitively, we will choose a net consisting of higher-rank matrices that have smaller $\ell_\infty$ norm, but still have bounded max-norm.

Choose a parameter 
\[ s = \left \lfloor \frac{ \beta }{ \sigma } \sqrt{ \frac{ mr }{ d_1 } } \right \rfloor. \] 

Without loss of generality, suppose that $d_1 \geq d_2$.  Let $m = \fronorm{ \bW^{(1/2)} }^2 = \onenorm{ \bw } \onenorm{ \bu } = \EE|\Omega|$.  We will instantiate Lemma~\ref{lem:net} with $\bH = \bW^{(1/2)}$, $\bA = \mathbf{1} \mathbf{1}^T$, and as in Lemma~\ref{lem:flatLB}, this choice implies that $\bv = \bw$, $\bz = \bu$ in the language of Lemma~\ref{lem:net}.  However, unlike in the Lemma~\ref{lem:flatLB}, we will instantiate Lemma~\ref{lem:net} with $s$ in the place of the ``$r$'' parameter.  Following the same analysis as before, this yields a net of size at least $N \geq \exp( C'' sd_1 )$, with
\[ \max_{ \bX \in \cX } \fronorm{ \bX_\Omega } \leq \sqrt{ c' sm } \qquad \fronorm{ \bW^{(1/2)} \had (\bX - \bX') } \geq \sqrt{ sm } \qquad \max_{ \bX \in \cX } \maxnorm{ \bX } \leq s. \]
Above, we have used our condition on $\beta/\sigma$ to ensure that $s \leq \inparen{ \frac{ \min\{d_1,d_2\} }{\log(d_1d_2) } }^{1/3}$, the analog of our condition on $r$ in Lemma~\ref{lem:flatLB}.
Now choose
\[ \kappa = \min\inset{ c'' \sigma \sqrt{ \frac{ d_1 }{m} }, \frac{ \beta \sqrt{r} }{s} }. \]
As before, we have
\[ \frac{ \sigma \sqrt{ \log|\cX| } }{ 4 \max_{ \bX \in \cX} \fronorm{ \bX_\Omega } } \geq \frac{ \sigma \sqrt{ C'' s d_1 } }{ 4 \sqrt{ c' s m } } \geq \kappa, \]
so this is a legitimate choice for $\kappa$.  We also have 
\[ \kappa \cX \subseteq \kappa s B_{\max} \subseteq \beta \sqrt{r} B_{\max}, \]
and so this net does indeed live in $K$.

Thus, Lemma~\ref{lem:fano} concludes that if $\cA$ must work on $K$, then there is a matrix $\bM \in K$ so that 
\begin{align*}
\fronorm{ \bW^{(1/2)} \had( \cA( \bM_\Omega + \bZ_\Omega ) - \bM ) } &\geq \frac{ \kappa }{2} \min_{\bX \neq \bX' \in \cX} \fronorm{\bW^{(1/2)} \had (\bX - \bX') } \\
&\geq \frac{1}{2} \min\inset{ c'' \sigma \sqrt{ \frac{ d_1 }{m} } , \frac{ \beta \sqrt{r} }{s} } \sqrt{ sm }\\
&= \frac{1}{2} \min \inset{ c'' \sigma \sqrt{ d_1 s } , \beta \sqrt{ \frac{ rm }{s} } }.
\end{align*}
Now normalizing by $\fronorm{ \bW^{(1/2)} } = \sqrt{m}$ and plugging in our choice of $s$, we have
\begin{align*}
\frac{\fronorm{ \bW^{(1/2)} \had( \cA( \bM_\Omega + \bZ_\Omega ) - \bM ) }}{ \fronorm{ \bW^{(1/2)}} }
&\geq c''' \cdot \min\inset{ \sigma \sqrt{ \frac{ d_1 s }{ m } }, \beta \sqrt{ \frac{ r }{s } } } \\
&= c''' \sqrt{ \sigma \beta } \inparen{ \frac{ d_1 r }{ m } }^{1/4},
\end{align*}
as desired.

\end{proof}

\subsection{Application: proportional sampling}\label{sec:nogapapp}
In this section, we show how to apply Theorem~\ref{thm:upperOmegaApprox} to recover results similar to some of those in \cite{CBSW15}.  In that work, the authors show how to do matrix completion on \em coherent \em low-rank matrices, assuming that the observations are drawn from an appropriately biased distribution: more precisely, they show that if entries are sampled with a probability that is based on the \em leverage scores \em of the matrix,\footnote{The leverage scores of a matrix $\bM \in \mathbb{R}^{d_1 \times d_2}$ with SVD $\bU \mathbf{\Sigma} \bV^T$ are the values $\|\be_i^T \bU\|_2$ and $\|\be_j^T \bV\|_2$ for $i \in [d_1], j \in [d_2]$} then matrix completion is possible.
That work proposes a two-stage matrix completion scheme, which first samples from uniformly random entries to estimate the SVD $\bM \approx \tilde{\bU} \tilde{ \mathbf{\Sigma} } \tilde{\bV}^T$, and then uses this to approximate the leverage scores and sample according to this approximation.

Even though the focus of this work is \em deterministic \em matrix completion, Theorem~\ref{thm:upperOmegaApprox} does apply to the randomized setting as well, and this allows us to recover results similar to those of \cite{CBSW15}.  More precisely, we show below that given \em any \em matrix $\bX$, if we define $\bW$ appropriately, then we can ensure that $\bM = \bW^{(-1/2)} \had \bX$ has small max norm, and so by sampling proportional to the entries of $\bW$, we can use Theorem~\ref{thm:upperOmegaApprox} to obtain error bounds on  $$\| \bW^{(1/2)} \had ( \bM - \hat{\bM} ) \|_F = \| \bX - \bW^{(1/2)} \had \hat{\bM} \|_F.$$
Thus, if we have a good enough estimate of $\bX$ to do the sampling, we may use our algorithm from Theorem~\ref{thm:upperOmegaApprox} to obtain $\hat{\bM}$ and then set $\hat{\bX} = \bW^{(1/2)} \had \hat{\bM}$.

The sampling procedure that results ends up being slightly different than the leverage scores, but it can still be approximated in a two-stage algorithm using an approximate SVD $\bM \approx \tilde{\bU} \tilde{ \mathbf{\Sigma} } \tilde{\bV}^T$.  The main difference between this and the corresponding theorem in \cite{CBSW15} is that their work applies to exactly rank-$r$ matrices, while the result below applies to any matrix, and is stated in terms of  $( \nucnorm{\bX} \cdot \fronorm{ \bX }^{-1} )^2$, a proxy for the rank.

\begin{corollary}\label{cor:leveragescores}
For a $d_1 \times d_2$ matrix $\bX$ with SVD $\bX = \bU \mathbf{\Sigma} \bV^T$ and an integer $m$, define
\[ W_{i,j} = \frac{ \| \be_i^T \bU \mathbf{\Sigma}^{(1/2)}\|^2_2 \| \be_j^T \bV \mathbf{\Sigma}^{(1/2)} \|^2_2 }{ \nucnorm{ \bX }^2 } m. \]
There is a randomized algorithm $\mathcal{A}$ with query access to a $d_1 \times d_2$ matrix $\bX$ so that
the following holds.

Suppose that $\bX$ is any matrix so that $W_{i,j} \in [1/\sqrt{d_1d_2}, 1]$.  Then with probability at least $1 - \frac{4}{d_1 + d_2}$, 
 $\mathcal{A}$ queries at most 
\[ m \geq \frac{ C }{\eps^4} \inparen{ \frac{ \nucnorm{\bX} }{ \fronorm{\bX} } }^4 (d_1 + d_2 ) \log^2( d_1 + d_2 ) , \]
entries of $\bX$, and returns $\hat{\bX}$ so that
\[ \fronorm{ \bX - \hat{\bX} } \leq \eps \fronorm{ \bX }, \]
where $C$ is some absolute constant.
Moreover, $\mathcal{A}$ queries entries of $\bX$ independently, querying $X_{i,j}$ with probability proportional to $W_{i,j}$.
\end{corollary}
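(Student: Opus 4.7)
My plan is to instantiate Theorem~\ref{thm:upperOmegaApprox} for the rescaled matrix $\bM := \bW^{(-1/2)} \had \bX$. The algorithm $\cA$ generates a sampling pattern $\Omega$ by including each $(i,j)$ independently with probability $W_{i,j}$, forms the matrix $\bW^{(-1/2)} \had \bX_\Omega$ from the queried entries, runs the max-norm projection algorithm of Theorem~\ref{thm:upperOmegaApprox} with noise level $\sigma=0$ to obtain $\hat\bM$, and returns $\hat\bX := \bW^{(1/2)} \had \hat\bM$. A short calculation using $\sum_i \|\be_i^T \bU \mathbf{\Sigma}^{(1/2)}\|_2^2 = \sum_j \|\be_j^T \bV \mathbf{\Sigma}^{(1/2)}\|_2^2 = \tr(\mathbf{\Sigma}) = \nucnorm{\bX}$ shows that $\sum_{i,j} W_{i,j} = m$; in particular $\fronorm{\bW^{(1/2)}}^2 = m = \EE|\Omega|$, and the hypothesis $W_{i,j} \in [1/\sqrt{d_1 d_2}, 1]$ puts us precisely in the setting of Theorem~\ref{thm:upperOmegaApprox}.

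The key step is to show that $\maxnorm{\bM} \leq \nucnorm{\bX}/\sqrt{m}$. Factor $\bX = \bA \bB^T$ with $\bA := \bU\mathbf{\Sigma}^{(1/2)}$ and $\bB := \bV\mathbf{\Sigma}^{(1/2)}$, so that the $i$-th row of $\bA$ has squared $\ell_2$ norm exactly $w_i := \|\be_i^T \bU \mathbf{\Sigma}^{(1/2)}\|_2^2$ and the $j$-th row of $\bB$ has squared $\ell_2$ norm exactly $u_j := \|\be_j^T \bV \mathbf{\Sigma}^{(1/2)}\|_2^2$. Since $W_{i,j}^{1/2} = (\sqrt{m}/\nucnorm{\bX}) \sqrt{w_i u_j}$, writing $\bD_1 := \diag(w_i^{-1/2})$ and $\bD_2 := \diag(u_j^{-1/2})$ gives the factorization $\bM = (\nucnorm{\bX}/\sqrt{m}) \cdot (\bD_1 \bA)(\bD_2 \bB)^T$. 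Every row of $\bD_1 \bA$ and every row of $\bD_2 \bB$ has unit $\ell_2$ norm, so by definition of $\maxnorm{\cdot}$ we get $\maxnorm{\bM} \leq \nucnorm{\bX}/\sqrt{m}$, so Theorem~\ref{thm:upperOmegaApprox} may be applied with $r=1$ and $\beta = \nucnorm{\bX}/\sqrt{m}$.

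With $\sigma=0$ the noise-dependent term in Theorem~\ref{thm:upperOmegaApprox} vanishes, and substituting $\fronorm{\bW^{(1/2)}} = \sqrt{m}$ gives, with probability at least $1 - 4/(d_1+d_2)$ over $\Omega$,
\[
\fronorm{\bW^{(1/2)} \had (\bM - \hat\bM)} \;\lesssim\; \nucnorm{\bX} \left(\frac{d_1+d_2}{m}\right)^{1/4} \log^{1/2}(d_1+d_2).
\]
Since $\bW^{(1/2)} \had \bM = \bX$ by construction and $\hat\bX = \bW^{(1/2)} \had \hat\bM$, the left-hand side equals $\fronorm{\bX - \hat\bX}$. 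Demanding this be at most $\eps \fronorm{\bX}$ and inverting the inequality yields $m \gtrsim \eps^{-4} (\nucnorm{\bX}/\fronorm{\bX})^4 (d_1+d_2)\log^2(d_1+d_2)$, which is the claimed sample complexity.

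The main obstacle is the max-norm estimate in paragraph two: it is where the leverage-score-flavored weights $W_{i,j}$ actually do their work, by ensuring that the rescaling $\bW^{(-1/2)} \had \bX$ admits a rank-1 factorization with uniformly-bounded rows. The remainder is bookkeeping: normalizing by $\fronorm{\bW^{(1/2)}}$, setting $\sigma=0$ in Theorem~\ref{thm:upperOmegaApprox}, and solving for $m$.
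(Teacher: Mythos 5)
Your proof is correct and follows essentially the same route as the paper's: the central step in both is the max-norm bound $\maxnorm{\bW^{(-1/2)}\had\bX}\leq \nucnorm{\bX}/\sqrt{m}$ via the factorization $\bX = (\bU\mathbf{\Sigma}^{(1/2)})(\bV\mathbf{\Sigma}^{(1/2)})^T$ and row-normalizing diagonal matrices, followed by an application of Theorem~\ref{thm:upperOmegaApprox} with $\sigma = 0$ and $\beta\sqrt{r}=\maxnorm{\bM}$. The bookkeeping (verifying $\sum_{i,j}W_{i,j}=m=\fronorm{\bW^{(1/2)}}^2$ and solving for $m$) also matches the paper's argument.
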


\begin{remark}
We remark that Corollary~\ref{cor:leveragescores} suggests one potential application of our work that may be an interesting future direction to analyze and develop rigorously. Indeed, a natural approach to take advantage of this result would be to truncate the $W_{i,j}$ entries if they happen to be larger than $1$, leading to a possible ``two-stage'' scheme within this framework.
Also, we note that our results are incomparable to those in \cite{CBSW15},
 as discussed in Section~\ref{sec:related}, the focus of \cite{CBSW15} is instead to  recover coherent matrices using adapted sampling patterns.
\end{remark}

Before we prove the corollary, we interpret it.  We observe that the ratio $(\nucnorm{ \bX }\cdot\fronorm{ \bX }^{-1})^2$ is a proxy for the rank of $\bX$, in the sense that if $\bX$ is actually rank $r$, then this quantity is at most $r$
by the Cauchy-Schwarz inequality. 
Thus, Corollary~\ref{cor:leveragescores} says that if $\bX$ is not \em too \em coherent (in the sense that the $W_{i,j}$ are between $1/\sqrt{d_1d_2}$ and $1$), there is some way to sample $m \approx r^2\cdot (d_1 + d_2)$ entries of matrix $\bX$ with $(\nucnorm{\bX} \fronorm{\bX}^{-1})^2 \leq r$ and use these entries to accurately reconstruct $\bX$. 

\begin{proof}
Let
$ \bX = \bU \mathbf{\Sigma} \bV^T $
be the SVD of $\bX$.
Then let $m$ be as in the statement of the corollary, and let $\bW$ be the matrix so that 
\[ W_{i,j} = \frac{ \| \be_i^T \bU \mathbf{\Sigma}^{(1/2)}\|^2_2 \| \be_j^T \bV \mathbf{\Sigma}^{(1/2)} \|^2_2 }{ \nucnorm{ \bX }^2 } m. \]
Notice that $\bW$ is rank $1$.  Let
\[ \bM = \bW^{(-1/2)} \had \bX. \]
Then we may write
\[ \bM = \bW^{(-1/2)} \had \bU \mathbf{\Sigma} \bV^T = \frac{ \nucnorm{\bX}} { \sqrt{m} } \bD_1 \bU \mathbf{\Sigma} \bV^T \bD_2, \]
where $\bD_1$ is a $d_1 \times d_1$ diagonal matrix with $(i,i)$-th entry $\frac{1}{\|\be_i^T \bU \mathbf{\Sigma}^{(1/2)}\|_2}$ and $\bD_2$ is a $d_2 \times D_2$ diagonal matrix with $(j,j)$-th entry $\frac{1}{\|\be_j^T \bV \mathbf{\Sigma}^{(1/2)} \|_2}$.  By construction, every row of $\bD_1 \bU \mathbf{\Sigma}^{(1/2)}$ has $\ell_2$-norm at most $1$, and every row of $\bD_2 \bV \mathbf{\Sigma}^{(1/2)}$ has $\ell_2$ norm at most $1$, and so we conclude that
\[ \|\bM\|_{\max} \leq \| \bW^{(-1/2)} \had \bX \|_{\max} = \frac{ \nucnorm{ \bX } }{\sqrt{m}}. \]

Now, we define the algorithm $\mathcal{A}$ as follows: sample $(i,j) \in [d_1] \times [d_2]$ with probability $W_{i,j}$, independently for each $(i,j)$.  
Then estimate $\hat{\bX}$ is given by
\[ \hat{\bX} = \bW^{(1/2)} \had \hat{\bM}, \]
where $\hat{\bM}$ is the estimate guaranteed by Theorem~\ref{thm:upperOmegaApprox}.\footnote{Looking into the proof of that theorem, we see that we should take
\[ \hat{ \bX } = \argmin_{ \|\bZ\|_{\max} \leq \frac{\nucnorm{ \bX }}{\sqrt{m}} } \| \bZ - \bW^{(-1)} \had \bX_\Omega \|. \] 
}

First, we observe that the expected number of samples taken by $\mathcal{A}$ is
\begin{align*} 
\sum_{i,j} W_{i,j} &= \sum_{i,j} \frac{ \| \be_i^T \bU \mathbf{\Sigma}^{(1/2)} \|^2_2 \| \be_j^T \bV\mathbf{\Sigma}^{(1/2)}  \|^2_2 }{ \nucnorm{ \bX }^2 } m \\
&= \frac{m \cdot \|\bU\mathbf{\Sigma}^{(1/2)} \|_F^2 \|\bV\mathbf{\Sigma}^{(1/2)} \|_F^2 }{\nucnorm{ \bX }^2} \\
&\leq m,
\end{align*}
where above we used the fact that 
\[ \|\bU\mathbf{\Sigma}^{(1/2)} \|_F^2 \|\bV\mathbf{\Sigma}^{(1/2)} \|_F^2 = \nucnorm{ \bX }^2.\]  
To see this, notice that
\begin{align*}
 \|\bU\mathbf{\Sigma}^{(1/2)} \|_F^2 \|\bV\mathbf{\Sigma}^{(1/2)} \|_F^2 
&= \inparen{ \sum_{\ell} |\sigma_\ell| \twonorm{ \bU \be_\ell }^2 } \inparen{ \sum_r |\sigma_r|\twonorm{ \bV \be_r }^2 } \\
&= \sum_{\ell, r} |\sigma_\ell| |\sigma_r| \\
&= \inparen{ \sum_\ell |\sigma_\ell| }^2\\
&= \nucnorm{ \bX }^2, 
\end{align*}
using the fact that $\bU$ and $\bV$ are orthogonal matrices and hence have columns of unit norm.

Next, we observe that by Theorem~\ref{thm:upperOmegaApprox} (with $\sigma = 0$), we have
\begin{align*}
 \|\bX - \hat{\bX}\|_F 
&=  \| \bW^{(1/2)} \had (\bM - \hat{\bM} ) \|_F \\
&\leq  C\|\bW^{(1/2)}\|_F  \|\bM\|_{\max} \inparen{ \frac{ d_1 + d_2 }{m} }^{1/4} \sqrt{ \log( d_1 + d_2) } \\
&\leq C \sqrt{m} \cdot \frac{ \nucnorm{\bX}}{ \sqrt{m} } \cdot \inparen{ \frac{ d_1 + d_2 }{m} }^{1/2} \sqrt{\log(d_1 + d_2) } \\
&= C \cdot \fronorm{ \bX } \inparen{ \frac{ \nucnorm{\bX} }{\fronorm{\bX} } } \inparen{ \frac{ d_1 + d_2 }{m} }^{1/4} \sqrt{\log(d_1 + d_2) }.
\end{align*}
In particular, if 
\[ m \geq \frac{ C^4 \inparen{ \frac{ \nucnorm{\bX} }{ \fronorm{\bX} } }^4 (d_1 + d_2 ) \log^2( d_1 + d_2 ) }{\eps^4 }, \]
then
\[ \| \bX - \hat{ \bX } \|_F \leq \eps \cdot \fronorm{ \bX }, \]
as claimed.  After adjusting the constant $C$ this implies the corollary.
\end{proof}

\section{Case study: when $\lambda$ is large}\label{sec:gap}
The point of this section is to examine our general bounds from Sections~\ref{sec:upper} and \ref{sec:lower} in the case when the parameter $\lambda = \opnorm{ \bW^{(1/2)} - \bW^{(-1/2)} \had \ind{\Omega} }$ is large, and to show that some dependence on this parameter is necessary.  We will not be able to obtain tight bounds on the dependence on $\lambda$, but we will be able to obtain upper and lower bounds that have similar qualitative dependence on $\lambda$ in some parameter regimes.  We leave it as an interesting open problem to understand the ``correct'' dependence on $\lambda$.

While this seems like a difficult challenge in general, we are able to make progress when $\ind{\Omega}$ happens to be the adjacency matrix of a connected graph.  In this case, $\lambda$ is directly related to the spectral gap of the underlying graph, and there are many tools available to study it.  Thus, our results below apply to this special case.

\subsection{Upper bound}
In this section we specialize our upper bound, Theorem \ref{lem:upperrankk}, to the case where $\lambda$ is large.

\begin{theorem}[Upper bound for rank-$r$ matrices in terms of $\lambda_1$ and $\lambda_2$]\label{thm:gapupper}
Let $\Omega \subseteq [d] \times [d]$ be such that $\ind{\Omega}$ is the adjacency matrix of a connected, undirected graph on $d$ vertices. 
Let $\bv_1, \bv_2$ be the eigenvectors of $\ind{\Omega}$ corresponding to the top two largest eigenvalues (by magntitude), $\lambda_1, \lambda_2$.
Suppose that there is some constant $C$ so that
\[ \max_i |(v_1)_i | \leq C \min_i |(v_1)_i| \qquad \max_i| (v_2)_i | \leq C \min_i |(v_2)_i | \]
Let $\sigma > 0$ and let $\bW$ denote the best rank-1 approximation to $\ind{\Omega}$ .

Suppose that $\bM \in \RR^{d \times d}$ has rank $r$.  Suppose that $Z_{ij} \sim \gN(0,\sigma^2)$ and let 
\[ \hat{ \bM } = \bW^{(-1/2)} \had \argmin_{ \text{rank}(\bX) = r } \norm{ \bX - \bW^{(-1/2)} \had (\bM_\Omega + \bZ_\Omega) }. \]
Then with probability at least $1 - 1/2d$ over the choice of $\bZ$,
\[ \frac{1}{\fronorm{ \bW^{(1/2)} }}\fronorm{ \bW^{(1/2)} \had ( \bM - \hat{ \bM } ) } \leq c \inparen{ r \inparen{ \frac{ \lambda_2 }{\lambda_1 } } + \sigma \sqrt{ \frac{ r \log(d) } {\lambda_1 } } }, \]
where $c$ is a constant that depends only on $C$.
\end{theorem}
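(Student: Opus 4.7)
The plan is to derive everything from Theorem~\ref{lem:upperrankk} by controlling the two problem parameters $\lambda$ and $\mu$ for the specific choice of weight $\bW$, together with the normalizer $\fronorm{\bW^{(1/2)}}$.

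First, because $\ind{\Omega}$ is symmetric with a nonnegative top Perron eigenvector, the best rank-1 approximation is $\bW = \lambda_1 \bv_1 \bv_1^T$, so $\bW^{(1/2)} = \sqrt{\lambda_1}\, \bv_1^{(1/2)}(\bv_1^{(1/2)})^T$ and $\bW^{(-1/2)}$ is the entrywise reciprocal. The flatness hypothesis on $\bv_1$, combined with $\|\bv_1\|_2 = 1$, pins every entry of $\bv_1$ into an interval $[c_1/\sqrt{d},\, C_1/\sqrt{d}]$, where the constants depend only on $C$. This immediately gives
\[
\fronorm{\bW^{(1/2)}}^2 = \lambda_1 \bigl(\textstyle\sum_i \sqrt{(v_1)_i}\bigr)^2 = \Theta(\lambda_1 d),
\]
which is our denominator.

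Next, I would bound $\lambda$ using the key algebraic identity
\[
\bW^{(1/2)} - \bW^{(-1/2)} \had \ind{\Omega} = \bW^{(-1/2)} \had \bigl(\bW - \ind{\Omega}\bigr),
\]
which holds because $\bW^{(-1/2)} \had \bW = \bW^{(1/2)}$ entrywise. Writing $\bD = \diag\!\bigl(1/\sqrt{(v_1)_i}\bigr)$, the right-hand Hadamard product is exactly $\tfrac{1}{\sqrt{\lambda_1}} \bD(\bW - \ind{\Omega})\bD$. Since $\bW$ is the best rank-1 approximation to the symmetric $\ind{\Omega}$, the Eckart–Young theorem gives $\opnorm{\bW - \ind{\Omega}} = |\lambda_2|$, and the flatness of $\bv_1$ gives $\opnorm{\bD}^2 = 1/\min_i (v_1)_i = O(\sqrt{d})$. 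Hence $\lambda \lesssim \sqrt{d}\,|\lambda_2|/\sqrt{\lambda_1}$, so after dividing by $\fronorm{\bW^{(1/2)}} = \Theta(\sqrt{\lambda_1 d})$, the deterministic term in Theorem~\ref{lem:upperrankk} contributes $r\infnorm{\bM}\cdot |\lambda_2|/\lambda_1$, matching the first term in the claimed bound (absorbing $\infnorm{\bM}$ into constants).

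For $\mu^2$, I would exploit that flatness of $\bv_1$ forces the graph to be near-regular: from $\ind{\Omega}\bv_1 = \lambda_1 \bv_1$ and the two-sided pinch on $(v_1)_i$, each vertex has degree $d_i \in [\lambda_1/C', C'\lambda_1]$. Then
\[
\max_i \sum_j \frac{\ind{(i,j)\in\Omega}}{W_{ij}} = \max_i \frac{1}{\lambda_1 (v_1)_i}\sum_{j \in N(i)} \frac{1}{(v_1)_j} \lesssim \frac{d_i}{\lambda_1 (v_1)_{\min}^2} \lesssim d,
\]
so $\mu \lesssim \sqrt{d}$. Plugging $\mu$ into the noise term of Theorem~\ref{lem:upperrankk} and dividing by $\sqrt{\lambda_1 d}$ yields the $\sigma\sqrt{r\log(d)/\lambda_1}$ term.

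The main obstacle is really just the operator-norm step: the naive conjugation bound $\|\bD(\bW-\ind{\Omega})\bD\| \leq \|\bD\|^2 \,|\lambda_2|$ is what produces the extra $\sqrt{d}$ factor in $\lambda$, and it is important that this $\sqrt{d}$ cancels exactly against the $\sqrt{d}$ in $\fronorm{\bW^{(1/2)}} \approx \sqrt{\lambda_1 d}$ so that the final bound cleanly reads $r \cdot \lambda_2/\lambda_1$ rather than something dimensionful. The flatness assumption on $\bv_2$ is not strictly used in this upper bound; it will be needed only for the matching lower bound (Theorem~\ref{thm:gaplower}), where a net inside the span of $\bv_1$ and $\bv_2$ needs those entries to be balanced.
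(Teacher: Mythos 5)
Your proposal is correct and follows essentially the same route as the paper's proof: bound $\lambda$ via the identity $\bW^{(1/2)} - \bW^{(-1/2)}\had\ind{\Omega} = \bW^{(-1/2)}\had(\bW-\ind{\Omega})$, diagonal conjugation, and Eckart--Young; bound $\mu$ via near-regularity deduced from $\ind{\Omega}\bv_1 = \lambda_1\bv_1$ and the flatness pinch; estimate $\fronorm{\bW^{(1/2)}}\asymp\sqrt{\lambda_1 d}$; and plug into Theorem~\ref{lem:upperrankk} (you also correctly observe that flatness of $\bv_2$ is unused here). One small slip: $\fronorm{\bW^{(1/2)}}^2 = \sum_{i,j}W_{ij} = \lambda_1\bigl(\sum_i (v_1)_i\bigr)^2$, not $\lambda_1\bigl(\sum_i\sqrt{(v_1)_i}\bigr)^2$ as you wrote, though the $\Theta(\lambda_1 d)$ conclusion you drew is the correct one, so nothing downstream is affected.
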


\begin{proof}
To apply Theorem~\ref{lem:upperrankk}, we must compute $\lambda, \mu$ and $\fronorm{ \bW^{(1/2)} }$ in terms of $\lambda_1$.
We can explicitly write $\bW = \lambda_1 \bv_1 \bv_1^T$, since $\lambda_1$ is the largest eigenvalue of $\ind{\Omega}$.
Of course, since $\ind{\Omega}$ defines the adjacency matrix of a connected graph, $\bv_1$ has strictly positive entries
(by the Perron-Frobenius Theorem) and hence $W_{ij} > 0$ for all $i, j$. 
Let $\bar{W} = \min_{ij} W_{ij}$, so that, using our assumption $W_{ij} \in [\bar{W}, C \bar{W} ]$ for all $i,j$.

Since $\bar{W}$ is rank-1, 
\[ \lambda_1 = \opnorm{ \bW } = \fronorm{ \bW} = \sqrt{ \sum_{ij} W_{ij}^2} \in [ \bar{W} d, C \bar{W} d].  \]
Rearranging, this means
\[ \bar{W} \in \left[ \frac{ \lambda_1 }{Cd }, \frac{ \lambda_1 }{d } \right]. \]
Then we compute
\[ \lambda = \opnorm{ \bW^{(1/2)} - \bW^{(-1/2)} \had \ind{\Omega} } \leq \frac{1}{ \sqrt{ \bar{W}} } |\lambda_2| \leq \sqrt{ \frac{ Cd }{ \lambda_1} }|\lambda_2| \]
and 
\[ \mu^2 = \max\inset{ \max_i \sum_j \frac{ \Omega_{ij} }{W_{ij}} , \max_j \sum_i \frac{ \Omega_{ij}}{W_{ij}} } 
\leq \frac{1}{\bar{W}} \max_i \| \br_i\|_0 \leq \frac{C d}{\lambda_1} \max_i \| \br_i\|_0 \]
where $\br_i$ is the $i$'th row of $\ind{\Omega}$, and $\|\cdot\|_0$ denotes the number of nonzero entries.
Now we have, for all $i$, 
\[ \ip{ \br_i }{ \bv_1 } = \lambda_1 (v_1)_i \in [ \lambda_1 \bar{v}, C \lambda_1 \bar{v} ], \]
where $\bar{v} = \min_i (v_1)_i > 0$.  
We also have
\[ \ip{ \br_i }{ \bv_1 } = \sum_{j=1}^d (r_i)_j (v_1)_j \in [ \bar{v} \| \br_i \|_0, C \bar{v} \| \br_i \|_0 ], \]
and together these imply that
\[ \frac{ \lambda_1 }{C } \leq \| \br_i \|_0 \leq C \lambda_1 \]
for all $i$.
Thus, we can simplify the bound on $\mu^2$ to 
\[ \mu^2 \leq \frac{ C d }{ \lambda_1 } \max_i \|\br_i\|_0 \leq  C^2 d.  \]
Finally, we bound (as $C \geq 1$)
\[ \fronorm{ \bW^{(1/2)} } = \sqrt{ \sum_{i,j} W_{ij} } \in \left[ d \sqrt{ \bar{W} }, Cd \sqrt{ \bar{W} } \right] \subseteq \left[ \frac{ 1}{C} \sqrt{  \lambda_1 d} , C \sqrt{ \lambda_1 d } \right]. \]
Plugging all of these bounds into Theorem~\ref{lem:upperrankk} proves the theorem (and $c = 4\sqrt{2} C^2$ suffices). 
\end{proof}

\subsection{Lower bound}

\begin{theorem}\label{thm:gaplower}
Let $\Omega \subseteq [d] \times [d]$ be such that $\ind{\Omega}$ is symmetric and corresponds to the adjacency matrix of a connected undirected graph on $d$ vertices.
Let $\bv_1$ and $\bv_2$ be the first and second eigenvectors of $\ind{\Omega}$, (normalized so that $\| \bv_1 \|_2 = \| \bv_2 \|_2 = 1$) with corresponding eigenvalues $\lambda_1$ and $\lambda_2$.  
Suppose that 
\[ \max_i | (v_1)_i | \leq C \min_i |(v_1)_i| \qquad \max_i |(v_2)_i | \leq C \min_i |(v_2)_i|. \]
Let $\bW = \lambda_1 \bv_1 \bv_1^T$ be the best rank-1 approximation to $\ind{\Omega}$.

Let $\sigma > 0$, let $0 < r < d^{1/3}/\log^{2/3}(d)$, and let $K \subset \R^{d \times d}$ be the cone of rank $r$ matrices.  For any algorithm $\cA: \R^\Omega \to \R^{d \times d}$ that takes as input $\bX_\Omega + \bZ_\Omega$ and outputs a guess $\hat{\bX}$ for $\bX$, for $\bX \in K \cap \beta B_\infty$ and $Z_{ij} \sim \gN(0,\sigma^2)$, there is some $\bM \in K \cap \beta B_\infty$ so that
\[
\frac{1}{ \fronorm{ \bW^{(1/2)} } }
 \fronorm{ \bW^{(1/2)} \had ( \cA( \bM_\Omega + \bZ_\Omega )  - \bM ) } 
\geq c \min \inset{ \sqrt{ \frac{ 1 }{ \lambda_1 - \lambda_2 } } \cdot \sigma \sqrt{ r } , \ \frac{ \beta } { \sqrt{ r \log(d) } } } 
\]
\end{theorem}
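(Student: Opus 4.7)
The plan is to apply Fano's inequality (Lemma~\ref{lem:fano}) to a packing produced by Lemma~\ref{lem:net}, following the structure of the proof of Lemma~\ref{lem:flatLB}. The two terms inside the minimum should arise from the two binding constraints on the scale parameter $\kappa$ in Lemma~\ref{lem:fano}: the noise constraint $\kappa \leq \sigma\sqrt{\log|\cX|}/(4\max_{\bX \in \cX}\fronorm{\bX_\Omega})$ is what should give the $\sigma\sqrt{r/(\lambda_1 - \lambda_2)}$ term, and the inf-norm constraint $\kappa\cX \subseteq \beta B_\infty$ is what should give the $\beta/\sqrt{r\log d}$ term.

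First I would invoke Lemma~\ref{lem:net} with $\bH = \bW^{(1/2)}$ and a rank-one $\bA$ with $\infnorm{\bA}\leq 1$ to be chosen below. The flatness of $\bv_1$ yields $\fronorm{\bW^{(1/2)}} \asymp \sqrt{\lambda_1}$ and $W_{ij} \asymp \lambda_1/d$; Lemma~\ref{lem:net} then produces a net $\cX$ of size $|\cX| \gtrsim \exp(rd)$ of rank-$r$ matrices with $\infnorm{\bX} \lesssim \sqrt{r\log d}$, $\fronorm{\bX_\Omega} \lesssim \sqrt{r}\,\fronorm{\bA_\Omega}$, and pairwise separations $\fronorm{\bW^{(1/2)} \had (\bX - \bX')} \gtrsim \sqrt{r}\,\fronorm{\bA \had \bW^{(1/2)}}$. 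Substituting into Lemma~\ref{lem:fano} and normalizing by $\fronorm{\bW^{(1/2)}}$ produces a generic lower bound of the form
\[
c\,\min\Bigl\{\tfrac{\sigma\sqrt{rd}}{\sqrt{\lambda_1}}\cdot\tfrac{\fronorm{\bA \had \bW^{(1/2)}}}{\fronorm{\bA_\Omega}},\ \tfrac{\beta}{\sqrt{\log d}}\cdot\tfrac{\fronorm{\bA \had \bW^{(1/2)}}}{\sqrt{\lambda_1}}\Bigr\}.
\]

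For the second term, $\bA = \mathbf{1}\mathbf{1}^T$ already delivers $\beta/\sqrt{\log d}$, which is at least the claimed $\beta/\sqrt{r\log d}$. The main work is for the first term, where $\bA = \mathbf{1}\mathbf{1}^T$ only yields $\sigma\sqrt{r/\lambda_1}$. To sharpen this to $\sigma\sqrt{r/(\lambda_1 - \lambda_2)}$ I would decompose $\ind{\Omega} = \lambda_1 \bv_1\bv_1^T + \lambda_2 \bv_2\bv_2^T + \bR$ with $\opnorm{\bR} \leq |\lambda_2|$ (using the assumption on the top two eigenvalues in magnitude), and design $\bA = \ba \bb^T$ so that the dominant $\lambda_1 \ip{\bv_1}{\ba^{(2)}}\ip{\bv_1}{\bb^{(2)}}$ contribution to $\fronorm{\bA_\Omega}^2 = (\ba^{(2)})^T \ind{\Omega}\bb^{(2)}$ is partially cancelled, leaving a residual of order $\lambda_1 - \lambda_2$. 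The flatness of $\bv_2$ enters here because it allows $\bA$ to carry signed contributions aligned with the sign pattern of $\bv_2$ without violating $\infnorm{\bA} \leq 1$, and to control the $\bR$ remainder uniformly.

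The main obstacle is precisely this last step. For any rank-one $\bA$ with strictly nonnegative entries, Perron--Frobenius forces $\ip{\bv_1}{\ba^{(2)}}$ and $\ip{\bv_1}{\bb^{(2)}}$ to be positive and not too small, so $\fronorm{\bA_\Omega}^2 \gtrsim \lambda_1 \fronorm{\bA}^2/d$ with no spectral-gap dependence, and the ratio $\fronorm{\bA \had \bW^{(1/2)}}/\fronorm{\bA_\Omega}$ plateaus at $1/\sqrt{d}$. Introducing the gap $\lambda_1 - \lambda_2$ can only happen by allowing $\bA$ to carry signed entries that partially cancel in $\bA_\Omega$ while keeping $\bA \had \bW^{(1/2)}$ large; designing such an $\bA$ while keeping the $\infnorm{\bA}\leq 1$ bound and the other hypotheses of Lemma~\ref{lem:net} intact is the delicate technical heart of the argument.
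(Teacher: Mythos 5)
Your high-level strategy matches the paper's proof exactly: invoke Lemma~\ref{lem:net} with $\bH = \bW^{(1/2)}$ and a suitably chosen rank-one $\bA$, feed the resulting packing into Lemma~\ref{lem:fano}, and trace the two terms of the minimum to the noise constraint and the $\ell_\infty$ constraint on $\kappa$. Your generic intermediate bound (up to bookkeeping of $\sqrt{d}$ factors) is also correct in spirit, and you correctly identify that the entire problem is to design $\bA$ so that $\fronorm{\bA_\Omega}$ is small relative to $\fronorm{\bA\had\bW^{(1/2)}}$.

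However, your diagnosis of the obstruction is wrong, and that error is precisely where the real argument lives. You claim that for any nonnegative rank-one $\bA$ the ratio $\fronorm{\bA\had\bW^{(1/2)}}/\fronorm{\bA_\Omega}$ plateaus at $1/\sqrt d$ because $\fronorm{\bA_\Omega}^2 = (\ba^{(2)})^T\ind{\Omega}\bb^{(2)} \gtrsim \lambda_1 \ip{\bv_1}{\ba^{(2)}}\ip{\bv_1}{\bb^{(2)}}$, and you conclude that $\bA$ must carry signed entries. This is incorrect: that inequality only retains the $\lambda_1$ contribution to the spectral expansion $(\ba^{(2)})^T\ind{\Omega}\bb^{(2)} = \sum_k \lambda_k \ip{\bv_k}{\ba^{(2)}}\ip{\bv_k}{\bb^{(2)}}$ and silently discards the $k\geq 2$ terms, which can be \emph{negative} and can nearly cancel the $\lambda_1$ term even when $\bA$ is entrywise nonnegative. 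The paper exploits exactly this. Since $\bv_1 \perp \bv_2$ and $\bv_1 \succ 0$, $\bv_2$ must change sign; writing $\bv_2 = (\bh, -\bg)$ with $\bh,\bg \succeq 0$, the paper sets $\bA$ to be (a normalized Hadamard square root of) $(\bh,\mathbf 0)(\mathbf 0,\bg)^T$ --- a nonnegative rank-one matrix supported only on the off-diagonal $s\times t$ block of the sign partition. With this choice $\ip{\bv_2}{\ba^{(2)}} > 0$ while $\ip{\bv_2}{\bb^{(2)}} < 0$, so the $\lambda_2$ contribution to $(\ba^{(2)})^T\ind{\Omega}\bb^{(2)}$ actively cancels the $\lambda_1$ one. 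Concretely $\fronorm{\bA_\Omega}^2 \propto \bh^T\bB\bg$, where $\bB$ is the cross-block of $\ind{\Omega}$, and the key lemma you are missing is the Rayleigh-quotient bound $\bh^T\bB\bg \leq \alpha^2(\lambda_1 - \lambda_2)/4$, proved by evaluating $\bx^T\ind{\Omega}\bx$ for $\bx = (\bh,\bg) = \alpha \bv_1 + \sqrt{1-\alpha^2}\,\bz$ with $\bz \perp \bv_1$ two ways. The flatness of $\bv_2$ then enters only to show $\alpha \gtrsim 1$ and $\infnorm{\bh}\infnorm{\bg} \asymp 1/d$, not --- as you speculated --- to license signed entries in $\bA$.
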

\begin{proof}
By the Perron-Frobenius theorem, $\bv_1 \succ \mathbf{0}$, aka, it has all strictly positive entries.
Since $\bv_1 \perp \bv_2$, $\bv_2$ must have some negative entries.  Without loss of generality, suppose that the coordinates are ordered so that the entries of $\bv_2$ are decreasing, and write
\[ \bv_2 = ( \bh, -\bg ) \]
where $\bh, \bg \succeq \mathbf{0}$, $\bh \in \R^s$  and $\bg \in \R^t$.  (This defines $s,t$ so that $s + t = d$).  Write $\bv_1 = (\ba, \bb)$ for $\ba \in \R^s$, $\bb \in \R^t$ according to the same partition of coordinates, and write
\[ \ind{\Omega} = \begin{pmatrix} \bB_0 & \bB \\ \bB^T & \bB_1 \end{pmatrix} \]
so that $\bB_0 \in \R^{s \times s}$, $\bB_1 \in \R^{t\times t}$ and $\bB \in \R^{s \times t}$.
Notice that by orthogonality, 
\[ 0 = \ip{ \bv_1 }{\bv_2} = \ip{ \ba }{\bh} - \ip{\bb}{ \bg } \]
and so 
\[ \ip{ \ba}{\bh} = \ip{ \bb}{\bg }. \]
Let
\[ \alpha := 2 \ip{ \ba }{\bh } = 2 \ip{ \bb }{\bg} = \ip{ \ba }{\bh } + \ip{ \bb }{ \bg } . \]
\begin{claim}
\[
\bh^T \bB \bg \leq \frac{ \alpha ( \lambda_1 - \lambda_2 ) }{ 4 }. \]
\end{claim}
\begin{proof}
Let $\bx = (\bh, \bg)$, so that $\bx \succeq \mathbf{0}$.  Then 
\[ \bx^T \ind{\Omega} \bx = \bh^T \bB_0 \bh + \bg^T \bB_1 \bg + 2 \bh^T \bB \bg \]
and
\[ \lambda_2 = \bv_2^T \ind{\Omega} \bv_2 = \bh^T \bB_0 \bh + \bg^T \bB_1 \bg - 2 \bh^T \bB \bg, \]
so
\[ \bx^T \ind{\Omega} \bx = \lambda_2 + 4 \bh^T \bB \bg. \]
Observing that $\alpha = \ip{ \bx }{ \bv_1}$, we have
\[ \bx = \alpha \bv_1 + \inparen{\sqrt{ 1 - \alpha^2 }} \bz, \]
for some vector $\bz$ that satisfies $\bz \perp \bv_1$.  Then 
\begin{align*}
\lambda_2 + 4 \bh^T \bB \bg &= \bx^T \ind{\Omega} \bx \\
&= \alpha^2 \bv_1^T \ind{\Omega} \bv_1 + (1 - \alpha^2) \bz^T \ind{\Omega} \bz + 2 \alpha \sqrt{ 1 - \alpha^2 } \bv_1^T \ind{\Omega} \bz \\
&= \alpha^2 \bv_1^T \ind{\Omega} \bv_1 + (1 - \alpha^2) \bz^T \ind{\Omega} \bz \\
&\leq \alpha^2 \lambda_1 + (1 - \alpha^2) \lambda_2,
\end{align*}
from which we conclude that
\[ 4 \bh^T \bB \bg \leq \alpha^2 (\lambda_1 - \lambda_2), \]
as desired.  Above, we have used the fact that $\bv_1^T \ind{\Omega} \bz = \lambda_1 \ip{ \bv_1 }{\bz} = 0$ since $\bv_1 \perp \bz$.
\end{proof}
Now define
\[ \bW = \lambda_1 \bv_1 \bv_1^T \]
to be the best rank-1 approximation to $\ind{\Omega}$.  
Choose
\[ \bA = \frac{1}{ \sqrt{ \infnorm{ \bh } \infnorm{ \bg } } } \inparen{ ( \bh, \mathbf{0}) ( \mathbf{0}, \bg )^T }^{(1/2)} \]
and 
\[ \bH = \bW^{(1/2)} = \sqrt{ \lambda_1 } \inparen{ ( \ba, \bb ) ( \ba , \bb)^T )}^{(1/2)}. \]
Then we may compute
\[ \fronorm{ \bH \had \bA }^2 = \frac{ 1 }{ \infnorm{ \bh }{ \infnorm{ \bg } } } \lambda_1 \ip{ \ba }{ \bh } \ip{ \bb }{\bg } 
= \frac{ \lambda_1 \alpha^2 }{ 4 \infnorm{ \bh } \infnorm{ \bg } } \]
and
\[ \fronorm{ \bA_\Omega }^2 = \frac{1}{ \infnorm{ \bh } \infnorm{ \bg }} \sum_{i=1}^s \sum_{j=1}^t B_{ij} h_i g_j = \frac{ \bh^T \bB \bg }{ \infnorm{ \bh } \infnorm{ \bg } } \leq \frac{ \alpha^2 ( \lambda_1 - \lambda_2 ) }{ 4 \infnorm{ \bh } \infnorm{ \bg } } \]
using the claim.
Now we apply Lemma~\ref{lem:net} with this choice of $\bA$.  In the language of that lemma, we have
\[ \bz = \inparen{ \frac{ \lambda_1 }{ \infnorm{ \bh } \infnorm{ \bg } } } ( \bh \had \ba, \mathbf{0} ) \qquad \bv = \inparen{ \frac{ \lambda_1 }{ \infnorm{ \bh } \infnorm{ \bg } } } ( \mathbf{0}, \bg \had \bb  ). \]
 
By our assumption, there is some constant $C$ so that $\max_i|h_i| \leq C \min_i|h_i|$, and the same for $\bg, \ba, \bb$.  
Thus, as in the proof of Lemma~\ref{lem:flatLB}, Lemma~\ref{lem:net} guarantees a net $\cX$ so that 
\[ |\cX| \geq 2e \exp( C'rd ), \]
for some constant $C'$ (which depends on $C$) and so that, for some constant $c$,
for all $\bX \in \cX$ we have
\[ \fronorm{ \bX_\Omega } \leq \frac{ c \alpha \sqrt{r  (\lambda_1 - \lambda_2) } }{ \sqrt{ \infnorm{ \bh }\infnorm{ \bg } } }, \]
for all $\bX \neq \bX' \in \cX$ we have
\[ \fronorm{ \bH \had ( \bX - \bX' ) } \geq \frac{ \sqrt{r \lambda_1} \alpha }{\sqrt{  \infnorm{ \bh } \infnorm{ \bg }} }, \]
and finally for all $\bX \in \cX$ we have
\[ \infnorm{ \bX } \leq c \sqrt{ r \log(d) } . \]
Now we want to apply Lemma~\ref{lem:fano}, and we choose
\[ \kappa = \min \inset{ c'' \inparen{ \frac{ \sigma }{ \alpha }} \sqrt{ \frac{ rd  \infnorm{ \bh } \infnorm{ \bg } }{ \lambda_1 - \lambda_2 } } , \frac{ \beta }{ c\sqrt{ r \log(d) } } }. \]
for some constant $c''$ to be chosen below.
Observe that
\begin{align*}
\frac{ \sigma \sqrt{ \log |\cX| } }{ 4 \max_{ \bX \in \cX } \fronorm{ \bX } } 
&= \frac{ \sigma \sqrt{ C'rd } \sqrt{ \infnorm{ \bh} \infnorm{ \bg } }}{ 2 \alpha \sqrt{ \lambda_1 - \lambda_2 } } 
\geq \kappa
\end{align*}
for an appropriate choice of $c'' = \sqrt{C'}/2$, so this is a legitimate choice of $\kappa$ for Lemma~\ref{lem:fano}.  We conclude that for any algorithm $\cA$ that works on $K \cap \beta B_\infty$, there is some matrix $\bM \in K \cap \beta B_\infty$ so that
\begin{align*}
 \fronorm{ \bW^{(1/2)} \had ( \cA( \bM_\Omega + \bZ_\Omega )  - \bM ) } &\geq \frac{ \kappa }{2 } \min_{ \bX \neq \bX' \in \cX } \fronorm{ \bH \had (\bX - \bX') } \\
&\geq  \frac{1}{2} \min \inset{ c'' \inparen{ \frac{ \sigma }{ \alpha }}  \sqrt{ rd \frac{  \infnorm{ \bh } \infnorm{ \bg } }{ \lambda_1 - \lambda_2 } } , \frac{ \beta }{ c\sqrt{ r \log(d) } } } \cdot \sqrt{\frac{ \lambda_1 \alpha^2 }{ \infnorm{ \bh } \infnorm{ \bg } } }\\
&\geq c''' \min \inset{ \sqrt{ \frac{ \lambda_1 }{ \lambda_1 - \lambda_2 } } \sigma \sqrt{ rd} , \beta \sqrt{ \frac{ \lambda_1 \alpha^2 }{ \infnorm{ \bh }\infnorm{ \bg } r\log(d) } } } 
\end{align*}
for some constant $c'''$.

\begin{claim}
Under the assumptions on $\bv_1, \bv_2$, 
we have
\[ \frac{1}{C^2 d } \leq \infnorm{\bh} \infnorm{ \bg } \leq \frac{C^2}{d}, \]
and
\[ \alpha \geq \frac{1}{C^2}. \]
\end{claim}
\begin{proof}
First, we observe that since $\bv_1, \bv_2$ have $\ell_2$-norm $1$ and have entries that are all about the same magnitude, up to a factor of $C$, we must have every entry of $\bv_1$ and $\bv_2$ in the interval $\left[ \frac{1}{C \sqrt{d} }, \frac{C}{\sqrt{d}} \right]$.
(Indeed, if one entry of $\bv_1$ is larger than $C/\sqrt{d}$, then all entries are larger than $1/\sqrt{d}$, which contradicts the requirement on the $2$-norm, and similarly if the smallest entry is smaller than $1/(C\sqrt{d})$; the same holds for $\bv_2$).   This immediately implies the claim about $\infnorm{ \bh } \infnorm{ \bg }$ since $\bv_2 = ( \bh, - \bg )$.

Next, we recall that 
\[ \alpha = \ip{ \bh }{ \ba } + \ip{ \bg }{ \bb } =  2 \ip{ \bh }{ \ba } = 2 \ip{ \bg }{ \bb}, \] and
using the observation above about the size of the entries of $\bv_1, \bv_2$, we have
\[ \alpha \geq \frac{s}{C^2 d} + \frac{t}{ C^2 d } = \frac{ 1}{C^2}. \]

\end{proof}
Using the claim, we bound $\alpha, \infnorm{ \bh } \infnorm{ \bg }$ in the second term above, and conclude
that there is some $\bM$ so that
\begin{align*}
 \fronorm{ \bW^{(1/2)} \had ( \cA( \bM_\Omega + \bZ_\Omega )  - \bM ) } 
&\geq c'''' \min \inset{ \sqrt{ \frac{ \lambda_1 }{ \lambda_1 - \lambda_2 } } \cdot \sigma \sqrt{ rd} , \ \beta \sqrt{ \frac{ d \lambda_1 }{  r\log(d) } } } 
\end{align*}
for some other constant $c''''$.  This completes the proof, after normalizing by
\[ \fronorm{ \bW^{(1/2)} } = \sqrt{ \lambda_1 \onenorm{ \bv_1 }^2 } \leq C \sqrt{d \lambda_1 }. \]
\end{proof}

\section{Experiments}\label{sec:experiments}
In this section, we illustrate the results of numerical experiments for our
debiased projection method,
\[
\hat \bM_{\rm debias} := \bW^{(-1/2)} \circ \argmin_{\rank(\bX) = r} \| \bX - \bW^{(-1/2)} \circ \bM \|_F.
\]
In this section, we refer to this algorithm as a \emph{debiased}, low-rank projection. This is in contrast to a
standard (see section 2.5 of \cite{plan2016high}) low-rank projection,
\[
\hat \bM_{\rm standard} := \argmin_{\rank(\bY) = r} \left\| \bY - p^{-1} \bM \right\|_F.
\]
Above, $p := |\Omega|/d_1d_2$. 
We report the performance of these two procedures for various synthetic and data-derived sampling patterns.

\subsection{Data-derived sampling patterns}  \let\tilde\widetilde

In our first experiments, we use sampling patterns taken from the Jester joke corpus \cite{eigentaste2001goldberg}, and the
Movielens 100k dataset \cite{harper2016movielens}.
In the first dataset, users rate jokes; in the second, users rate movies.
We take $d_1$ to be the number of users enrolled in the dataset, and
$d_2$ to be the number of rated items (\emph{e.g.}, jokes or films).
For the Jester dataset, we have $d_1 = 73, 421$ and $d_2 = 100$,
For the Movielens dataset, we have $d_1 = 997$ and $d_2 = 538$. 
We take $\Omega \subseteq [d_1] \times [d_2]$ to be the observed indices in each
dataset. In particular, $(i, j) \in \Omega$ whenever user $i$ rates item $j$.
With $\ones_\Omega = \left(\mathbf{1}_{(i, j) \in \Omega}\right)_{\substack{1 \leq i \leq d_1 \\ 1 \leq j \leq d_2}}$,
in the following experiments we take $\bW$ to be the best rank-1 approximation to $\ones_\Omega$,
$\bW := \argmin_{\rank(U) = 1} \|\ones_\Omega - U\|$.
In the examples we present, $\bW_{ij} \geq 0$ for all $i, j$. 

To produce Figures \ref{fig:real-data} and \ref{fig:real-data-2} below, we generate synthetic low-rank matrices, and use the sampling patterns from the real-life data described above. 
We consider ranks $r$ between $1$ and $10$.
For $N = 50$ trials, we construct random rank $r$ matrices $\bX_1, \dots, \bX_N \in \R^{d_1 \times d_2}$, with independent standard normal entries.
For each matrix $\bX_i$, we average over $T = 25$ tests, testing our algorithm versus the the standard projection algorithm (\emph{e.g.}, truncated SVD)
on $\bY_{i, 1}, \dots, \bY_{i, T}$, where $\bY_{i, j} = \ones_\Omega \circ (\bX_i + \bZ_{i, j})$, where $\bZ_{i, j} \in \R^{d_1 \times d_2}$ has independent
standard normal entries. We measure error using the weighted Frobenius norm. In particular, with $\hat \bX$ an estimate of $\bX$, we report
\[
\frac{\|\bW^{(1/2)} \circ (\bX - \hat \bX)\|_F}{\|\bW^{(1/2)}\|_F} \quad \text{and} \quad
\frac{\|\bX - \hat \bX\|_F}{\sqrt{d_1d_2}}.
\]
In the sequel, we refer to these as the \emph{weighted error} and \emph{unweighted error}, respectively.

\begin{figure}[h!t]
\centering
\begin{overpic}[width=0.45\linewidth]{jester/f-weighted.pdf}
\put(45, -3){\scalebox{0.75}{rank}}
\put(-4, 28){\scalebox{0.75}{\rotatebox{90}{{weighted error}}}}
\end{overpic}
\hfill
\begin{overpic}[width=0.45\linewidth]{jester/f-unweighted.pdf}
\put(45, -3){\scalebox{0.75}{rank}}
\put(-4, 28){\scalebox{0.75}{\rotatebox{90}{{unweighted error}}}}
\end{overpic}
\vspace{1mm}
\caption{Average errors of our debiased algorithm, versus standard low-projection procedure
  on Jester sampling pattern and standard Gaussian data. Error bars denote standard deviations over 30 samples of the
  noise; each line (and error bar) is averaged over 30 trials.}
\label{fig:real-data}
\end{figure}

\begin{figure}[h!t]
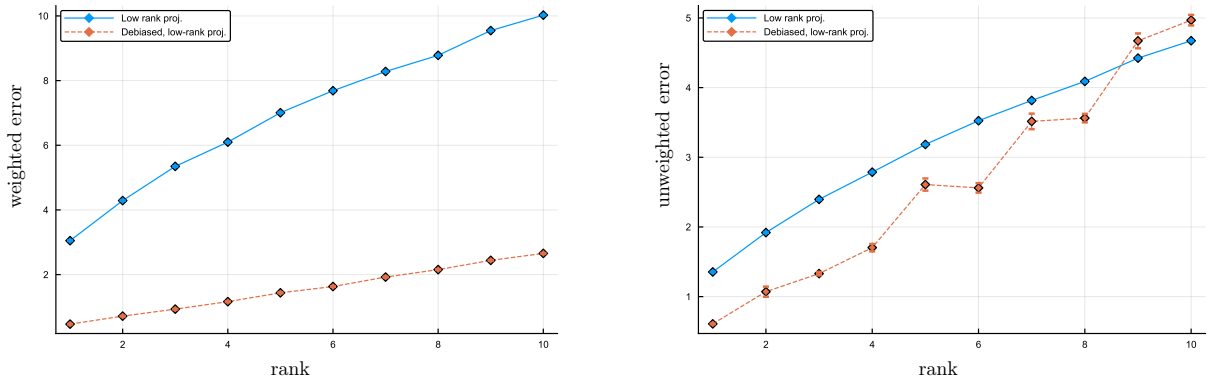

\centering
\begin{overpic}[width=0.45\linewidth]{movielens/f-weighted.pdf}
\put(45, -3){\scalebox{0.75}{rank}}
\put(-4, 28){\scalebox{0.75}{\rotatebox{90}{{weighted error}}}}
\end{overpic}
\hfill
\begin{overpic}[width=0.45\linewidth]{movielens/f-unweighted.pdf}
\put(45, -3){\scalebox{0.75}{rank}}
\put(-4, 28){\scalebox{0.75}{\rotatebox{90}{{unweighted error}}}}
\end{overpic}
\vspace{1mm}
\caption{Average errors of our debiased algorithm, versus standard low-rank projection procedure
  on Movielens sampling pattern and standard Gaussian data. Error bars denote standard deviations over 30 samples of the
  noise; each line (and error bar) is averaged over 30 trials.}
\label{fig:real-data-2}
\end{figure}

We remark that although we only provide guarantees on the performance in the weighted Frobenius norm, our procedures exhibit
good empirical performance (relative to standard projection procedures) even in the usual Frobenius norm. 

\subsection{Synthetic sampling patterns}
For both of the experiments below, we focus on various synthetic constructions of $\Omega$
in order to demonstrate how the performance of debiased projection depends on the parameters of the sampling pattern.

For $d$ and $r$ specified below, we always construct random data as follows. For $N = 15$,
we pick matrices $\bX_1, \dots, \bX_N \in \R^{d \times d}$ with for $n = 1, \dots, N$,
\[
\bX_n = \bU_n \bV_n^T, \quad \bU_n, \bV_n \in \R^{d \times r}, \quad (\bU_n)_{ij}, (\bV_n)_{ij} \overset{iid}{\sim} \mathcal{N}(0, 1).
\]

We describe two experiments below.  The first one is motivated by the first case study, when $\Omega \sim \bW$, and the second is motivated by the second case study when the spectral gap is large.

\subsubsection{Sampling $\Omega \sim \bW$}
In the following experiment, we simulate our first case study, sampling $\Omega \sim \bW$ for a rank-$1$ matrix $\bW$.
For simplicity, we take the weight matrix $\bW = \bw \bw^T$ to be symmetric. We also take $d = 1000$ and $r = 10$.

We choose several different $\bW$'s with different levels of ``flatness,'' to show how the performance of our algorithm depends on the flatness of $\bW$.
More precisely,
let $m \in [4 d \log d, d^2/4]$ and $y \in [\sqrt{2/d}, \sqrt{\log d/d}]$.
For each $m$ and $y$, we construct sampling patterns, $\Omega_1, \dots, \Omega_T$, with $T = 15$ in the following manner.
We select $\bW$ with $\bw$ given by 
\[
\bw = \left(f(y, m, d) \ones_{d/ 2}, y \ones_{d/2}\right) \qquad \mbox{where} \qquad f(y, m, d) = \frac{2\sqrt{m}}{d} - y.
\]
Above, $f(y, m, d)$ is chosen such that $\|\bw\|_1 = \sqrt{m}$, and hence $\E{|\Omega|} = m$, when $\Omega \sim \bW$.
Thus, a larger value of $y$ corresponds to a ``flatter'' matrix $\bW$.

With such choices of $\bw$, $\bW_{ij} \in (0, 1]$, and we draw $\Omega_t \sim \bW$ for $t = 1, \dots T$.
  For each $t$, we run the standard truncated SVD algorithm and our debiased projection procedure on
  $Y_{n, t} = \mathbf{1}_{\Omega_t} \had (\bX_n + \bZ_{n, t})$, for $n = 1, \dots, N$ and $t = 1, \dots, T$.
  Here, $(\bZ_{n, t})_{ij} \sim \mathcal{N}(0, 1)$ for $1 \leq i, j \leq d$. We repeat
  this experiment for various $m$ and $y$ in the intervals above. 
\begin{figure}[h!t]
\centering
\begin{overpic}[width=0.475\linewidth]{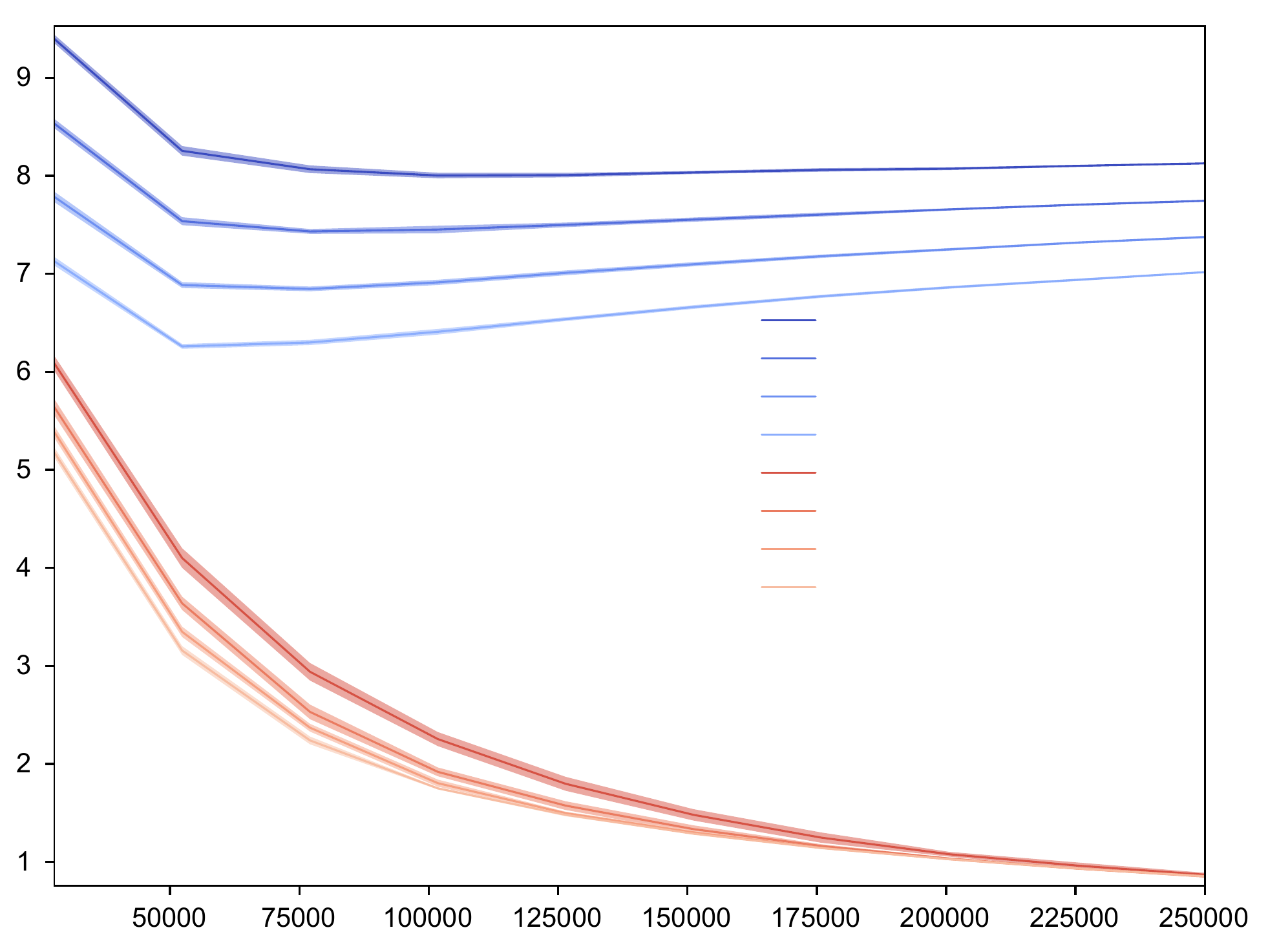}
\put(45, -3){\scalebox{0.75}{$m = \E{|\Omega|}$}}
\put(-4, 28){\scalebox{0.75}{\rotatebox{90}{{weighted error}}}}
\put(65.5, 49){\scalebox{0.55}{standard proj., $y = 0.045$}}
\put(65.5, 46){\scalebox{0.55}{standard proj., $y = 0.058$}}
\put(65.5, 43){\scalebox{0.55}{standard proj., $y = 0.070$}}
\put(65.5, 40){\scalebox{0.55}{standard proj., $y = 0.083$}}
\put(65.5, 37){\scalebox{0.55}{debiased proj., $y = 0.045$}}
\put(65.5, 34){\scalebox{0.55}{debiased proj., $y = 0.058$}}
\put(65.5, 31){\scalebox{0.55}{debiased proj., $y = 0.070$}}
\put(65.5, 28){\scalebox{0.55}{debiased proj., $y = 0.083$}}
\end{overpic}
\hfill
\begin{overpic}[width=0.475\linewidth]{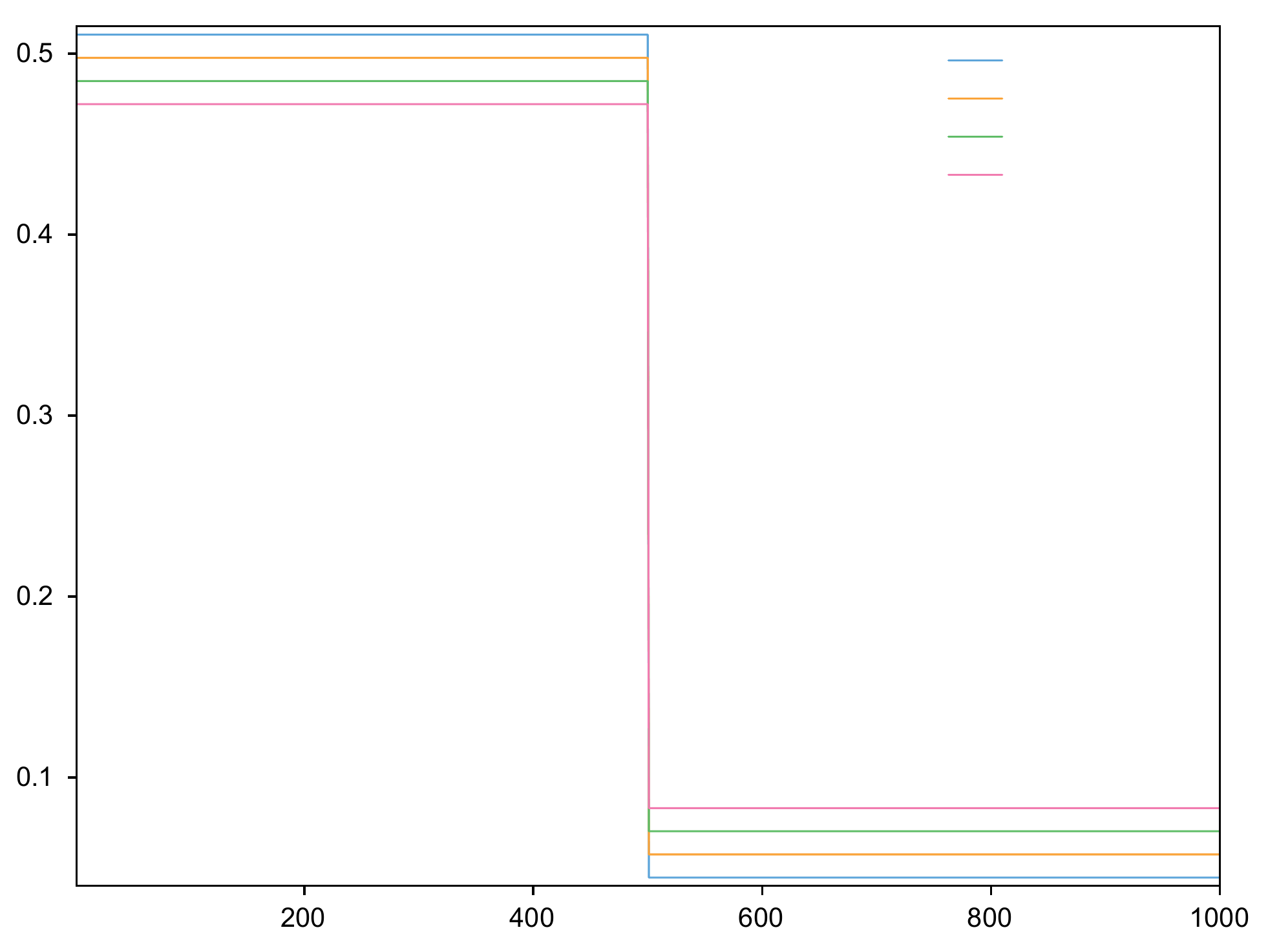}
\put(54, -3){\scalebox{0.75}{$i$}}
\put(-4, 30){\scalebox{0.75}{\rotatebox{90}{$\bw_i$}}}
\put(82, 70){\scalebox{0.55}{$y = 0.045$}}
\put(82, 67){\scalebox{0.55}{$y = 0.058$}}
\put(82, 64){\scalebox{0.55}{$y = 0.070$}}
\put(82, 61){\scalebox{0.55}{$y = 0.083$}}
\end{overpic}
\vspace{1mm}
\caption{(Left) Weighted error for various $\Omega \sim \bW$ as described above. Shading indicates a single standard deviation, across the
  draws $\Omega_t \sim \bW$. (Right) Entries of $\bw$ for various choices of $y$ and $m = 77,045$.}
\label{fig:sampleW}
\end{figure}
The range of $y$ is chosen to ensure that the plots of weighted error above are over the same range of $m$, while still respecting
the constraints on $\bW$ (\emph{i.e.}\, that it have nonnegative entries in $[1/\sqrt{d}, 1]$).
 
Figure \ref{fig:sampleW} indicates
that under the experimental conditions given above, the SVD procedure has worse performance as $\bw$ becomes flatter, while as expected
our debiased procedure has improved performance as $\bw$ becomes flatter.  The debiased projection algorithm out-performed the standard projection in all cases.

\subsubsection{Dependence on spectral gap}
The following experiment demonstrates how the performance of the debiased projection algorithm depends on the spectral gap of $\ones_\Omega$. 

To construct sampling patterns of various spectral gaps, we consider graph products on $k = 50$ vertices.
For regularities $\rho \in [2, 20]$, we construct two graphs $G_\rho, \tilde G_\rho$ on $k$ vertices that are $\rho$-regular.

The graph $G_\rho$ is constructed such that each vertex $v \in \{0, \dots, k-1\}$ is adjacent to the vertices $v' \equiv v + t \mod k$,
$t = 0, 1, \dots, \rho$.  Notice that $G_\rho$ is the same graph as is considered in Example~\ref{ex:fatcycle}, and the spectral gap is quite small. 
The second graph will be a random $\rho$-regular graph, so the spectral graph is with high probability large.  More precisely,
for $T =15$ and trials $t = 1, \dots, T$, the graph $\tilde G^{(t)}_\rho$ is constructed as a random $\rho$-regular graph \cite{bollobas2001random}.

In order to consider graphs with a range of $\lambda$, we consider three distributions on graphs: $G_\rho \otimes G_\rho$ (which has a small spectral gap), $G_\rho \otimes \tilde{G}^{(t)}_\rho$ (which has an intermediate spectral gap) and $\tilde{G}^{(t)}_\rho \otimes \tilde{G}^{(t)}_\rho$ (which has a large spectral gap).  

For $t = 1, \ldots, T$, we let $\Omega_t$ be the sampling pattern induced by each of these three graphs, and we draw observations $\bY_{n, t} = \mathbf{1}_{\Omega_t} \had (\bX_{n, t} + \bZ_{n, t})$ as in the previous set of experiments.
As before, we
 carry out both our debiased projection and truncated SVD on the $\bY_t$. In these experiments, we take the rank of the data to be $r = 10$. 
The results are shown in Figure~\ref{fig:spectralgap}.
\begin{figure}[h!t]
\centering
\begin{overpic}[width=0.55\linewidth]{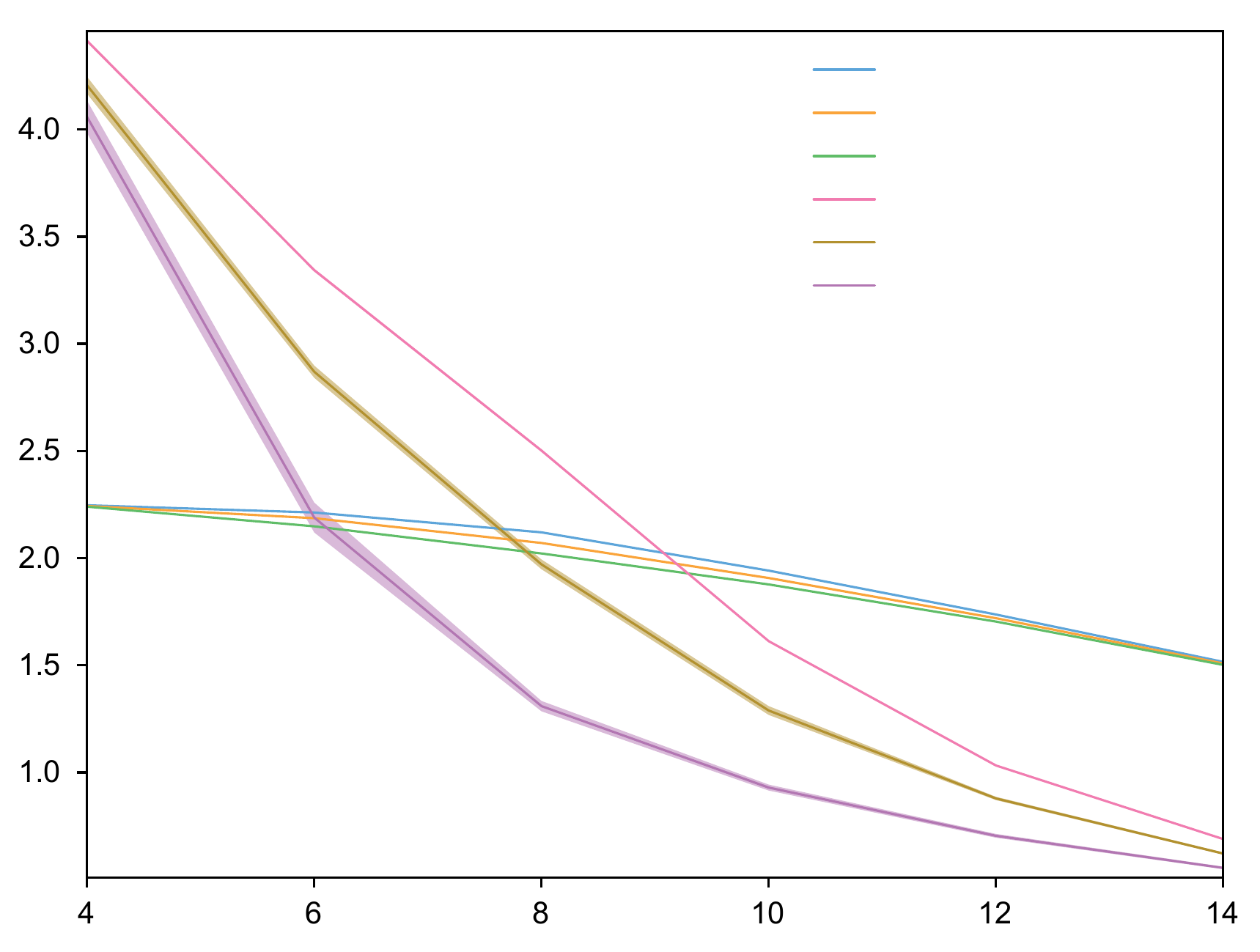}
\put(45, -3){\scalebox{0.75}{regularity, $\rho$}}
\put(-4, 28){\scalebox{0.75}{\rotatebox{90}{{weighted error}}}}
\put(72, 69.5){\scalebox{0.55}{standard proj., $G_\rho \otimes G_\rho$}}
\put(72, 66){\scalebox{0.55}{standard proj., $G_\rho \otimes \tilde G_\rho$}}
\put(72, 62.5){\scalebox{0.55}{standard proj., $\tilde G_\rho \otimes \tilde G_\rho$}}
\put(72, 59){\scalebox{0.55}{debiased proj., $G_\rho \otimes G_\rho$}}
\put(72, 55.5){\scalebox{0.55}{debiased proj., $G_\rho \otimes \tilde G_\rho$}}
\put(72, 52){\scalebox{0.55}{debiased proj., $\tilde G_\rho \otimes \tilde G_\rho$}}
\end{overpic}
\vspace{1mm}
\caption{Weighted error for sampling patterns taken from the adjacency matrices of graph products of
low- and high-spectral gap, regular graphs. Shading indicates a single standard deviation over the draws $\tilde G_\rho$}
\label{fig:spectralgap}
\end{figure}
Figure~\ref{fig:spectralgap} shows that, as expected, the debiased projection algorithm performs better when the spectral gap is smaller.  This is also true of standard projection, although the effect is less pronounced.  As $\rho$ increases (that is, as $\Omega$ becomes denser), the debiased projection algorithm  out-performs the standard projection algorithm.

\section*{Acknowledgements}
We would like to thank the AIM SQuaRE program, where this work was begun.  SF is partially supported by NSF grants DMS-1622134 and DMS-1664803.  DN was supported by NSF CAREER DMS-1348721 and NSF BIGDATA  1740325.   RP is partially supported by a Stanford UAR Major Grant. YP is partially supported by an NSERC Discovery Grant (22R23068), a PIMS CRG 33: High-Dimensional Data Analysis, and a Tier II CRC in Data Science. MW is partially supported by NSF CAREER award CCF-1844628. We would also like to thank Rachel Ward for helpful conversations and for pointing out the potential application in Corollary~\ref{cor:leveragescores}.

\bibliographystyle{plain}
\bibliography{refs}
\end{document}